\documentclass{article}

\usepackage[margin = 1in]{geometry}
\usepackage{setspace}

\usepackage{waingarten}
\usepackage{thmtools}
\usepackage{thm-restate}
\usepackage{environ}

\providecommand{\email}[1]{\href{mailto:#1}{\nolinkurl{#1}}}

\usepackage{caption}
\captionsetup{width=0.9\textwidth, labelfont=bf, parskip=5pt}

\usepackage[british]{babel}
\usepackage[applemac]{inputenc}
\usepackage{amsmath}
\usepackage{mathtools}
\usepackage{amssymb}
\usepackage{amsthm}
\usepackage{mathrsfs}
\usepackage{xcolor}
\usepackage{enumitem}
\usepackage{geometry}
\usepackage{hyperref}
\usepackage[nameinlink]{cleveref}
\usepackage{framed}
\usepackage{bm}
\usepackage{bbm}
\usepackage[square,sort,comma,numbers]{natbib}
\setlength{\bibsep}{0pt plus 0.2ex}
\usepackage{caption}

\captionsetup{width=0.8\textwidth, labelfont=bf, parskip=5pt}

\setlength{\parskip}{\medskipamount}
\setlength{\parindent}{0pt}

\addtolength{\intextsep}{6pt} 
\addtolength{\abovecaptionskip}{10pt}
\addtolength{\belowcaptionskip}{-5pt}
\captionsetup{width=0.8\textwidth, labelfont=bf, parskip=5pt}

\setstretch{1.2}

\usepackage{xpatch}
\xpatchcmd{\proof}{\itshape}{\normalfont\proofnamefont}{}{}
\newcommand{\proofnamefont}{}
\renewcommand{\proofnamefont}{\bfseries}

\newcommand\rest[2]{{
	\left.\kern-\nulldelimiterspace 
  	#1 
  	\vphantom{|} 
  	\right|_{#2} 
}}

\newcommand{\comment}[1]{\begin{framed} #1 \end{framed}}

\newcommand{\floor}[1]{\lfloor #1 \rfloor}
\newcommand{\ceil}[1]{\lceil #1 \rceil}

\newcommand{\FindMon}{\texttt{Find-Monotone}}
\newcommand{\SampleSuffix}{\texttt{Sample-Suffix}}
\newcommand{\FindGoodSplit}{\texttt{Find-Good-Split}}
\newcommand{\FindWithinInterval}{\texttt{Find-Within-Interval}}

\title{Optimal Adaptive Detection of Monotone Patterns}
\author{Omri Ben-Eliezer\thanks{Tel-Aviv University, email: \email{omrib@mail.tau.ac.il}} 
\and 	Shoham Letzter\thanks{ETH Institute for Theoretical Studies, ETH Zurich, email: \email{shoham.letzter@eth-its.ethz.ch}}
\and 	Erik Waingarten\thanks{Columbia University, email: \email{eaw@cs.columbia.edu}}
}
\date{}

\begin{document}
\begin{titlepage}
\clearpage
\maketitle
\thispagestyle{empty}
\begin{abstract}
	\setlength{\parskip}{\medskipamount}
	\setlength{\parindent}{0pt}
We investigate adaptive sublinear algorithms for detecting monotone patterns in an array. Given fixed $2 \leq k \in \N $ and $\eps > 0$, consider the problem of finding a length-$k$ increasing subsequence in an array $f \colon [n] \to \mathbb{R}$, provided that $f$ is $\eps$-far from free of such subsequences. Recently, it was shown that the non-adaptive query complexity of the above task is $\Theta((\log n)^{\lfloor \log_2 k \rfloor})$. In this work, we break the non-adaptive lower bound, presenting an adaptive algorithm for this problem which makes $O(\log n)$ queries. This is optimal, matching the classical $\Omega(\log n)$ adaptive lower bound by Fischer [2004] for monotonicity testing (which corresponds to the case $k=2$), and implying in particular that the query complexity of testing whether the longest increasing subsequence (LIS) has constant length is $\Theta(\log n)$.
\end{abstract}
\end{titlepage}


%
%

\section{Introduction}


For an integer $k \in \N$ and a function (or sequence) $f \colon [n] \to \R$, a \emph{length-$k$ monotone subsequence of $f$} is a tuple of $k$ indices, $(i_1, \dots, i_k) \in [n]^k$, such that $i_1 < \dots < i_k$ and $f(i_1) < \dots < f(i_k)$. More generally, for a permutation $\pi \colon [k] \to [k]$, a \emph{$\pi$-pattern of $f$} is given by a tuple of $k$ indices $i_1 < \dots < i_k$ such that $f(i_{j_1}) < f(i_{j_2})$ whenever $j_1, j_2 \in [k]$ satisfy $\pi(j_1) < \pi(j_2)$. A sequence $f$ is $\pi$-free if there are no subsequences of $f$ with order pattern $\pi$. Pattern avoidance and detection in an array is a central problem in theoretical computer science and combinatorics, dating back to the work of Knuth \cite{K68} (from a computer science perspective), and Simion and Schmidt \cite{SS85} (from a combinatorics perspective); see also the survey \cite{V15b}. Studying the computational problem from a \emph{sublinear} algorithms perspective, Newman, Rabinovich, Rajendraprasad, and Sohler~\cite{NRRS17} initiated the study of property testing for forbidden order patterns in a sequence. For a fixed $k \in \N$ and a pattern $\pi$ of length $k$, we want to test whether a function $f\colon [n] \to \R$ is $\pi$-free or $\eps$-far from $\pi$-free.\footnote{A function $f$ is $\eps$-far from $\pi$-free if any $\pi$-free function $g$ differs on a $\eps n$ inputs.} They explicitly considered the monotone case as a particularly interesting instance;
monotone patterns are naturally connected to monotonicity testing and the longest increasing subsequence 
and can shine new light on these classic problems. Note that being free of length-$k$ monotone increasing subsequences is equivalent, as a simple special case of Dilworth's theorem \cite{Dilworth1950}, 
to being \emph{decomposable} into $k-1$ monotone non-increasing subsequences.
The algorithmic task, which is the subject of this paper, is the following:
\begin{quote}\itshape 
	For $2 \leq k \in \N$ and $\eps > 0$, design a randomized algorithm that, given query access to a function $f \colon [n] \to \R$, distinguishes with probability at least $9/10$ between the case that $f$ is free of length-$k$ monotone subsequences and the case that it is $\eps$-far from free of length-$k$ monotone subsequences.
\end{quote}
This paper gives an algorithm with optimal dependence in $n$ for solving the above problem. We state the main theorem next, and discuss connections to monotonicity testing and LIS shortly after.

\begin{theorem}\label{thm:intro-ub}
	Fix $k \in \N$. For any $\eps > 0$, there exists an algorithm that, given query access to a function $f \colon [n] \to \R$ which is $\eps$-far from $(12\dots k)$-free, outputs a length-$k$ monotone subsequence of $f$ with probability at least $9/10$, with query complexity and running time\footnote{Generally, along the context of the introduction, we allow the $\poly(1/\eps)$ term to depend on $k$; the precise bound we obtain here is $\left((k\log(1/\eps))^k (1/\eps)\right)^{O(k)} \cdot \log n$. See Lemma \ref{lem:alg-complexity} for more details.} of $\poly(1/\eps) \cdot \log n$.
\end{theorem}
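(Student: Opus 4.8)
The plan is to separate the problem into a purely combinatorial \emph{structural} statement about functions that are $\eps$-far from $(12\dots k)$-free, and a recursive \emph{adaptive search} procedure, running both inductions on $k$ in parallel. The base case $k=2$ is the classical $O(\log n)$-query monotonicity tester: sample a uniformly random $i\in[n]$ and run a binary search for the value $f(i)$; a standard argument shows that if $f$ is $\eps$-far from being free of length-$2$ monotone subsequences then, with probability $\Omega(\eps)$ over $i$, the search exposes an increasing pair $j<i$ (or $i<j$), so $O(1/\eps)$ repetitions suffice at total cost $O(\eps^{-1}\log n)$, while if $f$ is free of such subsequences no increasing pair is ever found. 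For the inductive step I would assemble the algorithm from three pieces matching the procedure names above: \SampleSuffix, which draws a uniformly random index together with a uniformly random dyadic scale; \FindMon, the recursion on the pattern length; and \FindWithinInterval\ (with \FindGoodSplit), an adaptive binary search that, given a target value $v$ and an interval $J$, returns an element of $J$ whose value lies just below $v$ and which is certified --- by the recursion itself --- to start a shorter monotone pattern.

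The combinatorial heart, and the step I expect to be the main obstacle, is a structural dichotomy of roughly the following shape. Using the standard fact that $f$ being $\eps$-far from $(12\dots k)$-free yields a family of $\Omega_k(\eps n)$ pairwise index-disjoint length-$k$ monotone subsequences, one shows that at least one of two situations holds with density $\Omega(\poly(\eps))$. In the \emph{growing-suffix} case there is a set $T\subseteq[n]$ of size $\Omega(\poly(\eps)\cdot n)$ such that for every $i\in T$ an $\Omega(\poly(\eps))$-fraction of dyadic scales $t$ are ``good'', meaning the suffix of length $\approx 2^t$ ending just before $i$, restricted to entries of value less than $f(i)$, is itself $\Omega(\poly(\eps))$-far from $(12\dots(k-1))$-free; this is exactly the hypothesis under which \SampleSuffix\ followed by the inductive call of \FindMon\ succeeds. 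In the \emph{splittable} case many of the disjoint copies break, at a common dyadic scale, into a length-$j$ prefix confined to one block and a length-$(k-j)$ suffix confined to the next, both robustly far from the corresponding shorter monotone pattern; here one coordinates two recursive calls --- on $(12\dots j)$ and on $(12\dots(k-j))$ --- through a single binary search for the correct block. The real work is to establish the dichotomy with the right quantifiers: one classifies each of the disjoint copies by where its ``middle split'' falls relative to the dyadic scales anchored at its last index, and an averaging argument over this classification must land in one of the two cases. The delicate point is ensuring that the sub-instances passed down the recursion are again $\eps'$-far for a fixed function $\eps'$ of $\eps$ that never decays with $n$, so that composing the guarantee over the $O(k)$ levels of the recursion costs only a bounded product of $\poly(\eps)$ factors.

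Granting the lemma, correctness and complexity follow. A single run --- one call to \SampleSuffix\ followed by the $O(k)$ recursive calls of \FindMon, each performing a \FindWithinInterval/\FindGoodSplit\ binary search inside an interval of length at most $n$ --- contributes $O(k\log n)$ queries from the binary searches themselves, plus lower-order terms depending only on $\eps$ and $k$. Soundness is immediate: every index the search commits to is queried and checked against the running target, so the algorithm never outputs a tuple that fails to be monotone; it can only fail to output one at all. For the success probability, the density bounds in the structural lemma imply that a single run outputs a length-$k$ monotone subsequence with probability $\Omega(\poly(\eps))$ --- a product of $O(k)$ factors, each $\Omega(\poly(\eps))$ --- so $\poly(1/\eps)$ independent repetitions push the overall success probability above $9/10$. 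Tracking the $\eps$- and $k$-dependence through the two interleaved inductions then gives the query and running-time bound $\left((k\log(1/\eps))^k(1/\eps)\right)^{O(k)}\cdot\log n$ recorded in the footnote.
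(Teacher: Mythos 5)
Your skeleton (structural dichotomy plus a depth-$k$ recursion) matches the paper's, but the proposal is missing the central difficulty and the central idea that the paper is built around. In the splittable case, after you hit a $1$-entry $\bx$ of some splittable interval $I_h$ and probe all dyadic scales $t$ for an element $\by_t$ with $f(\by_t)>f(\bx)$, there is no monotone predicate to binary-search on: elements above $f(\bx)$ can appear at scales far \emph{larger} than $\log|I_h|$ (false positives), and the algorithm has no way to certify which scale is the right one, since the positions and lengths of the splittable intervals are unknown. Your ``single binary search for the correct block'' therefore has nothing valid to search on; this is exactly why the paper says the (non-robust) splittable condition ``does not seem strong enough'' for adaptive use. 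The paper's fix has two parts that your proposal lacks: (1) a strengthened \emph{robust} splittable-intervals theorem (Theorem~\ref{thm:main-structure-2}, via Lemma~\ref{lem:intervals}), guaranteeing that \emph{every} interval $J$ containing some $I_h$ is itself $\Omega_{k,\eps}(1)$-far from $(12\dots k)$-free; and (2) the overshooting subroutine $\FindWithinInterval$, which, when the chosen scale $t^*$ greatly exceeds the true one, places $k-2$ geometrically growing intervals $J_1,\dots,J_{k-2}$ between $\bx$ and $\by$, each dense in copies by robustness, and runs a pigeonhole argument on prefixes $\calA_i$ versus suffixes $\calB_i$ (split by the value $f(\by)$) to show that some combination of recursive outputs must concatenate into a full $(12\dots k)$-pattern. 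Without these, the recursion as you describe it stalls precisely in the overshooting case.

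Two smaller points. Your growing-suffix case is formulated recursively (suffixes that are far from $(12\dots(k-1))$-free), whereas the paper's Definition~\ref{def:growing-suffixes} is a non-recursive condition on value-increasing dense sets across scales, handled in one non-adaptive shot by $\SampleSuffix$; your version is not obviously wrong but would need its own analysis and does not match the cited subroutine. Also, the fitting case requires more than ``coordinating two recursive calls'': one must sample a split point $\bz$ in the middle third and a pivot value $f(\bw)$ between suitable quantiles of the $c$-th entries (as in $\FindGoodSplit$) so that the two recursive calls, run on value-masked restrictions of $f$, are each guaranteed dense sub-instances with a fixed $\eps'=\eps'(\eps,k)$; your proposal gestures at this but omits the quantile/masking mechanism that makes the outputs concatenable.
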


The algorithm underlying Theorem~\ref{thm:intro-ub} is \emph{adaptive}\footnote{An algorithm is \emph{non-adaptive} if its queries do not depend on the answers to previous queries, or, equivalently, if all queries to the function can be made in parallel. Otherwise, if the queries of an algorithm may depend on the outputs of previous queries, then the algorithm is \emph{adaptive}.} and solves the testing problem with \emph{one-sided error},\footnote{An algorithm for testing property $\calP$ has \emph{one-sided error} if it has perfect completeness, i.e., it always outputs ``yes'' if $f \in \calP$; otherwise, the algorithm is said to have two-sided error.} since a length-$k$ monotone subsequence is evidence for not being $(12\dots k)$-free. The algorithm improves on a recent result of Ben-Eliezer, Canonne, Letzter and Waingarten \cite{BCLW19} who gave an algorithm for finding length-$k$ monotone patterns with query complexity $\poly(1/\eps) \cdot (\log n)^{\lfloor \log_2 k \rfloor}$, which in itself improved upon a $\poly(1/\eps) \cdot (\log n)^{O(k^2)}$ upper bound by Newman et al.~\cite{NRRS17}. The focus of \cite{BCLW19} was on \emph{non-adaptive} algorithms, and they gave a lower bound of $\Omega\!\left((\log n)^{\lfloor \log_2 k \rfloor}\right)$ queries for non-adaptive algorithms achieving one-sided error. Hence, Theorem~\ref{thm:intro-ub} implies a natural separation between the power of adaptive and non-adaptive algorithms for finding monotone subsequences.

Theorem~\ref{thm:intro-ub} is optimal, even among two-sided error algorithms. In the case $k=2$, corresponding to monotonicity testing, there is a $\Omega(\log n / \eps)$ lower bound\footnote{The precise lower bound is of the form $\Omega(\log (\eps n) / \eps)$, and is equivalent to the aforementioned one as long as, say, $\epsilon > n^{-0.99}$.} for both non-adaptive and adaptive algorithms \cite{EKKRV00, F04, CS14}, even with two-sided error. A simple reduction suggested in \cite{NRRS17} shows that the same lower bound (up to a multiplicative factor depending on $k$) holds for any fixed $k \geq 2$.
Thus, an appealing consequence of Theorem~\ref{thm:intro-ub} is that the natural generalization of monotonicity testing, which considers forbidden monotone patterns of fixed length longer than 2, does not affect the query complexity by more than a constant factor. Interestingly, Fischer \cite{F04} shows that for any adaptive algorithm for monotonicity testing on $f\colon [n] \to \R$ there is a non-adaptive algorithm at least as good in terms of query complexity (even if we only restrict ourselves to one-sided error algorithms). That is, adaptivity does not help at all for $k=2$. In contrast, the separation between our $O(\log n)$ adaptive upper bound and the $\Omega\!\left((\log n)^{\lfloor \log_2 k \rfloor}\right)$ non-adaptive lower bound of \cite{BCLW19} implies that this is no longer true for  $k \ge 4$.

As an immediate consequence, Theorem \ref{thm:intro-ub} gives an optimal testing algorithm for the longest increasing subsequence (LIS) problem in a certain regime.
The classical LIS problem asks one to determine, given a sequence $f \colon [n] \to \R$, what is the maximum $k$ for which $f$ contains a length-$k$ increasing subsequence. It is very closely related to other fundamental algorithmic problems in sequences, like the computation of edit distance, Ulam distance, or distance from monotonicity (for example, the latter equals $n$ minus the LIS length), and was thoroughly investigated from the perspective of sublinear-time algorithms \cite{PRR06, ACCL07, SS17, RSSS19} and streaming algorithms \cite{GJKK07,SW2007,GG10,SS13,EJ15,NSaks15}. In the property testing regime, the corresponding decision task is to distinguish between the case where $f$ has LIS length at most $k$ (where $k$ is given as part of the input) and the case that $f$ is $\eps$-far from having such a LIS length. Theorem \ref{thm:intro-ub} in combination with the aforementioned lower bounds (which readily carry on to this setting) yield a tight bound on the query complexity of testing whether the LIS length is a constant.
\begin{corollary}
	Fix $2 \leq k \in \N$ and $\eps > 0$. The query complexity of testing whether $f \colon [n] \to \R$ has LIS length at most $k$ is $\Theta(\log n)$.
\end{corollary}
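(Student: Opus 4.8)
The plan is to observe that the property ``$f$ has LIS length at most $k$'' is \emph{literally the same} property as ``$f$ is $(12\dots(k+1))$-free'': a function contains a length-$(k+1)$ increasing subsequence if and only if its longest increasing subsequence has length at least $k+1$. Hence the two properties define the same set of functions, their distance functions coincide, and ``$\eps$-far from having LIS length $\le k$'' means exactly ``$\eps$-far from $(12\dots(k+1))$-free''. So the decision task in the corollary is precisely the task addressed by \Cref{thm:intro-ub} with the parameter $k+1$ in place of $k$.

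For the \textbf{upper bound} I would simply invoke \Cref{thm:intro-ub}. Given $f$ that is $\eps$-far from having LIS length $\le k$ (equivalently, $\eps$-far from $(12\dots(k+1))$-free), run the algorithm of that theorem with parameter $k+1$: when $f$ does have LIS length $\le k$ it is $(12\dots(k+1))$-free and the one-sided algorithm never reports a pattern, while when $f$ is $\eps$-far it outputs a length-$(k+1)$ monotone subsequence, i.e.\ an explicit witness that the LIS length exceeds $k$, with probability at least $9/10$. Accepting iff no such witness is found yields a valid tester, and its query complexity is $\poly(1/\eps)\cdot\log n$, which for the fixed constants $k$ and $\eps$ of the corollary is $O(\log n)$.

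For the \textbf{lower bound} I would first isolate the base case of monotonicity testing: testing whether $f\colon[n]\to\R$ is non-increasing (equivalently, after reflecting the range, non-decreasing) is exactly testing whether its LIS length is at most $1$, and this has query complexity $\Omega(\log n/\eps)$ even for adaptive, two-sided-error algorithms \cite{EKKRV00, F04, CS14}. To lift this to general $k$, I would use the reduction suggested by Newman et al.~\cite{NRRS17}: from an instance $f\colon[m]\to\R$ of $(12)$-freeness testing one builds $h\colon[n]\to\R$ with $n=\Theta_k(m)$ by concatenating $k-1$ strictly decreasing ``spine'' blocks placed on disjoint value ranges that increase with the block index, followed by a copy of $f$ placed on the topmost value range, so that $\mathrm{LIS}(h)=(k-1)+\mathrm{LIS}(f)$ (an increasing subsequence can take at most one element from each decreasing spine, and the spine value/index ranges are aligned so it may take one from each). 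Then $\mathrm{LIS}(h)\le k \iff \mathrm{LIS}(f)\le 1$, and $f$ being $\eps$-far from non-increasing forces $h$ to be $\Omega_k(\eps)$-far from having LIS length $\le k$; since $n=\Theta_k(m)$, any $q$-query tester for the latter gives a $q$-query tester for monotonicity, so $q=\Omega(\log n)$ for every fixed $k\ge 2$ and $\eps>0$. Combining with the upper bound gives $\Theta(\log n)$, with the hidden constants depending on $k$ and $\eps$.

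The step that requires the most care is the distance-preservation in this last reduction: one must rule out that $h$ can be cheaply corrected to have LIS length $\le k$ by damaging the spine blocks so as to ``free up'' extra increasing length for the embedded copy of $f$. This is handled by making each spine robustly decreasing (or, in a variant, by duplicating spine elements) so that any correction of $h$ of cost $o_k(\eps n)$ restricts to a correction of $f$ to a non-increasing sequence of cost $o(\eps m)$, contradicting $\eps$-farness of $f$. Everything else --- the identification of the two properties and the appeal to \Cref{thm:intro-ub} --- is immediate.
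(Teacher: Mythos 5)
Your proposal is correct and follows essentially the same route as the paper: the paper likewise treats ``LIS length at most $k$'' as a freeness property, cites Theorem~\ref{thm:intro-ub} for the upper bound, and appeals to the $\Omega(\log n/\eps)$ adaptive monotonicity-testing lower bound together with the reduction of Newman et al.~\cite{NRRS17} for the matching lower bound. The only difference is that you spell out the spine-block reduction and its distance preservation, which the paper leaves entirely to the citation.
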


\subsection{Related Work}
\label{subsec:related}
	Considering general permutations $\pi$ of length $k$ and \emph{exact} computation, Guillemot and Marx \cite{GM14} showed how to find a $\pi$-pattern in a sequence $f$ in time $2^{O(k^2 \log k)} n$, later improved by Fox \cite{Fox13} to $2^{O(k^2)} n$. In the regime $k = \Omega(\log n)$, an algorithm of Berendsohn, Kozma, and Marx \cite{BKM19} provides the state-of-the-art. 

	For \emph{approximate} computation of general patterns $\pi$, the works of \cite{NRRS17, BC18} investigate the query complexity of property testing for forbidden order patterns. When $\pi$ is of length $2$, the problem considered is equivalent to testing monotonicity, one of the most widely-studied problems in property testing, with works spanning the past two decades. Over the years, variants of monotonicity testing over various partially ordered sets have been considered, including 
	the line $[n]$ \cite{EKKRV00,F04,Bel18,PRV18,BE19}, the Boolean hypercube $\{0,1\}^d$~\cite{DGLRRS99,BBM12,BCGM12,CS13,CST14,CDST15,KMS15,BB15,CS16,CWX17, CS18}, and the hypergrid $[n]^d$~\cite{BRY14a,CS14, BCS18}. 
	We refer the reader to~\cite[Chapter~4]{G17} for more on monotonicity testing, and a general overview of the field of property testing (introduced in~\cite{RS96,GGR98}). 

	Understanding the power of adaptivity seems to be a notoriously difficult problem in property testing. In the context of testing for forbidden order patterns, non-adaptive algorithms are rather weak: the non-adaptive query complexity is $\Omega(n^{1/2})$ for all non-monotone order patterns \cite{NRRS17}, and as high as $n^{1 - 1/(k-\Theta(1))}$ for most order patterns of length $k$ \cite{BC18}. 
	Prior to our work (which shows separation between adaptive and non-adaptive algorithms for monotone patterns), the only case for which adaptive algorithms were known to outperform their non-adaptive counterparts have been for patterns of length $3$ in \cite{NRRS17}, and an intriguing conjecture from the same paper suggests that in fact, the query complexity for testing $\pi$-freeness is polylogarithmic in $n$ for \emph{any} fixed-length $\pi$ -- depicting an exponential separation from the non-adaptive case (for non-monotone patterns).




\subsection{Main Ideas and Techniques}
\label{subsec:techniques}

	We now describe the intuition behind the proof of Theorem~\ref{thm:intro-ub}. There are two main technical components: 1) a new structural result for functions $f\colon [n] \to \R$ with many length-$k$ monotone subsequences which strengthens a theorem of \cite{BCLW19}, and 2) new (adaptive) algorithmic components which lead to the $O(\log n)$-query algorithm. 
	We start by explaining the $(\log n)^{O(k^2)}$ upper bound of Newman et al.~\cite{NRRS17} and the structural decomposition of \cite{BCLW19}. 
	
	Fix $k \in \N$ and $\varepsilon > 0$, and suppose that $f \colon [n] \to \R$ is $\eps$-far from $(12\dots k)$-free, that is, $\eps$-far from free of length-$k$ increasing subsequences. Notice that $f$ must contain a collection $\calC$ of at least $\eps n / k$ pairwise-disjoint increasing subsequences of length $k$.\footnote{Otherwise, greedily eliminating these subsequences gives a $(12\dots k)$-free function differing in strictly less than $\eps n$ inputs.}
	For simplicity, consider $k=2$ first (which corresponds to the classical problem of monotonicity testing).
	For any $x < \ell < y \in [n]$, we say that $\ell$ \emph{cuts the pair $(x,y)$ with slack} if $x + (y-x)/3 \leq \ell \leq y - (y-x)/3$, or, informally, if $\ell$ lies roughly ``in the middle'' of $x$ and $y$. Additionally, the \emph{width} of the pair $(x,y)$ is $\floor{\log(y-x)}$. 
	Define the collection of copies from $\calC$ of width $w$ around $\ell$ by
	$$\calC_{\ell, w} = \{(x,y) \in \calC : \text{width}(x,y)=w,\  \text{$\ell$ cuts $(x,y)$ with slack}\}.$$
	Finally, the \emph{density} of copies from $\calC$ of width $w$ around $\ell$, and the total density of $\calC$ around $\ell$, are defined by
	$$\tau_\calC(\ell, w) = \frac{1}{2^w} \cdot |\calC_{\ell, w}| \hspace{0.2cm} ; \hspace{0.5cm} \tau_{\calC}(\ell) = \sum_{w=1}^{\log n} \tau_\calC(\ell, w).$$

	\paragraph{A polylogarithmic-query algorithm.}
		Fix a location $\ell \in [n]$ and a width $w \in [\log n]$, and consider drawing $\Theta(1 / \tau_{\calC}(\ell,w))$ indices from the interval $[\ell-2^w, \ell+2^w]$ uniformly at random, querying $f$ in all of these locations. Letting $m$ be the median of the set $\{f(x) : (x,y) \in \calC_{\ell, w}\}$, if we manage to query the ``1-entry'' $x$ of some $(x,y) \in \calC_{\ell, w}$ where $f(x) \leq m$, and the ``2-entry'' $y'$ of some $(x',y') \in \calC_{\ell, w}$ where $f(x') \geq m$, then $(x,y')$ would form a valid $(12)$-pattern, since $x < \ell < y'$ and $f(x) \leq m \leq f(x') \leq f(y')$. By definition, the number of entries $x$ as well as the number of entries $y'$ within $[\ell - 2^w, \ell+2^w]$ which may be sampled is at least $\Omega\left(\tau_{\calC}(\ell, w) \cdot 2^w\right)$. Therefore, with good probability, $\Theta(1 / \tau_{\calC}(\ell,w))$ uniform queries from the interval will hit at least one such $x$ and one $y'$, which would form the desired $(12)$-pattern.

		We claim that many values of $\ell \in [n]$ have some width $w \in [\log n]$ where the density $\tau_{\calC}(\ell, w)$ is large. First, a simple double counting argument shows $\mathbb{E}_{\ell \in [n]}[\tau_{\calC}(\ell)] = \Omega(\eps)$. On the other hand, $\tau_{\calC}(\ell, w) \leq O(1)$ for any width $w \in [\log n]$, and so $\tau_{\calC}(\ell) = O(\log n)$. Consequently, the probability that a random $\ell \in [n]$ satisfies $\tau_{\calC}(\ell) = \Omega(\eps)$ is $\Omega(\eps/\log n)$. It suffices to pick $\Theta(\log n / \eps)$ uniformly random $\ell \in [n]$ in order for one to satisfy $\tau_{\calC}(\ell) = \Omega(\eps)$ with high probability; and, if this event holds, then there exists $w \in [\log n]$ for which $\tau_\calC(\ell, w) = \Omega(\eps/\log n)$. We now leverage the querying paradigm described in the previous paragraph:
		if for any $\ell \in [n]$ as above and any $w \in [\log n]$ we query $\approx \log n / \eps$ uniform locations in $[\ell-2^w, \ell+2^w]$, then we shall find a $(12)$-pattern with good probability. In total, this procedure makes $O(\log^3 n / \eps^2)$ non-adaptive queries.

		To deal with general fixed $k \geq 2$ and $\eps > 0$, the (essentially) same reasoning is applied recursively, leading to the $(\log n)^{O(k^2)}$-query algorithm of \cite{NRRS17}. 

	\paragraph{Structural decomposition.}
		\cite{BCLW19} established a structural theorem for functions $f\colon [n] \to \R$ that are $\eps$-far from $(12\dots k)$-free, which led to improved non-adaptive algorithms. Specifically, they show that any $f$ which is $\eps$-far from $(12\dots k)$-free satisfies at least one of two conditions: either $f$ contains many \emph{growing suffixes}, or it can be decomposed into \emph{splittable intervals}. For the purpose of this discussion, let $\calC$ be any collection of $\Theta_{k, \eps}(n)$ disjoint $(12\dots k)$-copies in $f$.\footnote{To simplify the discussion, in the rest of this exposition we will generally not be interested in the exact dependence on the parameters $k$ and $\eps$, and for convenience we often use notions like $O_{k,\eps}(\cdot)$ and $\Omega_{k,\eps}(\cdot)$ that hide this dependence.}

		\begin{itemize}
			\item \textbf{Growing suffixes:} 
				there exist $\Omega_{k,\eps}(n)$ values of $\ell \in [n]$ where\footnote{We have previously defined the notions of cutting with slack and density only for the case $k = 2$, but they generalize rather naturally to any $k$. First, define the \emph{gap index} of a $(12 \dots k)$-pattern in $f$ in locations $x_1 < \ldots < x_k \in [n]$ as the smallest integer $c \in [k-1]$ maximizing $x_{c+1} - x_c$; the above copy is cut by $\ell$ with slack if $x_i + (x_{c+1}-x_c)/3 \leq \ell \leq x_{c+1} - (x_{c+1}-x_c)/3$. The gap-width of the copy is $\log(x_{c+1} - x_c)$. The definitions of $\calC_{\ell, w}$, $\tau_{\calC}(\ell, w)$, $\tau_{\calC}(\ell)$ can then be generalized in a straightforward manner, replacing ``width'' with ``gap width'' wherever relevant.} $\tau_{\calC}(\ell) \geq \Theta_k(\eps)$ and $\tau_{\calC}(\ell, w) \le \Theta(\tau_{\calC}(\ell) / k)$ for every $w \in [\log n]$. In other words, many $\ell \in [n]$ have that the sum of local densities, $\tau_{\calC}(\ell)$ of $(12\dots k)$-patterns in intervals of growing widths is not too small, and furthermore, the densities are not concentrated on any small set of widths $w$. Any such $\ell$ is said to be the starting point of a growing suffix.
			\item \textbf{Splittable intervals:} 
				there exist $c \in [k-1]$ and a collection of pairwise-disjoint intervals $I_1, \ldots, I_s \subset [n]$ with $\sum_{i=1}^{s} |I_i| = \Theta_{k, \eps}(n)$, so that each $I_i$ contains a dense collection of disjoint $(12 \dots k)$-patterns of a particular structure. Specifically, each such interval $I_i$ can be partitioned into three disjoint intervals $L_i, M_i, R_i$ (in this order), each of size $\Omega_k(|I_i|)$, where $I_i$ fully contains $\Omega_{k, \eps}(|I_i|)$ disjoint copies of $(12\dots k)$-patterns, in which the first $c$ entries lie in $L_i$, and the last $k-c$ entries lie in $R_i$ (none of these entries lies in $M_i$).
%
		\end{itemize}

	\ignore{\color{red}

		The high-level idea is as follows. First, a greedy rematching procedure from \cite{BCLW19} shows that any $f$ that is $\eps$-far from $(12\dots k)$-free must contain a collection $\calC$ of $(12\dots k)$-patterns with $|\calC| = \Theta_{k, \eps}(n)$, where all copies in $\calC$ have the same gap index $c \in [k-1]$, and $\calC$ also satisfies the requirement that if two copies in locations $x_1 < \ldots < x_k$ and $x'_1 < \ldots < x'_k$ satisfy $x_c < x'_c$ but $x_{c+1} > x'_{c+1}$, then $f(x_{c+1}) > f(x'_{c+1})$. Fix such a choice of $\calC$ for the rest of the discussion. 

		Now, what happens if some $\ell \in [n]$ is the start of the growing suffix, and for any $w \in [\log n]$ we query $f$ in $\Theta_{k, \eps}(1)$ uniformly random entries from the interval $[\ell, \ell+2^w]$? A simple quantitative analysis (which we omit here) based on the growing suffixes condition shows that with good probability, there will be a collection of $(12\dots k)$-copies 
		$x_1, \ldots, x_k$, where $x_i = (x_{i,1}, \ldots, x_{i,k}) \in \calC(\ell, w_i)$ and $w_{i+1} - w_i \ge 10$ (say) for every $i \in [k-1]$, such that the algorithm queries the $(c+1)$-entry -- $x_{i,c+1}$ -- of each $x_i$.
		Since $\ell$ cuts all the above copies with slack, it must hold that
		$$
			x_{k,c} < \ldots < x_{1,c} 
			< \ell 
			< x_{1,c+1} < \ldots < x_{k, c+1}.
		$$
		In view of our requirement on $\calC$, it immediately follows that $f(x_{k,c+1})  > \ldots > f(x_{1,c+1})$. In particular, we have detected a $(12\dots k)$-copy $(x_{1,c+1}, \ldots, c_{k,c+1})$, as desired.
		To summarize, if we pick $\Theta_{k,\eps}(1)$ uniformly random choices of $\ell \in [n]$, and apply the above querying procedure for each such $\ell$, a $(12 \dots k)$-copy will be detected with good probability assuming the growing suffixes condition holds. The total (non-adaptive) query complexity is $\Theta_{k,\eps}(\log n)$ as claimed.

		\comment{Shoham: I think this notation is cleaner, if not ideal}

		When the growing suffixes condition does not hold, \cite{BCLW19} recursively applies the splittable interval condition multiple times to get a non-adaptive algorithm making $O((\log n)^{\floor{\log_2 k}})$ queries. This cannot be improved, due to the matching $\Omega((\log n)^{\floor{\log_2 k}})$ lower bound proved there. Thus, in order to achieve a query complexity of $O_{k,\eps}(\log n)$ assuming the splittable intervals condition, our algorithm must be adaptive.}
	
		\cite{BCLW19} proceeds by devising an $O_{k,\eps}(\log n)$-query non-adaptive algorithm for the growing suffixes case, and an $O_{k,\eps}((\log n)^{\floor{\log_2 k}})$-query non-adaptive algorithm for the splittable intervals case. Thus, in order to obtain an $O_{k, \eps}(\log n)$-query \emph{adaptive} algorithm, it suffices to develop such an algorithm under the splittable intervals assumption.
	
	\paragraph{Robustifying the structural decomposition.} 

		\ignore{ \color{red}
			We now hope to devise an $O_{k, \eps}(\log n)$-queries adaptive algorithm for finding a $(12 \dots k)$-copy, assuming that $f$ satisfies the splittable intervals condition. Perhaps the most natural approach for this in view of the splittable intervals condition is as follows. (i) Approximate, in some way, the endpoints of some splittable interval $I_j$ and its left and right part $L_j, R_j$, while enumerating over the gap index $c \in [k-1]$; (ii) Make a recursive call searching for a $(12\dots c)$-copy $(x_1, \ldots, x_c)$ contained in the left part $L_i$ and a $(12 \dots k-c)$-copy $(y_1, \ldots, y_{k-c})$ in the right part $R_i$, with the hope that they will combine together to a $(12 \dots k)$-copy (for this to happen, we also need them to be compatible, in the sense that $f(x_c) < f(y_1)$). 

			In order to carry on with the approach suggested above, one can try to locate a ``$1$-entry'' lying in the left part $L_j$ and a compatible ``$(c+1)$-entry'' lying in the right part $R_j$ (here we henceforth we fix $c \in [k-1]$ and assume the gap of $I_j$ to equal $c$). Since in total $\Omega_{k,\eps}(n)$ entries $\ell \in [n]$ serve as the 1-entry lying in the left part $L_j$ of some interval $I_j$ where $j \in [s]$, it takes only a constant number of (non-adaptive) random queries from $[n]$ in order to hit one such $1$-entry $x \in [n]$.
			Suppose, then, that such a (sufficiently well-behaved) ``1-entry'' $x \in L_j$ was hit, and that the length of the containing splittable interval $|I_j|$ is unknown to us. The task now is to approximate this length, or (roughly) equivalently, to find $(c+1)$-entries compatible with $x$, which lie in the right part, $R_j$.

			Inspired by the previous approaches, we can try to uniformly sample elements to the right of $x$ at all possible scales, that is, to sample $O_{k, \eps}(1)$ such elements in $[x+2^{w-1}, x+2^w]$, for any possible $w \in [\log n]$.
			Among the queried elements, only those elements $y$ satisfying $f(y) > f(x)$ can serve as candidates to be $(c+1)$-entries in $R_j$, and it can be shown that if indeed $x$ is a (well-behaved) $1$-entry of some copy in $I_j$, then ``true positives'' -- well-behaved $(c+1)$-entries $y \in R_j$ satisfying $f(y) > f(x)$ -- will, indeed, be queried by this procedure. However, we might \emph{overshoot} and see many ``false positives'' among the queries: elements $y' \notin R_j$ that satisfy $f(y') > f(x)$, yet do not belong to the interval $I_j$, and in fact satisfy that $y'-x$ is much bigger than $|I_j|$. Without the ability to deal with overshooting, or with distinguishing true positives from false ones, it is unclear how to determine, or even approximate, the length of $I_j$, and we are seemingly stuck. }
		

		The splittable intervals condition, however, does not seem strong enough for our purposes: in order to utilize it, one would seemingly have to ``identify'', in some way, which parts of our sequence constitute splittable intervals, which is not clear how to do efficiently.	In order to bypass this issue, we substantially strengthen the structural theorem. The stronger statement asserts that any $f \colon [n] \to \R$ that is $\eps$-far from $(12 \dots k)$-free either satisfies the growing suffixes condition, defined previously, or a \emph{robust} version of the splittable intervals condition, defined as follows.
		\begin{itemize}
			\item \textbf{Robust splittable intervals:} 
				there exist $c \in [k-1]$ and a collection of pairwise-disjoint intervals $I_1, \ldots, I_s \subset [n]$ satisfying the same properties as in the ``splittable intervals'' setting described above (with slightly different dependence on $\eps$ and $k$ in the $\Theta_{k, \eps}(\cdot)$ term). Additionally, \emph{any} interval $J \subset [n]$ which contains an interval $I_j$ is itself far from $(12 \dots k)$-free, i.e.\ it contains a collection of $\Omega_{k,\eps}(|J|)$ disjoint $(12\dots k)$-copies.
		\end{itemize}
		
		\ignore{The (stronger) robust splittable intervals condition follows from the original condition, of (non-robust) splittable intervals; the proof is strikingly simple, relying on an elementary counting argument. Let $I_1, \ldots, I_s$ be the splittable intervals in the non-robust setting. Let $\calI$ be the collection of all intervals $I_j$ where there exists an interval $J_j$ that does not contain $c_{k,\eps}|J_j|$ disjoint $(12\dots k)$-copies (for a sufficiently small $c_{k,\eps} > 0$). Finally, let $\calJ$ be a minimal set of intervals $\{J_j : I_j \in \calI\}$. From the minimality, it can be shown that any $x \in [n]$ is contained in at most three intervals from $\calJ$, and so $\sum_{J \in \calJ} |J| \leq 3n$. Thus, the total number of disjoint $(12\dots k)$-copies in intervals from $\calJ$ (and so, in intervals from $\calI$) is bounded by $3n \cdot c_{k, \eps}$, meaning in particular that the intervals in the collection $\calI' = \{I_1, \ldots, I_s\} \setminus \calI$ contain $\Theta_{k, \eps}(n)$ disjoint $(12\dots k)$-copies in total provided that $c_{k,\eps}$ is small enough. One can now verify that $\calI'$ satisfies all requirements of the robust splittable intervals condition. }

	\paragraph{Towards an algorithm.} 
		At a high level, the algorithms of \cite{BCLW19} and \cite{NRRS17} proceed in a recursive manner where each step tries to find the relevant width considered (which is one of $\Omega(\log n)$ options). Since their algorithms are non-adaptive, they consider all $\Omega(\log n)$ options in recursive steps, and hence, suffer a logarithmic factor with each step. Since our algorithm is adaptive, we want to choose \emph{one} of the widths to recurse on. The algorithm will ensure that the width considered is large enough. When the width chosen is not too much larger, our recursive step proceeds similarly to \cite{NRRS17}; we call this the \emph{fitting case}. However, the width considered may be too large; we call this case \emph{overshooting}. In order to deal with the overshooting case, we algorithmically utilize the robust structural theorem in a somewhat surprising manner in order to detect a $(12\dots k)$-copy.

		We now expand on the above idea and provide an informal description. As \cite{BCLW19} gives an $O_{k, \eps}(\log n)$-query algorithm when our function $f$ satisfies the growing suffixes condition, we may assume that $f$ satisfies the \emph{robust splittable intervals} condition. Consider sampling, for $O_{k, \eps}(1)$ repetitions, an index $\bx \in [n]$ uniformly at random, and for each $t \in [\log n]$, a random index $\by_t \in [\bx, \bx + 2^t]$. Consider the following event: 
		\begin{quote}
			The index $\bx$ is a (sufficiently well-behaved)\footnote{Recall that, in the first polylogarithmic-query qlgorithm described above, we hoped to hit a ``1-entry'' $x$ whose value $f(x)$ is no higher than some suitable median value; the ``well-behaved'' requirements are of similar flavor, and do not incur more than a constant overhead on the query complexity.} first element in some $(12\dots k)$-pattern falling in some robust splittable interval $I_j$, and for $t^* \in [\log n]$ satisfying $|I_j| \leq 2^{t^*} \leq 2|I_j|$, $\by_{t^*}$ is a (well-behaved) $(c+1)$-th element in some $(12\dots k)$-pattern falling in $I_j$.
		\end{quote}
		We claim that the above event occurs with high (constant) probability for at least one choice of $\bx$, and that when this event does occur, the algorithm can be recursively applied without incurring a multiplicative logarithmic factor. Indeed, suppose that the above holds for some $\bx$.\footnote{More precisely, our algorithm runs this procedure for any of our choices of $\bx$, without ``knowing'' which of them satisfies the above event. Since the total number of choices is $O_{k, \eps}(1)$, this incurs only a constant overhead.} We set $\by$ to be $\by_t$, where $t$ is the largest such that $f(\bx) < f(\by_t)$ holds, and notice in particular that $t \geq t^*$. This means that $\bx < \by$ and $f(\bx) < f(\by)$.

		The \emph{fitting case} occurs when $t$ (achieving the maximum above) is roughly the same as $t^*$. To handle this case, we recurse by finding a $(12\dots c)$-patterns in $L_j$, and $(12\dots (k-c))$-pattern in $R_j$. At a high level, if one takes $\Theta_{k,\eps}(1)$ independent uniform samples $\bz$ from $[\bx, \by]$, then one of them is likely to fall in the middle part $M_j$ of $I_j$, so that $L_j \subset [\bx - 2^t, \bz]$ and $R_j \subset [\bz, \by + 2^t]$, allowing us to proceed recursively. While this description omits a few details, the intuition proceeds similarly to \cite{NRRS17}, except that the recursion occurs only on one width, namely, $t$, and does not lose multiplicative logarithmic factors as in the previous approaches. 

	\paragraph{The overshooting component.}
		The other case, of \emph{overshooting}, occurs when $t$ is significantly larger than $t^*$. We expand on the main ideas here in more detail; the strong guarantee given by the robust splittable intervals condition adds a ``for all'' element into the structural characterization, which is able to treat the problem posed by overshooting in a rather surprising and non-standard way. Since $t$ is much larger than $\log|I_j|$, there exist $k-2$ intervals $J_1, \ldots, J_{k-2} \subset [\bx,\by]$ satisfying the following conditions:
		\begin{itemize}
			\item 
				$J_1$ lies immediately after the interval $I_j$ (recall that $I_j$ is the interval containing $\bx$).
			\item 
				$J_{i+1}$ lies immediately after $J_i$, for any $i \in [k-3]$.
			\item 
				$|J_1| = \ceil{|I_j| \cdot \alpha_{k,\eps}}$ and $|J_{i+1}| = \ceil{|J_i| \cdot \alpha_{k,\eps}}$ for any $i \in [k-3]$, for some large enough $\alpha_{k,\eps} > 1$.
		\end{itemize}
		For any $i \in [k-2]$, set $J'_i$ to be the minimal interval containing both $I_j$ and $J_i$. The robust splittable intervals condition asserts that (since each $J'_i$ contains the splittable interval $I_j$) the number of disjoint $(12\dots k)$-copies in $J'_i$ is proportional to $|J'_i|$, and provided that $\alpha_{k, \eps}$ is large enough, this means that $J_i = J'_i \setminus J'_{i-1}$ also contains a collection $\calT_i$ of $\Omega_{k,\eps}(|J_i|)$ disjoint $(12\dots k)$-copies.
		We now define two sets $\calA_i$ and $\calB_i$ as follows. 
		Let $\calA_i$ be the collection of prefixes $(a_1, \ldots, a_{i+1})$ of $k$-tuples from $\calT_i$ with $f(a_{i+1}) < f(y)$, and let $\calB_i$ be the collection of suffixes $(a_{i+1}, \ldots, a_k)$ of $k$-tuples from $\calT_i$ with $f(a_{i+1}) \ge f(y)$. As $|\calT_i| = |\calA_i| + |\calB_i|$, one of $\calA_i$ and $\calB_i$ is large (i.e.\ has size at least $\Omega_{k, \eps}(|J_i|)$).
		%
		%

		This seemingly innocent combinatorial idea can be exploited non-trivially to find a $(12\dots k)$-copy.
		Specifically, the algorithm to handle overshooting aims to find (recursively) shorter increasing subsequences in $J_1, \ldots, J_{k-2}$, with the hope of combining them together into a $(12\dots k)$-copy. 
		Concretely, for any $i \in [k-2]$, 
		we make two recursive calls of our algorithm on $J_i$: one for a $(k-i)$-increasing subsequence in $J_i$ whose values are at least $f(y),$\footnote{Technically speaking, our algorithm can be configured to only look for increasing subsequences whose values lie in some range; we use this to make sure that shorter increasing subsequences obtained from the recursive calls of the algorithm can eventually be concatenated into a valid length-$k$ one.}
		and a second for an $(i+1)$-increasing subsequence in $J_i$, with values smaller than $f(y)$.
		By induction, the first recursive call succeeds with good probability if $|\calA_i|$ is large, while the second call succeeds with good probability if $|\calB_i|$ is large. Since for any $i$ either $|\calA_i|$ or $|\calB_i|$ must be large, at least one of the following must hold.
		\begin{itemize}
			\item 
				$\calB_1$ is large. 
				In this case we are likely to find a length-$(k-1)$ monotone pattern in $J_1$ with values at least $f(y) > f(x)$, which combines with $\bx$ to form a length-$k$ monotone pattern.
			\item 
				$\calA_{k-2}$ is large. 
				Here we are likely to find a length-$(k-1)$ monotone pattern in $J_{k-2}$ whose values lie below $f(y)$, which combines with $\by$ to form a length-$k$ monotone pattern.
			\item 
				There exists $i \in [k-3]$ where both $\calA_i$ and $\calB_{i+1}$ are large. 
				Here we will find, with good probability, a length-$(i+1)$ monotone pattern in $J_i$ with values below $f(y)$, and a length-$(k-i-1)$ monotone pattern in $J_{i+1}$ with values above $f(y)$; together these two patterns combine to form a $(12 \ldots k)$-pattern.
		\end{itemize}

		In all cases,  a $k$-increasing subsequence is found with good probability. 

		Finally, for the query complexity, our algorithm (which runs both the ``fitting'' component and the ``overshooting'' component, to address both cases) makes $O_{k,\eps}(\log n)$ queries: each call makes $O_{k,\eps}(\log n)$ queries in itself and $O_{k,\eps}(1)$ additional calls recursively, where the recursion depth is bounded by $k$. It follows that the total query complexity is of the form $O_{k,\eps}(\log n)$.

\ignore{In this case, we will use the robust structural result to identify a collection of $k-2$ adjacent intervals $J_1, \dots, J_{k-2}$ which lie between $[\bx, \by]$ and each contains many $(12\dots k)$-patterns. We recurse twice on each $J_{\kappa}$ for $\kappa \in [k-2]$: in one execution, we seek a $(12\dots (\kappa+1))$-pattern lying \emph{below} $f(\by)$, and in the other execution, we seek a $(12\dots (k-\kappa))$-pattern lying \emph{above} $f(\by)$. Since each $J_{\kappa}$ contains many $(12\dots k)$-patterns, one of the two executions will succeed. Lastly, these intervals lie within $[\bx, \by]$ and $f(\bx) < f(\by)$; which will mean that we will be able to combine some of the patterns found to form a $(12\dots k)$-pattern.}
	
\ignore{	Recall the above proposed algorithm to approximate an interval $I_j$ by hitting a $1$-entry from $L_j$ and $(c+1)$-entry from $R_j$. Let $x$ be a candidate for a $1$-entry, which we assume to lie in some splittable interval $I_j$, whose characteristics are unknown to us at this point; the probability of a random $x \in [n]$ to indeed be a valid $1$-entry in some interval is $\Omega_{k, \eps}(1)$, so we can take $O_{k,\eps}(1)$ choics of $x$ and run what follows for each $x$ separately. For each $w \in [\log n]$, make $\Theta_{k,\eps}(1)$ random queries in $[x, x + 2^{w-1}]$, and take $y$ to be the rightmost (in terms of location) among the queried elements $z$ for which $f(z) > f(x)$. 
	Assuming that $x$ is indeed a valid $1$-entry, with good probability we query an element $z$ which is a $(c+1)$-entry in $R_j$. Assuming that such an element is indeed queried, our algorithm treats separately the \emph{overshooting} case -- the case where $y$ is not contained in the interval $I_j$ -- and the \emph{fitting} case, in which $y$ is in the interval (or in close proximity to it). Note that our algorithm does not ``know'' at this point which of the cases holds, if at all.

\paragraph{Handling the overshooting case.}
	The strong guarantee given by the robust splittable intervals conditions adds a ``for all'' element into the structural characterization, which is able to treat the problem posed by overshooting in a rather surprising and non-standard way. Given $x$ and $y$ as in the last paragraph, if $\log(y-x) - \log(|I_j|)$ is large enough as a function of $k,\eps$, then there exist $k$ intervals $J_1, \ldots, J_{k-2} \subset [x,y]$ satisfying the following conditions:
	\begin{itemize}
		\item 
			$J_1$ lies immediately after the interval $I_j$ (which is the interval containing $x$).
		\item 
			$J_{i+1}$ lies immediately after $J_i$, for any $i \in [k-3]$.
		\item 
			$|J_{i+1}| = \ceil{|J_i| \cdot \alpha_{k,\eps}}$ for any $i \in [k-3]$, where again $\alpha_{k,\eps} > 1$ is large enough as a function of $k$ and $\eps$. Moreover, $|J_1| = \ceil{|I_j| \cdot \alpha_{k,\eps}}$.
	\end{itemize}
	For any $i \in [k-2]$, set $J'_i$ to be the minimal interval containing both $I_j$ and $J_i$. The robust splittable intervals condition asserts that (since each $J'_i$ contains the splittable interval $I_j$) the number of disjoint $(12\dots k)$-copies in $J'_i$ is proportional to $|J'_i|$, and provided that $\alpha_{k, \eps}$ is large enough, this means that $J_i = J'_i \setminus J'_{i-1}$ also contains a collection $\calT_i$ of $\Theta_{k,\eps}(|J'_i|) = \Theta_{k,\eps}(|J_i|)$ disjoint $(12\dots k)$-copies.
	We now define two sets $\calA_i$ and $\calB_i$ as follows. Initialize both sets to be empty, and for any copy $(a_1, \ldots, a_k) \in \calT$, add $(a_{i+1}, \ldots, a_k)$ to $\calA_i$ if $f(a_{i+1}) \geq f(y)$, otherwise add $(a_1, \ldots, a_{i+1})$ to $\calB_i$. In the end of the process, either $\calA_i$ contains $\Theta_{k,\eps}(|J_i|)$ disjoint $(12\dots k-i)$ copies whose smallest element is at least than $f(y)$, or $\calB_i$ contains $\Theta_{k,\eps}(|J_i|)$ disjoint $(12\dots i+1)$ copies whose largest element is smaller than $f(y)$. 
	%
	%

	This seemingly innocent combinatorial idea can be exploited non-trivially to find a $(12\dots k)$-copy.
	Specifically, the algorithm to handle overshooting aims to find (recursively) shorter increasing subsequences in $J_1, \ldots, J_{k-2}$, with the hope of combining them together into a $(12\dots k)$-copy.\footnote{Technically speaking, to make sure that shorter increasing subsequences can be combined into a longer one, our (recursive calls to the) algorithm receives, as part of its input parameters, the range of values in which it is required to find an increasing subsequence, as well as the interval in which the subsequence should reside. For example, if we require one call of the algorithm to have values in $(-\infty, a)$ for some $a \in \R$ and locations in $[1, n_1]$, while another call to the algorithm receives the value range $[a,b]$ for some $b \geq a$ and the interval $[n_1+1, n]$ as an input,  then we can rest assured that outputs from these two calls can be combined into a longer increasing subsequence.} Concretely, for any $i \in [k-2]$, 
	we make two recursive calls of our algorithm on $J_i$:
	\begin{enumerate}
		\item 
			The first recursive search is for a $(k-i)$-increasing subsequence in the interval $J_i$, whose values are at least $f(y)$.\footnote{More accurately, the range of allowed values for the recursive call is the intersection of the input range with $[f(y), \infty)$, and the interval in which this call should operate is $J_i$.}
		\item 
			The second search is for an $(i+1)$-increasing subsequence in $J_i$, with values smaller than $f(y)$.
	\end{enumerate} 
	By induction, the first recursive call succeeds with good probability if $|\calA_i|$ is large (of order $\Theta_{k, \eps}(|J_i|)$), while the second call succeeds with good probability if $|\calB_i|$ is large. Since for any $i$ either $|\calA_i|$ or $|\calB_i|$ must be large, at least one of the following must hold.
	\begin{itemize}
		\item 
			$|\calA_1|$ is large. In this case, with good probability the first call on $J_1$ returns an increasing subsequence $z_1 < \ldots < z_{k-1}$ where $f(z_1) \geq f(y) > f(x)$. By concatenating $x$ with this subsequence, a $k$-increasing subsequence is formed, as desired.
		\item 
			$|\calB_{k-2}|$ is large. In this case, with good probability the second call on $J_{k-2}$ will return a $(k-1)$-increasing subsequence with all values smaller than $f(y)$, which combined with $y$ forms a $(12 \dots k)$-copy.
		\item 
			Otherwise, there exists some $i \in [k-3]$ where both $|\calB_i|$ and $|\calA_{i+1}|$ are large. Here, with good probability an $(i+1)$-increasing subsequence from $J_i$ with maximum value less than $f(y)$ and a $(k-i-1)$-increasing subsequence from $J_{i+1}$ with minimum value at least $f(y)$ will be found, and together they can be concatenated to form a $k$-increasing subsequence as desired.
	\end{itemize}

	In all cases, a $k$-increasing subsequence is found with good probability. This rather surprising technique settles the overshooting case, where $y$ is far from the interval $I_j$ in which $x$ lies. We now turn to the other regime, where a more standard approach suffices.

\paragraph{Handling the fitting case.} 
	Now suppose that $x$ and $y$ are as above, and suppose that $\log(y-x) - \log|I_j| = O_{k, \eps}(1)$.
	In this case, $x$ and $y$ serve as relatively good estimations for the endpoints of $I_j$. This implies that there is a family $\calC_j$ of $\Omega_{k, \eps}(|I_j|)$ disjoint copies of $(12\ldots k)$ such that the $c$-entry of any $(12\ldots k)$-copy in $\calC_j$ lies to the left and below each $(c+1)$-entry of an $(12 \ldots k)$-copy in $\calC_j$.
	
	We make $O_{k,\eps}(1)$ random queries in order to find, with good probability, an element $z \in [x,y]$ such that for at least $\Omega_{k,\eps}(|I_j|)$ of the subsequences in $\calC_j$, their $(c+1)$-entry lies to the right of $z$ while their $c$-entry lies to the left of $z$ (note that any $z \in M_j$ suffices for this purpose). Next, we make $O_{k,\eps}(1)$ random queries in order to find, with good probability, an element $w \in [z,y]$ such that $f(w)$ lies below the $(c+1)$-entry of at least $\Omega_{k,\eps}(|I_j|)$ subsequences in $\calC_j$ whose $(c+1)$-entries are also to the right of $z$, and above the $c$-entry of at least $\Omega_{k,\eps}(|I_j|)$ subsequences in $\calC_j$ whose $c$-entry is to the left of $z$. Putting everything together, we find that there exists a collection of $\Omega_{k,\eps}(|I_j|)$ disjoint $c$-increasing subsequences in $[x, z)$ whose elements lie (strictly) below $w$, and there is a collection of $\Omega_{k,\eps}(|I_j|)$ disjoint $(k-c)$-increasing subsequences in $[z, y]$ whose elements lie above $w$. Two recursive call of the algorithm will find, with high probability, a $c$-increasing subsequence in $[x,z)$ that lies below $w$, and a $(k-c)$-increasing subsequence in $[z, y]$ that lies above $w$, which toghether form the required $k$-increasing subsequence.

	\comment{Shoham: updated the description here. I noticed that Erik (as far as I can tell) uses only one element $w$ to determine how the interval $[x,y]$ is split and also how the interval of possible values is split. I believe that's actually a mistake: if the $(c+1)$-entries form a decresasing subsequence, that an appropriate $w$ does not exist. 

	Anyway, I put this description here which I think should work as written (in my old write-up I also `guessed' $|I_j|$ but that's probably unnecessary), and will look into the details in Section 4 to make sure everything there works.}

\paragraph{Putting it all together.}
	To summarize, our adaptive $O_{k,\eps}(\log n)$-query algorithm to find a $k$-increasing subsequence in $f \colon [n] \to \R$, assuming $f$ is $\eps$-far from free of such subsequences, goes as follows.
	\begin{itemize}
		\item 
			First, we try to apply the (non-adaptive) algorithm for finding growing suffixes. If the growing suffixes condition holds, then this step will find the desired increasing subsequence with good probability.
		\item 
			Otherwise, the robust splittable intervals condition holds. We pick $O_{k,\eps}(1)$ candidates $x$ for $1$-entries, and, for each such $x$ and each $w \in [\log n]$, we sample $O_{k,\eps}(1)$ random elements in $[x+2^w, x+2^{w+1}]$.
		\item 
			Let $y$ be the rightmost element queried for which $f(y) > f(x)$. We run the overshooting algorithm. If indeed $x$ is a well-behaved $1$-entry of some splittable interval $I_j$ and $y$ is an overshoot with respect to $x$ and $I_j$, then this step will find a $k$-increasing subsequence with good probability. 
		\item 
			Otherwise, $y$ is not an overshoot with respect to $x$, and we employ the algorithm for the fitting case.
	\end{itemize}
	To analyze the query complexity, note that the only point in the algorithm where the query complexity depends on $n$ is in the second step when elements are sampled at all possible scales with respect to $x$. It is not hard to verify that the query complexity is thus of the form $T(k, \eps, n) = O_{k,\eps}(\log n)$. Indeed, it is bounded by an expression of the form 
	$$O_{k,\epsilon}(\log n) + \sum_{k', \eps'} T(k', \eps', n),$$
	where $(k', \eps')$ ranges over all tuples of parameters for the recursive calls made by our algorithm (these parameters depend only on $k,\eps$ and not on $n$); in all such tuples, $k' < k$, which gives the desired bound.

	\comment{Where should we talk about the lower bound (reduction from monotonicity testing)? I think it was first shown by Newman, and is a very simple construction nonetheless. Maybe just give the statement and mention that it follows from Newman's paradigm.}}


\subsection{Notation}

All logarithms considered are base $2$. We consider functions $f\colon I \to \R$, where $I \subseteq [n]$, as the inputs and main objects of study. An \emph{interval} in $[n]$ is a set $I \subseteq [n]$ of the form $I = \{a, a+1, \ldots, b\}$. At many places throughout the paper, we think of augmenting the image with a special character $*$ to consider $f \colon I \to \R \cup \{ *\}$. 
$*$ can be thought of as a \emph{masking}  operation: In many cases, we will only be interested in entries $x$ of $f$ so that $f(x)$ lies in some prescribed (known in advance) range of values $R \subseteq \R$, so that entries outside this range will be marked by $*$.
Whenever the algorithm queries $f(x)$ and observes $*$, it will interpret this as an incomparable value (with respect to ordering) in $\R$. As a result, $*$-values will never be part of monotone subsequences. We note that augmenting the image with $*$ was unnecessary in \cite{NRRS17, BCLW19} because they only considered non-adaptive algorithms. We say that for a fixed $f\colon I \to \R \cup \{*\}$, the set $T$ is a collection of disjoint monotone subsequences of length $k$ if it consists of tuples  $(i_1, \dots, i_k) \in I^k$, where $i_1 < \dots < i_k$ and $f(i_1) < \dots < f(i_k)$, and furthermore, for any two tuples $(i_1, \ldots, i_k)$ and $(i'_1, \ldots, i'_k)$, their intersection (as sets) is empty. We also denote $E(T)$ as the union of indices in $k$-tuples of $T$, i.e., $E(T) = \cup_{(i_1,\dots, i_k) \in T} \{ i_1, \dots, i_k\}$. Finally, we let $\poly( \cdot )$ denote a large enough polynomial whose degree is (bounded by) a universal constant. 

\section{Stronger structural dichotomy}
	In this section, we establish the structural foundations -- specifically, the \emph{growing suffixes} versus \emph{robust splittable intervals} dichotomy -- lying at the heart of our adaptive algorithm.
	We start with the definitions. The first is the definition of a growing suffix setting, as given in \cite{BCLW19}. For what follows, for an index $\ell \in [n]$ define $\eta_\ell = \ceil{\log_2(n-\ell)}$, and for any $t \in [\eta_\ell]$ set $S_t(\ell) = [a+2^{t-1}, a+2^t) \cap [n]$. Note that the intervals $S_1, \ldots, S_{\eta_\ell}$ are a partition of $(\ell, n]$ into intervals of exponentially increasing length (except for maybe the last one). Finally, the tuple $S(\ell) = (S_t(\ell))_{t \in [\eta_\ell]}$ is called the \emph{growing suffix} starting at $\ell$. 


	\begin{definition}[Growing suffixes (see \cite{BCLW19}, Definition 2.4)]\label{def:growing-suffixes}
		Let $\alpha, \beta \in [0,1]$.
		We say that an index $\ell \in [n]$ starts an \emph{$(\alpha, \beta)$-growing suffix} if, when considering the collection of intervals $S(\ell) = \{ S_t(\ell) : t \in [\eta_\ell]\}$, for each $t \in [\eta_\ell]$ there is a subset $D_t(\ell) \subseteq S_t(\ell)$ of indices 
		such that the following properties hold. 
		\begin{enumerate} 
			\item 
				We have $|D_t(\ell)|/|S_t(\ell)| \le \alpha$ for all $t \in [\eta_\ell]$, and
				$\sum_{t=1}^{\eta_\ell} |D_t(\ell)|/|S_t(\ell)| \geq \beta$.
				\label{en:grow-cond-3}
			\item 
				For every $t, t' \in [\eta_a]$ where $t < t'$, if $a \in D_t(\ell)$ and $a' \in D_{t'}(\ell)$, then $f(a) < f(a')$. 
				\label{en:grow-cond-2}
		\end{enumerate}
	\end{definition}
	
	The second definition, also from \cite{BCLW19}, describes the (non-robust) splittable intervals setting. 
	
	\begin{definition}[Splittable intervals (see \cite{BCLW19}, Definition 2.5)]\label{def:splittable}
		Let $\alpha,\beta \in (0,1]$ and $c \in [k-1]$. Let $I \subseteq [n]$ be an interval, let $T \subseteq I^{k}$ be a set of disjoint, length-$k$ monotone subsequences of $f$ lying in $I$, and define
		\begin{align*} 
			T^{(L)} &= \{ (i_1, \dots, i_c) \in I^c : (i_1, \dots, i_c) \text{ is a prefix of a $k$-tuple in $T$}\}, \text{ and }\\
			T^{(R)} &= \{ (j_1, \dots, j_{k-c}) \in I^{k-c} : (j_1, \dots, j_{k-c}) \text{ is a suffix of a $k$-tuple in $T$}\}.
		\end{align*}
		We say that the pair $(I, T)$ is \emph{$(c, \alpha,\beta)$-splittable} if $|T|/|I| \geq \beta$; $f(i_c) < f(j_1)$ for every $(i_1, \dots, i_c) \in T^{(L)}$ and $(j_1, \dots, j_{k-c}) \in T^{(R)}$; and there is a partition of $I$ into three adjacent intervals $L, M, R \subseteq I$ (that appear in this order, from left to right) of size at least $\alpha |I|$, satisfying $T^{(L)} \subseteq L^c$ and $T^{(R)} \subseteq R^{k-c}$. 
		
		A collection of disjoint interval-tuple pairs $(I_1, T_1), \dots, (I_s, T_s)$ is called a \emph{$(c, \alpha,\beta)$-splittable collection of $T$} if each $(I_j, T_j)$ is $(c, \alpha, \beta)$-splittable and the sets $(T_j : j \in [s])$ partition $T$.
	\end{definition}

	The following theorem presents the growing suffixes versus (non-robust) splittable intervals dichotomy, which is among the main structural results of \cite{BCLW19}.\footnote{In \cite{BCLW19}, the theorem is stated with respect to two parameters, $k, k_0$. For our purpose it suffices to set $k_0 = k$.} 
	\begin{theorem}[\cite{BCLW19}]\label{thm:main-structure}
		Let $k,n \in \N$, $\eps\in (0,1)$, and $C > 0$, and let $I \subseteq [n]$ be an interval. Let $f \colon I \to \R \cup \{\ast\}$ be a function and let $T^0 \subseteq I^{k}$ be a set of at least $\eps |I|$ disjoint monotone subsequences of $f$ of length $k$. 
		Then there exist $\alpha \in (0,1)$ and $p > 0$ satisfying $\alpha \geq \Omega(\eps/k^5)$ and $p \leq \poly(k \log(1/\eps))$ such that at least one of the following conditions holds.
		\begin{enumerate}
			\item \label{en:suffix} 
				\textbf{Growing suffixes:} There exists a set $H \subseteq [n]$, of indices that start an $(\alpha, C k \alpha)$-growing suffix, satisfying $\alpha |H| \geq (\eps/p)  n$.
			\item \label{en:split} 
				\textbf{Splittable intervals (non-robust):} There exist an integer $c$ with $1 \le c < k$, a set $T$, with $E(T) \subseteq E(T^0)$, of disjoint length-$k$ monotone subsequences, and a $(c, 1/(6k),\alpha)$-splittable collection of $T$, consisting of disjoint interval-tuple pairs $(I_1, T_1), \dots, (I_s, T_s)$, such that 
				\begin{align}
				\label{eqn:splittable_intervals_non_robust}
					\alpha \sum_{h=1}^s |I_h| \geq |T^0|/p. 
				\end{align}
		\end{enumerate}
	\end{theorem}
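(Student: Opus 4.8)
The plan is to establish the dichotomy through a density-based double-counting argument, in the spirit of the polylogarithmic-query discussion of \Cref{subsec:techniques}, after first putting $T^0$ into a convenient normal form. For a copy $(x_1,\dots,x_k)$ of $T^0$ call the index $c\in[k-1]$ maximizing $x_{c+1}-x_c$ its \emph{gap index}; pigeonholing over the $k-1$ possibilities, pass to a subcollection with $\ge |T^0|/(k-1)$ copies sharing a common gap index $c$. Then perform a \emph{rematching} step, in the spirit of \cite{NRRS17,BCLW19}: reselect greedily which prefixes $(x_1,\dots,x_c)$ are paired with which suffixes $(x_{c+1},\dots,x_k)$, keeping a constant fraction of the copies, so that the resulting collection $\calC$ (still with gap index $c$, still disjoint) enjoys a \emph{monotone-crossing} property: whenever two copies satisfy $x_c<x'_c$ and $x_{c+1}>x'_{c+1}$, then $f(x_{c+1})>f(x'_{c+1})$. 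The point of this property is that a prefix of one copy can always be legally spliced onto the suffix of another whose $(c+1)$-entry lies to its right, and this is what both conclusions ultimately rely on.

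Next, for $\ell\in[n]$ and scale $w$ let $\calC_{\ell,w}$ be the copies of $\calC$ of gap-width $w$ that $\ell$ cuts with slack, and set $\tau_\calC(\ell,w)=|\calC_{\ell,w}|/2^w$ and $\tau_\calC(\ell)=\sum_w\tau_\calC(\ell,w)$. A copy of gap-width $w$ is cut with slack by $\Theta(2^w)$ locations and contributes $2^{-w}$ at each, so $\sum_\ell\tau_\calC(\ell)=\Theta(|\calC|)=\Omega(\eps n/k)$, while disjointness of $\calC$ forces $\tau_\calC(\ell,w)=O(1)$ for all $\ell,w$. Fix $\alpha=\Theta(\eps/k^5)$ (the power of $k$ absorbs the losses from normalization, the slack constants, and the aggregation below). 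For any $\ell$ with $\tau_\calC(\ell)\ge Ck\alpha$ I would run the following \emph{local dichotomy}.
\begin{itemize}
	\item If $\tau_\calC(\ell,w)\le\alpha$ for \emph{every} scale $w$, then $\ell$ starts an $(\alpha,Ck\alpha)$-growing suffix: take $D_t(\ell)$ to be the set of $(c+1)$-entries of copies of $\calC$ whose gap-width places their $(c+1)$-entry into $S_t(\ell)$ (the slack forces an $O(1)$-bounded scale shift, absorbed by shrinking $\alpha$ by a constant); the relative-size bound is $O(\alpha)$, the total is $\Theta(\tau_\calC(\ell))\ge Ck\alpha$, and the value-monotonicity condition of \Cref{def:growing-suffixes} is exactly the monotone-crossing property, since all these copies have their $c$-entry to the left of $\ell$ and their $(c+1)$-entry to the right of $\ell$, appearing in order of scale.
	\item Otherwise some scale $w$ has $\tau_\calC(\ell,w)>\alpha$, i.e.\ $|\calC_{\ell,w}|>\alpha 2^w$. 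Then $I_\ell=[\ell-O(k2^w),\ell+O(k2^w)]\cap[n]$ contains $>\alpha 2^w=\Omega(\alpha|I_\ell|/k)$ disjoint copies whose first $c$ entries lie in its left third and last $k-c$ entries in its right third. A final \emph{median rematch} inside $\calC_{\ell,w}$ — keep the copies whose $c$-entry value is below the median of the $c$-entry values, re-pair their prefixes with the suffixes of the copies whose $c$-entry value is above the median (legal by the crossing property) — enforces $f(i_c)<f(j_1)$ for all prefixes and suffixes, so $(I_\ell,\cdot)$ is $(c,1/(6k),\Omega(\alpha/k))$-splittable.
\end{itemize}

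It then remains to aggregate: either the set $H$ of growing-suffix starting points is already large enough that $\alpha|H|\ge\eps n/p$ (the growing-suffixes alternative), or it is not, and then the locations with $\tau_\calC(\ell)\ge Ck\alpha$ that are \emph{not} growing-suffix starts carry the bulk of the density mass; selecting their dominant-scale windows $I_\ell$ and greedily thinning (largest first) to a pairwise-disjoint subfamily, whose total length is a constant fraction of the union, produces the $(c,1/(6k),\Omega(\alpha/k))$-splittable collection of the splittable-intervals alternative, with $\alpha\sum_h|I_h|$ a constant fraction of $|\calC|\ge|T^0|/k$. Rolling constants into $p$ gives $p=\poly(k\log(1/\eps))$.

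I expect two places to be the real work. The first is the rematching: arranging a single reselection of the matching that simultaneously supports the staircase extraction (for growing suffixes) and the median rematch (for splittable intervals), while keeping a constant fraction of the copies and the gap index fixed, is the technical heart, and is where the $\log(1/\eps)$-type losses enter. The second, and the genuinely delicate point, is the bookkeeping that keeps $p$ free of \emph{any} $\log n$ dependence: plain double-counting only guarantees $\Omega(\eps n/(k\log n))$ locations of large density, a $\log n$ factor short of what the growing-suffixes alternative (or the aggregation in the splittable case) needs, so one cannot simply route the argument through those locations. Removing this factor requires exploiting that the growing-suffix condition is a condition on $f$ rather than on $\calC$ — a growing suffix at one location, once restricted to large scales, forces growing suffixes throughout a whole neighborhood of it — together with, in the concentrated regime, recursing on the dense windows $I_\ell$ with recursion depth controlled purely by $k$ and $\log(1/\eps)$ (not by $n$); carrying out this recursion and checking that the $L,M,R$ pieces each retain size $\ge |I|/(6k)$ are the remaining obstacles.
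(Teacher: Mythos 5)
This theorem is not proved in the paper at all: it is imported verbatim from \cite{BCLW19} (it is their growing-suffixes/splittable-intervals dichotomy), and the paper's only original structural contribution is the upgrade to the robust version via Lemma~\ref{lem:intervals}. So there is no in-paper proof to compare against; what you have written is an attempted reconstruction of the cited result, and it has to be judged on its own terms.

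Your reconstruction gets the architecture right (pigeonhole on the gap index, a rematching step to obtain the crossing property, the density functions $\tau_\calC(\ell,w)$, and a local spread-versus-concentration dichotomy at each $\ell$), and the local steps you describe --- the staircase extraction for growing suffixes and the median rematch for splittability --- are plausible given the crossing property. But the proof has a genuine gap exactly where you say it does, and flagging a gap is not the same as closing it. The whole content of the theorem is that $p \le \poly(k\log(1/\eps))$ with \emph{no} dependence on $n$; the naive Markov argument over locations $\ell$ gives only $\Omega(\eps n/(k\log n))$ locations of density $\ge Ck\alpha$, and this $\log n$ loss infects \emph{both} branches of your aggregation (the set $H$ in the growing-suffixes case, and the total length of the disjointified windows in the splittable case, since you lower-bound that union only by the number of dense locations). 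The two remedies you gesture at --- that a growing suffix at one location propagates to a neighborhood, and a recursion on dense windows with depth controlled by $k$ and $\log(1/\eps)$ --- are precisely the nontrivial mechanisms of the actual proof in \cite{BCLW19}, and neither is carried out or even stated precisely enough to check. In addition, the ``greedy thinning'' of the overlapping windows $\{I_\ell\}$ needs care: distinct dense locations can chart overlapping windows whose copy collections are not disjoint, so one must charge copies (not locations) to the selected windows to certify $\alpha\sum_h|I_h| \ge |T^0|/p$ while keeping the $(T_h)$ a partition of a single set $T$. As it stands the proposal is a correct outline of the known proof strategy with its hardest step left open, not a proof.
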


%

	As argued in Section \ref{subsec:techniques}, the splittable intervals condition does not seem strong enough by itself to be useful for adaptive algorithms. Therefore, we next aim to establish a stronger structural dichotomy, asserting that $f$ either satisfies the growing suffixes condition, or a \emph{robust} version of the splittable intervals condition. The next lemma will imply that the growing suffixes condition can be robustified by merely throwing away a subset of ``bad'' splittable intervals.

	\begin{lemma} \label{lem:intervals}
		Let $\alpha \in (0,1)$ and let $I \subset \N$ be an interval.
		Suppose that $I_1, \ldots, I_s \subset I$ are disjoint intervals such that $\sum_{h=1}^s |I_h| \ge \alpha |I|$.
		Then there exists a set $G \subset [s]$ such that 
		\[
			\sum_{h \in G} |I_h| \ge (\alpha/4)|I|,
		\]
		and for every interval $J \subset I$ that contains an interval $I_h$ with $h \in G$, 
		\[
			\sum_{h \in [s] \colon I_h \subset J} |I_h| \ge (\alpha/4)|J|.
		\]
	\end{lemma}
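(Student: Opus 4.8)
The plan is to build the set $G$ greedily, processing candidate ``bad'' intervals in a way that a clean charging argument bounds the damage. I would first call an interval $I_h$ (for $h \in [s]$) \emph{bad} if there exists some interval $J \subset I$ containing $I_h$ with $\sum_{h' : I_{h'} \subset J} |I_{h'}| < (\alpha/4)|J|$; otherwise call $I_h$ \emph{good}, and set $G$ to be the set of indices of good intervals. By construction $G$ automatically satisfies the second displayed inequality, so the entire content of the lemma is the first inequality, $\sum_{h \in G}|I_h| \ge (\alpha/4)|I|$. Equivalently, writing $B = [s] \setminus G$ for the bad indices, it suffices to show $\sum_{h \in B} |I_h| \le (3\alpha/4)|I|$, since $\sum_{h=1}^s |I_h| \ge \alpha|I|$.

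To bound the total length of bad intervals, for each bad $h \in B$ fix a witnessing interval $J_h \supseteq I_h$ with $\sum_{h' : I_{h'}\subset J_h}|I_{h'}| < (\alpha/4)|J_h|$. The family $\{J_h : h \in B\}$ covers $\bigcup_{h \in B} I_h$. The next step is to extract from this family a subfamily $\calJ$ of intervals that still covers $\bigcup_{h\in B}I_h$ but has bounded overlap — specifically, every point of $I$ lies in at most two members of $\calJ$. This is a standard interval-covering (Vitali-type) selection on the line: repeatedly pick the witnessing interval extending furthest to the right among those whose left endpoint is not yet covered, or pick a minimal subcover; either way one gets a subfamily where any point is covered at most twice (on the line, a minimal cover by intervals has this property, since three intervals through a common point would let the middle one be discarded). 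Then
\[
	\sum_{h \in B} |I_h| \;\le\; \sum_{J \in \calJ}\ \sum_{h' : I_{h'} \subset J} |I_{h'}| \;<\; \sum_{J \in \calJ} \frac{\alpha}{4}|J| \;\le\; \frac{\alpha}{4}\cdot 2|I| \;=\; \frac{\alpha}{2}|I|,
\]
where the first inequality uses that each $I_h$ with $h \in B$ is contained in some $J \in \calJ$ (here one must be slightly careful: the minimal subcover covers all the $I_h$ as sets, and since the $I_h$ are intervals, being covered forces $I_h \subset J$ for some single $J \in \calJ$ — this uses connectedness of $I_h$ together with the at-most-two overlap, which may need the subfamily chosen so that each $I_h$ is actually contained in a member, not just covered by the union; choosing $\calJ$ as a minimal subcover of $\{J_h\}$ and noting each $I_h \subseteq J_h$ handles this). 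This gives $\sum_{h\in B}|I_h| \le (\alpha/2)|I| \le (3\alpha/4)|I|$, completing the proof, and in fact with room to spare.

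The main obstacle I anticipate is the combinatorial selection step: making precise the claim that one can thin the witnessing family down to overlap multiplicity at most two (or some absolute constant) while each bad $I_h$ remains \emph{contained in} — not merely covered by the union of — a selected interval. If one only needs each $I_h$ contained in \emph{its own} witness $J_h$ and then thins $\{J_h\}$ to a subcover, one has to re-argue containment after thinning; the clean fix is to note that on the line any point lies in at most two intervals of a minimal subcover, and that an interval $I_h$ covered by such a subcover must lie inside a single member (otherwise its two ``halves'' would be covered by two distinct members whose union is disconnected away from $I_h$ — impossible since $I_h$ is connected and both members meet it). All the arithmetic is then trivial, and the factor $\alpha/4$ in the statement is comfortably met (the argument yields $\alpha/2$). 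I would double-check only that the definition of ``bad'' ranges over all $J \subset I$ containing $I_h$, so that ``good'' is exactly the negation and the second conclusion is immediate.
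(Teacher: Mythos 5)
Your overall strategy is exactly the paper's: define an index $h$ as bad if some containing interval $J \supseteq I_h$ is sparse, fix a witness $J_h$ for each bad $h$, thin the witnesses to a subfamily $\calJ$ with bounded overlap while keeping every bad $I_h$ \emph{contained in} a single member, and charge $\sum_{h\in B}|I_h|$ to $\sum_{J\in\calJ}(\alpha/4)|J|$. The gap is in the thinning step, and it is a genuine one: the overlap bound of $2$ is false for the kind of minimal subfamily your charging argument needs. You are conflating two notions of cover. For a minimal \emph{point}-cover of $\bigcup_{h\in B}I_h$ the overlap is indeed at most $2$, but then a bad $I_h$ need not be contained in any single surviving member (e.g.\ $I_h=[3,7]$ covered by $[0,5]\cup[4,10]$, both retained because each is needed elsewhere); your parenthetical connectedness argument does not rescue this, since two overlapping intervals through $I_h$ have connected union. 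For a minimal family in which every bad $I_h$ is contained in some single member --- which is what the inequality $\sum_{h\in B}|I_h|\le\sum_{J\in\calJ}\sum_{h':I_{h'}\subset J}|I_{h'}|$ actually requires --- the overlap can be $3$, not $2$: take $J_1=[0,10]$, $J_2=[5,15]$, $J_3=[4,11]$ with witnesses needed for $I_{h_1}=[1,2]$, $I_{h_2}=[13,14]$, $I_{h_3}=[4.5,10.5]$ respectively; none of the three is redundant, and the point $7.5$ lies in all three.

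The fix is the paper's: for the containment-minimal family, show every point lies in at most \emph{three} members ($J_L$ with leftmost left endpoint among those through $x$, $J_R$ with rightmost right endpoint, and one $J_M$ containing the unique bad $I_{h^*}$ through $x$, using disjointness of the $I_h$). This gives $\sum_{h\in B}|I_h|<(\alpha/4)\cdot 3|I|=(3\alpha/4)|I|$ and hence $\sum_{h\in G}|I_h|\ge \alpha|I|-(3\alpha/4)|I|=(\alpha/4)|I|$. Note that the arithmetic then closes \emph{exactly} at the stated constant $\alpha/4$; there is no ``room to spare,'' and your claimed bound of $(\alpha/2)|I|$ for the bad mass is an artifact of the incorrect overlap constant. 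The rest of your write-up (the definition of $G$, the observation that the second conclusion is the definition of goodness, and the charging inequality itself) is correct.
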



	\begin{proof}
		Let $B \subseteq [s]$ be the set of indices $h$ for which there is an interval $J_h \supseteq I_h$ satisfying $\sum_{h \in [s]:I_h \subseteq J} |I_h| < (\alpha / 4) |J|$. For each $h \in B$ fix such a containing interval $J(I_h)$.

		Let $\calJ$ be a minimal subset of $\{J(I_h) : h \in B\}$ with the following property: for any $h \in B$ there exists $J \in \calJ$ containing $I_h$. Such a minimal subset clearly exists, since $\{J(I_h) :h \in B\}$ itself satisfies this property (but is not necessarily minimal).
		The next claim asserts that no vertex is covered more than three times by sets in $\calJ$.
		\begin{claim}
			Every element $x \in I$ is contained in at most three intervals from $\calJ$.
		\end{claim}
		\begin{proof}
			The proof follows from the minimality of $\calJ$.
			Consider first the case where $x \in I_{h^*}$ for some $h^* \in B$. Let $J_{L} = [a_L, b_L]$ be an interval from $\calJ$ that contains $x$, and whose left-most element $a_L$ is furthest to the left among all intervals from $\calJ$ that contain $x$; pick $J_R = [a_R, b_R]$ symmetrically, with $b_R$ being furthest possible to the right; and let $J_M = [a_M, b_M]$ be an interval from $\calJ$ that contains $I_h$. We claim that $\calJ$ does not have any other intervals that contain $x$. Suppose, to the contrary, that there exists $J = [a, b] \in \calJ$ containing $x$ where $J \neq J_{L}, J_R, J_M$; note that by definition of $J_L$ and $J_M$, $a_L \le a$ and $b_R \ge b$.  

			We claim that $\calJ \setminus \{J\}$ covers all intervals $I_h$ with $h \in [B]$; it suffices to show that for any $h \in B$ such that $I_h \subset J$, one of the intervals $J_L, J_R, J_M$ covers $I_h$. Consider $h \in B$ such that $I_h \subset J$, and write $I_h = [c, d]$. If $h = h^*$, then $I_h \subset J_M$. If $I_h$ lies to the left of $I_{h^*}$, then $d < x \leq b_L$, and $c \ge a \ge a_L$, so $I_h \subseteq J_{L}$. Similarly, if $I_h$ lies to the right of $I_h$, then $I_h \subseteq J_R$. It follows that, indeed, intervals from $\calJ \setminus \{J\}$ cover all intervals in $\{I_h : h \in B\}$, contradicting the minimality of $\calJ$.
			
			Now, if $x$ is not contained in any interval of $I_h$ with $h \in B$, then we can show similarly that there are at most two intervals from $\calJ$ that contain $x$, by defining $J_{L}$ and $J_R$ as above. 
		\end{proof}
		Let $U$ be the union of intervals from $\calJ$. In light of the above claim,
		\[
			\sum_{h \in B} |I_h| 
			\le \sum_{J \in \calJ} \left(\sum_{h \in [s]: \, I_h \subseteq J} |I_h|\right)
			< \frac{\alpha}{4} \cdot \sum_{J \in \calJ} |J|
			\le \frac{3\alpha}{4} \cdot |U|
			\le \frac{3\alpha}{4} \cdot |I|,
		\]
		where the first inequality holds because each $I_h$ with $h \in B$ is covered by an interval in $\calJ$; the second inequality follows as $\calJ$ consists of sets $J(I_h)$ with $h \in B$; the third inequality follows from the claim; and the last one holds because $U \subset I$.
		Finally, let $G = [s] \setminus [B]$. By assumption on $\sum_h |I_h|$ and the previous line,
		\[
			\sum_{h \in G} |I_h| 
			= \sum_{h \in [s]} |I_h| - \sum_{h \in B} |I_h| 
			\geq \alpha |I| -  \frac{3\alpha}{4} \cdot |I| 
			= \frac{\alpha}{4} \cdot |I|,
		\]
		and every interval $J$ that contains an interval $I_h$ with $h \in G$ satisfies $\sum_{h \in [s] \colon I_h \subset J} |I_h| \ge (\alpha/4)|J|$, as required.
	\end{proof}

	The robust version of the structural dichotomy is stated below; the proof follows easily from the basic structural dichotomy in combination with the last lemma.
	\begin{theorem}[Robust structural theorem]\label{thm:main-structure-2}
		Let $k,n \in \N$, $\eps\in (0,1)$, and $C > 0$, and let $I \subseteq [n]$ be an interval. Let $f \colon I \to \R \cup \{\ast\}$ be an array and let $T^0 \subseteq I^{k}$ be a set of at least $\eps |I|$ disjoint length-$k$ monotone subsequences of $f$. 
		Then there exist $\alpha \in (0,1)$ and $p > 0$ with $\alpha \geq \Omega(\eps/k^5)$ and $p \leq \poly(k \log(1/\eps))$ such that at least one of the following holds.
		\begin{enumerate}
			\item 
				\textbf{Growing suffixes:} There exists a set $H \subseteq [n]$, of indices that start an $(\alpha, C k \alpha)$-growing suffix, satisfying $\alpha |H| \geq (\eps/p) n$.
			\item 
				\textbf{Robust splittable intervals:} There exist an integer $c$ with $1 \le c < k$, a set $T$, with $E(T) \subseteq E(T^0)$, of disjoint length-$k$ monotone subsequences, and a $(c, 1/(6k),\alpha)$-splittable collection of $T$, consisting of disjoint interval-tuple pairs $(I_1, T_1), \dots, (I_s, T_s)$, such that 
				\begin{align}
					\label{eqn:splittable-intervals-robust}
					\alpha \sum_{h=1}^s |I_h| \geq (\eps/p) |I|, 
				\end{align}
				Moreover, if $J \subset I$ is an interval where $J \supset I_h$ for some $h \in [s]$, $J$ contains at least $(\eps / p) |J|$ disjoint $(12\dots k)$-patterns from $T^0$.
		\end{enumerate}
	\end{theorem}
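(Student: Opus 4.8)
The plan is to derive the robust structural theorem (Theorem~\ref{thm:main-structure-2}) directly from the basic dichotomy (Theorem~\ref{thm:main-structure}) together with the combinatorial pruning lemma (Lemma~\ref{lem:intervals}). First I would apply Theorem~\ref{thm:main-structure} to $f$ and $T^0$, obtaining parameters $\alpha \geq \Omega(\eps/k^5)$ and $p \leq \poly(k\log(1/\eps))$ such that either the growing suffixes condition holds or the non-robust splittable intervals condition holds. In the first case we are immediately done, since the growing suffixes clause of Theorem~\ref{thm:main-structure-2} is verbatim the same as that of Theorem~\ref{thm:main-structure}. So assume the non-robust splittable case: there are $c \in [k-1]$, a set $T$ with $E(T) \subseteq E(T^0)$, and a $(c, 1/(6k), \alpha)$-splittable collection $(I_1, T_1), \dots, (I_s, T_s)$ with $\alpha \sum_h |I_h| \geq |T^0|/p \geq (\eps/p)|I|$, using $|T^0| \geq \eps|I|$.

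Next I would feed the intervals $I_1, \dots, I_s$ into Lemma~\ref{lem:intervals}, taking the role of ``$\alpha$'' there to be $\alpha' := \eps/(p)$ (more precisely, a lower bound on $\sum_h |I_h|/|I|$, which is $\alpha \cdot (\text{something})$; I need to track this constant carefully). The lemma yields a subset $G \subseteq [s]$ with $\sum_{h \in G}|I_h| \geq (\alpha'/4)|I|$ such that every interval $J \subseteq I$ containing some $I_h$ with $h \in G$ satisfies $\sum_{h' \in [s]: I_{h'} \subseteq J} |I_{h'}| \geq (\alpha'/4)|J|$. I would then restrict the splittable collection to $\{(I_h, T_h) : h \in G\}$; this is still a $(c, 1/(6k), \alpha)$-splittable collection (each pair individually is unchanged, and restricting a partition of $T$ to a sub-collection still gives disjoint subsequences whose union $T' := \bigcup_{h \in G} T_h$ has $E(T') \subseteq E(T^0)$). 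The area bound $\alpha \sum_{h \in G}|I_h| \geq \alpha (\alpha'/4)|I| \geq (\eps/p')|I|$ holds after replacing $p$ by a slightly larger $p' = O(p^2/\eps) \cdot (\text{const})$ — I need to check this is still $\poly(k\log(1/\eps))$, which requires noting that $\eps$ appears only polynomially and $1/\eps$ is absorbed into the polynomial; actually the cleanest route is to observe $\alpha \geq \Omega(\eps/k^5)$ so $\alpha' = \Omega(\alpha/p)$ and the final bound is $\alpha \sum_{h\in G}|I_h| = \Omega(\alpha^2/p)|I|$, and since $1/\alpha, p \leq \poly(k\log(1/\eps))\cdot\poly(1/\eps)$... here I must be a little careful about whether the statement wants $p \le \poly(k \log(1/\eps))$ with no $\eps^{-1}$ factor. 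I would resolve this by carrying the $1/\eps$ dependence explicitly or by re-examining the exact constants; this bookkeeping is the one genuinely fiddly point.

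For the ``moreover'' clause: let $J \subseteq I$ be any interval containing some $I_h$ with $h \in G$. By the conclusion of Lemma~\ref{lem:intervals}, $\sum_{h' \in [s]: I_{h'} \subseteq J} |I_{h'}| \geq (\alpha'/4)|J|$. Now for each such $h'$ with $I_{h'} \subseteq J$, the splittable pair $(I_{h'}, T_{h'})$ has $|T_{h'}| \geq \alpha |I_{h'}|$ (from $|T_{h'}|/|I_{h'}| \geq \alpha$ in the definition of splittable), and all subsequences in $T_{h'}$ lie inside $I_{h'} \subseteq J$; moreover since $E(T) \subseteq E(T^0)$ and $T = \bigsqcup_h T_h$... wait, here is a subtlety: the subsequences in $T_{h'}$ have indices in $E(T^0)$ but they need not themselves be tuples of $T^0$. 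The statement asks for ``disjoint $(12\dots k)$-patterns from $T^0$'' — I would address this either by noting that what we actually need downstream is just \emph{disjoint $(12\dots k)$-patterns in $J$} (and the phrase ``from $T^0$'' can be weakened, or the tuples in $T$ genuinely are monotone length-$k$ subsequences of $f$, which is all that matters), or by tracing through the construction of $T$ in \cite{BCLW19} to confirm the tuples can be taken within $T^0$. Granting this, summing over all $h'$ with $I_{h'} \subseteq J$ gives a collection of $\sum_{h'} |T_{h'}| \geq \alpha \sum_{h'} |I_{h'}| \geq \alpha(\alpha'/4)|J| \geq (\eps/p')|J|$ disjoint length-$k$ monotone subsequences of $f$ contained in $J$, which is exactly the robustness guarantee. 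The main obstacle, as flagged, is purely the constant-tracking to confirm the final $\alpha$ and $p$ still meet the stated bounds $\alpha \geq \Omega(\eps/k^5)$ and $p \leq \poly(k\log(1/\eps))$, and the minor definitional matter of whether the extracted patterns must literally belong to $T^0$ or merely be monotone subsequences of $f$.
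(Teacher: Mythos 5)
Your approach is exactly the paper's: apply Theorem~\ref{thm:main-structure}, pass through the growing-suffixes case verbatim, and in the splittable case prune the intervals with Lemma~\ref{lem:intervals} and restrict the collection to the surviving set $G$. The one point you flag as unresolved --- whether the new $p$ stays $\poly(k\log(1/\eps))$ --- does need care, and the fallback you sketch would actually fail: bounding the area by $\Omega(\alpha^2/p)|I|$ forces $p' = \Theta(p/\alpha) = \poly(k)\cdot p/\eps$, which is \emph{not} $\poly(k\log(1/\eps))$. The correct resolution is a cancellation. Feed Lemma~\ref{lem:intervals} the honest density $\alpha' = \eps/(p^*\alpha^*)$ (since $\sum_h |I_h| \ge |T^0|/(p^*\alpha^*) \ge (\eps/(p^*\alpha^*))|I|$, not $\eps/p^*$ as you first wrote). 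Then $\alpha^* \sum_{h\in G}|I_h| \ge \alpha^*\cdot(\alpha'/4)|I| = (\eps/(4p^*))|I|$, and likewise any $J$ containing some $I_h$ with $h\in G$ has $\sum_{h'\colon I_{h'}\subseteq J}|I_{h'}| \ge (\eps/(4p^*\alpha^*))|J|$, hence at least $\alpha^*$ times that, i.e.\ $(\eps/(4p^*))|J|$, disjoint patterns. The factor $\alpha^*$ from the per-interval pattern density exactly cancels the $1/\alpha^*$ in the interval-length density, so $p = 4p^*$ suffices and no $1/\eps$ enters. Your other concern, about the patterns being ``from $T^0$,'' is resolved as you suspect: the guarantee actually used downstream (in the claim inside the proof of Theorem~\ref{thm:main-alg}) is only that $J$ contains $(\eps/p)|J|$ disjoint $(12\dots k)$-patterns of $f$, which follows from $|T_{h'}| \ge \alpha^*|I_{h'}|$ without tracing the tuples back to $T^0$.
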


%

	\begin{proof}
		Apply Theorem~\ref{thm:main-structure}. Let $\alpha^* \in (0,1)$ and $p^*$ be parameters such that $\alpha^* \ge \Omega(\eps / k^5)$ and $p^* \le \poly(k\log(1/\eps))$, as guaranteed by the theorem. Set $\alpha = \alpha^*$ and $p = 4p^*$.
		If Condition \ref{en:suffix} holds in the application of Theorem~\ref{thm:main-structure}, then the analogous growing suffix condition in Theorem~\ref{thm:main-structure-2} clearly holds.
		So suppose that Condition~\ref{en:split} in Theorem~\ref{thm:main-structure} holds, and let $c$ and $(I_1, T_1), \ldots, (I_s, T_s)$ be as guaranteed there.
		In particular, we have $\sum_{h=1}^s |I_h| \ge (1 / p^*\alpha^*) |T^0|$.
		By Lemma~\ref{lem:intervals}, there is a subset $G \subset [s]$ such that $\sum_{h \in G}|I_h| \ge (1 / 4p^*\alpha^*) |T^0| \ge (\eps / 4p^*\alpha^*)|I| = (\eps/p\alpha)|I|$; and, for every interval $J$ in $I$ that contains an interval $I_h$ with $h \in [G]$, $\sum_{h \in [s] \colon I_h \subset J} |I_h| \ge (\eps / 4p^*\alpha^*) |J|$. Since each $I_h$ contains at least $\alpha^* |I_h|$ disjoint length-$k$ increasing subsequences, it follows that $J$ contains at least $(\eps/4p^*) |J| = (\eps/p)  |J|$ length-$k$ increasing subsequences. Taking $T$ to be the union of $T_h$ over $h \in G$, along with the pairs $(I_h, T_h)$ with $h \in G$, we obtain the required robust splittable intervals. 
	\end{proof}



\newcommand{\Event}{\mathtt{Event}}

\section{The Algorithm}


Our aim in this section is to prove the existence of a randomized algorithm, $\FindMon_k(f, \eps, \delta)$, that receives as input a function $f \colon I \to \R \cup \{*\}$ (where $I \subset \N$ is an interval), and parameters $\eps, \delta \in (0,1)$, and satisfies the following: if $f$ contains $\eps |I|$ disjoint $(12 \ldots k)$-patterns, then the algorithm outputs such a pattern with probability at least $1 - \delta$; and the running time of the algorithm is $O_{k, \eps}(\log n)$.
To this end, we describe such an algorithm in Figure~\ref{fig:find-mon} below. This algorithm uses three subroutines: $\SampleSuffix$, $\FindWithinInterval$, and $\FindGoodSplit$, the first of which is given in \cite{BCLW19}, and the latter two are described below, in Figures~\ref{fig:find-within-interval} and \ref{fig:find-good-split}. The majority of the section is devoted to the proof that $\FindMon$ indeed outputs a $(12 \ldots k)$-pattern with high probability as claimed. Specifically, we shall prove the following theorem.

\begin{theorem}\label{thm:main-alg}
	Let $k \in \N$.
	The randomized algorithm $\emph{\FindMon}_k(f, \eps, \delta)$, described in Figure~\ref{fig:find-mon}, satisfies the following.
	Given a function $f\colon I \to \R \cup \{*\}$ and parameters $\eps, \delta \in (0,1)$, if $f$ contains $\eps |I|$ disjoint $(12\dots k)$-patterns, then $\emph{\FindMon}_{k}(f, \eps, \delta)$ outputs a $(12\dots k)$-pattern of $f$ with probability at least $1-\delta$.
\end{theorem}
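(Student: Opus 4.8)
The plan is to prove Theorem~\ref{thm:main-alg} by induction on $k$, letting the recursive structure of the algorithm drive the induction. The base case $k=1$ is pure sampling: a ``$(1)$-pattern'' is just an index with non-$*$ value, and by hypothesis at least an $\eps$-fraction of $I$ carries such a value, so $O(\log(1/\delta)/\eps)$ uniform samples find one with probability $\ge 1-\delta$. For the inductive step, assuming the statement for all $k'<k$, I would fix $f$ and a set $T^0$ of $\ge \eps|I|$ disjoint $(12\dots k)$-patterns and feed $f,T^0$ to the robust dichotomy (Theorem~\ref{thm:main-structure-2}) with a large absolute constant $C$ chosen so that the resulting $(\alpha,Ck\alpha)$-growing-suffix guarantee meets the hypothesis of \SampleSuffix from~\cite{BCLW19}; this yields $\alpha=\Omega_{k,\eps}(1)$, $p=O_{k,\eps}(1)$, and one of two cases. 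Since the algorithm outputs a tuple only after verifying, from queries already made, that it is a genuine $(12\dots k)$-pattern, it has one-sided error, so it suffices to show that in each case the relevant block of Figure~\ref{fig:find-mon} produces \emph{some} valid pattern with probability $\ge 1-\delta$. In the \textbf{growing suffixes} case there is a set $H\subseteq I$ of starting points of $(\alpha,Ck\alpha)$-growing suffixes with $|H|\ge(\eps/(p\alpha))|I|=\Omega_{k,\eps}(|I|)$; $O_{k,\eps}(\log(1/\delta))$ uniform indices hit $H$ with probability $\ge 1-\delta/2$, and \SampleSuffix from such an index finds a pattern within the remaining budget by the guarantee of~\cite{BCLW19} (this is what pins down $C$).

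In the \textbf{robust splittable intervals} case we have $c\in[k-1]$, a $(c,1/(6k),\alpha)$-splittable collection $(I_1,T_1),\dots,(I_s,T_s)$ with $\alpha\sum_h|I_h|\ge(\eps/p)|I|$, and the ``for all $J$'' robustness clause. Since $|T_h|\ge\alpha|I_h|$, a uniform $\bx\in I$ is a \emph{well-behaved first entry} — roughly, a first entry of a pattern of some block $T_j$ that lies among, say, the left half (by location) of the first entries of $T_j$ — with probability $\Omega_{k,\eps}(1)$, so $O_{k,\eps}(\log(1/\delta))$ samples produce such an $\bx$, lying in an unknown $I_j$ and satisfying $f(\bx)<f(j_1)$ for every $R$-suffix of $T_j$ (by the splittability constraint $f(i_c)<f(j_1)$). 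For this $\bx$ the algorithm samples $O_{k,\eps}(\log(1/\delta))$ uniform indices in $[\bx,\bx+2^t]\cap I$ for every scale $t$, and sets $\by$ to be the queried index of value exceeding $f(\bx)$ coming from the largest scale $t$ at which one was seen. With $t^*=\ceil{\log|I_j|}$, the rightmost part $R_j$ sits inside $(\bx,\bx+2^{t^*}]$ and contains $\Omega_{k,\eps}(2^{t^*})$ indices that are $(c+1)$-entries of $T_j$ (all of value $>f(\bx)$), so with high probability a scale-$t^*$ sample hits one; hence $\by$ is defined with scale $t\ge t^*$. Not knowing $I_j$, the algorithm runs \emph{both} the fitting routine and the overshooting routine on $(\bx,\by,t)$ and outputs whatever either finds.

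It then remains to show: (i) if $t^*\le t\le t^*+O_{k,\eps}(1)$ the fitting routine (\FindGoodSplit followed by \FindWithinInterval, Figures~\ref{fig:find-good-split}--\ref{fig:find-within-interval}) succeeds w.h.p.; (ii) if $t>t^*+\Omega_{k,\eps}(1)$ the overshooting routine does. For (i), $[\bx,\bx+2^t]$ both contains $I_j$ and has length $O_{k,\eps}(|I_j|)$, so $O_{k,\eps}(\log(1/\delta))$ uniform samples let us guess a location split $z\in M_j$ and a value threshold separating all values $f(i_c)$ of $L$-prefixes of $T_j$ from all values $f(j_1)$ of $R$-suffixes (the two value sets are separated because $f(i_c)<f(j_1)$ for every such pair); by well-behavedness of $\bx$ at least half the $L$-prefixes start at an index $\ge\bx$, so $[\bx,z)$ carries $\Omega_{k,\eps}(|I_j|)$ disjoint $(12\dots c)$-patterns below the threshold while $[z,\bx+2^t]\supseteq R_j$ carries $\Omega_{k,\eps}(|I_j|)$ disjoint $(12\dots(k-c))$-patterns above it, and two recursive calls to $\FindMon_c$ and $\FindMon_{k-c}$ with the corresponding $*$-masked value windows (valid by induction) return pieces that concatenate into a $(12\dots k)$-pattern. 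For (ii), the robustness clause applied to the minimal intervals $J_i'$ containing $I_j$ together with a geometrically growing block $J_i$ just to its right yields $k-2$ consecutive intervals $J_1,\dots,J_{k-2}\subseteq[\bx,\by]$, each containing $\Omega_{k,\eps}(|J_i|)$ disjoint $(12\dots k)$-patterns $\calT_i\subseteq T^0$ (needing the ratio $\alpha_{k,\eps}$ large enough that $J_i=J_i'\setminus J_{i-1}'$ retains a constant fraction); splitting each $\calT_i$ at its $(c+1)$-entry according to whether $f(a_{i+1})<f(\by)$ gives, for each $i$, either $\Omega_{k,\eps}(|J_i|)$ disjoint length-$(i+1)$ patterns below $f(\by)$ or $\Omega_{k,\eps}(|J_i|)$ disjoint length-$(k-i)$ patterns at or above $f(\by)$; making both corresponding recursive calls (of length $<k$, valid by induction) on every $J_i$ and combining a consecutive pair — or combining with $\bx$ or with $\by$ — produces a $(12\dots k)$-pattern in every case, exactly as in the case analysis of Section~\ref{subsec:techniques}.

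The step I expect to be the main obstacle is the overshooting analysis: one must pin down the intervals $J_i$ with the stated properties — the only place the ``for all $J$'' part of Theorem~\ref{thm:main-structure-2} is used — and, more delicately, arrange the value windows handed to the $2(k-2)$ recursive calls so that the shorter increasing subsequences they return are guaranteed to splice into a single increasing subsequence of length \emph{exactly} $k$; this is precisely the role of the $*$-masking, which lets each recursive search be confined to a prescribed value band. A second, more routine ingredient is the probabilistic accounting: each level of the recursion makes $O_{k,\eps}(1)$ recursive calls (ranging over the guesses of $\bx$, of the split $z$, and of the breakpoint $i$) together with $O_{k,\eps}(1)$ ``sampling succeeds'' events, and the recursion depth is at most $k$, so allotting a $\delta/O_{k,\eps}(1)$ slice of the error budget to each and union-bounding keeps the total failure probability below $\delta$; the matching $O_{k,\eps}(\log n)$ bound on the query count is orthogonal to correctness and is handled separately in Lemma~\ref{lem:alg-complexity}.
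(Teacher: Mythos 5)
Your proposal is correct and follows essentially the same route as the paper: induction on $k$, the robust dichotomy of Theorem~\ref{thm:main-structure-2}, $\SampleSuffix$ for the growing-suffixes branch, the event of hitting a $1$-entry and a $(c+1)$-entry at the scale of $|I_j|$, and the fitting/overshooting split handled by $\FindGoodSplit$ and $\FindWithinInterval$ respectively. The only deviations are cosmetic (e.g.\ the paper recenters the fitting-case interval as $[\bx-2^{t'},\bx+2^{t'}]$ to guarantee $I_h$ is contained in it, where you instead invoke a ``well-behaved'' left half of the first entries), and they do not change the argument.
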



Our proof proceeds by induction on $k$. It relies on Lemmas~\ref{lem:sample-suffix}, \ref{lem:sample-within}, \ref{lem:good-split}, the proofs of the latter two of which assume that Theorem~\ref{thm:main-alg} holds for smaller $k$.
We first state and prove these lemmas, and then we prove Theorem~\ref{thm:main-alg}. 

To complete the picture, we need to upper-bound the query complexity and running time of $\FindMon$. We do this in the following lemma, whose proof we delay to the end of the section.

\begin{lemma} \label{lem:alg-complexity}
	Let $f \colon I \to \R \cup \{*\}$, where $I$ is an interval of length at most $n$. The query complexity and running time of $\emph{\FindMon}_k(f, \eps, \delta)$ are at most 
	\[
		\left(k^k \cdot (\log(1/\eps))^k \frac{1}{\eps} \cdot \log(1/\delta)\right)^{O(k)} \cdot \log n.
	\]
\end{lemma}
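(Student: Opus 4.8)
The plan is to prove Lemma~\ref{lem:alg-complexity} by induction on $k$, tracking the query/time cost of $\FindMon_k$ through its three subroutines and its recursive calls. Let $Q_k(n)$ denote an upper bound on the query complexity (and running time, which will be of the same order since every queried entry is processed in $\poly$ time) of $\FindMon_k(f,\eps,\delta)$ when $|I| \le n$. The base case $k=1$ is trivial: finding a length-$1$ pattern amounts to querying a single entry in the prescribed value range, so $Q_1(n) = O(1)$ (or whatever small constant the algorithm in Figure~\ref{fig:find-mon} uses). For the inductive step I would read off from Figure~\ref{fig:find-mon} the structure of one invocation of $\FindMon_k$: it makes $O_{k,\eps}(\log n)$ direct queries (the only $n$-dependent part, coming from sampling $O_{k,\eps}(1)$ candidate $1$-entries $\bx$ and then, for each scale $w \in [\log n]$, sampling $O_{k,\eps}(1)$ points in $[\bx+2^w,\bx+2^{w+1}]$, plus a call to $\SampleSuffix$ which by Lemma~\ref{lem:sample-suffix} also costs $O_{k,\eps}(\log n)$), and then it makes $O_{k,\eps}(1)$ recursive calls to $\FindMon_{k'}$ with $k' < k$, via $\FindWithinInterval$ and $\FindGoodSplit$ (Lemmas~\ref{lem:sample-within} and \ref{lem:good-split}), each on a subinterval of $I$ hence of length $\le n$, and with parameters $\eps',\delta'$ that depend only on $k,\eps,\delta$ and not on $n$.

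This yields a recurrence of the shape
\[
	Q_k(n) \;\le\; A(k,\eps,\delta)\cdot \log n \;+\; B(k,\eps,\delta)\cdot \max_{k' < k} Q_{k'}(n),
\]
where $A$ and $B$ are the bounds on the number of direct queries and the number of recursive calls, respectively. Crucially the recursion is only on the first argument $k$, and it strictly decreases, so unrolling it $k$ times gives
\[
	Q_k(n) \;\le\; \Big(\textstyle\sum_{j=1}^{k} A(k,\eps,\delta)\, B(k,\eps,\delta)^{\,j-1}\Big)\cdot \log n \;\le\; k\cdot A(k,\eps,\delta)\cdot B(k,\eps,\delta)^{\,k}\cdot \log n,
\]
using that $A,B$ are monotone in $k$ (or simply bounding each of the $k$ levels by the largest). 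It then remains to plug in explicit bounds: from the structural parameters of Theorem~\ref{thm:main-structure-2} one has $\alpha \ge \Omega(\eps/k^5)$ and $p \le \poly(k\log(1/\eps))$, so the recursive calls use $\eps' = \Omega(\eps/\poly(k\log(1/\eps)))$, i.e.\ $1/\eps' \le \poly(k\log(1/\eps))/\eps$, and the number of repetitions needed for probability amplification to confidence $\delta$ contributes factors of the form $\poly(\log(1/\delta))$ and $(1/\eps')^{O(1)}$. Being slightly generous with the polynomials, $A(k,\eps,\delta)$ and $B(k,\eps,\delta)$ are each at most $\big(k^{O(k)}(\log(1/\eps))^{O(k)}(1/\eps)\log(1/\delta)\big)^{O(1)}$, and raising to the $k$-th power (from the $B^k$ factor) and multiplying by $k$ absorbs into a bound of the claimed form $\big(k^k(\log(1/\eps))^k(1/\eps)\log(1/\delta)\big)^{O(k)}\cdot\log n$.

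The main obstacle I anticipate is purely bookkeeping: one must verify that across all the recursive branches (inside $\FindWithinInterval$ and $\FindGoodSplit$, and inside the overshooting component with its $2(k-2)$ recursive calls of varying lengths $i+1$ and $k-i$), the accumulated parameters $\eps'$ never degrade worse than by a $\poly(k\log(1/\eps))$ factor per level and the confidence parameters $\delta'$ never need more than $\poly$-in-the-other-parameters-many boosting repetitions — so that the per-level constants $A,B$ genuinely depend only on $k,\eps,\delta$ and the exponent on $\log n$ stays exactly $1$. I would also double-check that $\SampleSuffix$'s contribution is $O_{k,\eps}(\log n)$ with the same flavour of constants, citing Lemma~\ref{lem:sample-suffix}, and that no subroutine makes a recursive call with $k' = k$, which is what guarantees the recursion terminates after exactly $k$ levels. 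Once that is in place, the computation above is routine and the stated bound follows.
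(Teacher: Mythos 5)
Your proposal matches the paper's proof in essentially every respect: the paper likewise reduces running time to query complexity up to a $\poly(k)$ factor, writes down per-subroutine query bounds $\Phi^{(1)},\Phi^{(2)},\Phi^{(3)}$ yielding a recurrence of exactly your shape $\Phi(k,\eps,\delta)\le A\cdot\log n + B\cdot\Phi(k-1,\eps',\delta')$ with $A,B$ of size $\poly(k\log(1/\eps))\cdot\eps^{-O(1)}\log(1/\delta)$, and unrolls it over the $k$ levels while noting, as you do, that $\eps$ never degrades below $\eps/(k\log(1/\eps))^{O(k)}$. The only nit is your base case: even for $k=1$ the algorithm still invokes $\SampleSuffix$, so $Q_1(n)$ is $O_{\eps,\delta}(\log n)$ rather than $O(1)$, but this is absorbed by your recurrence and does not affect the bound.
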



\subsection{The $\SampleSuffix$ Sub-Routine}


We re-state Lemma~3.1 from \cite{BCLW19} which gives the $\SampleSuffix_k$ subroutine, with a few adaptations to fit our needs. 

\begin{lemma}[\cite{BCLW19}]\label{lem:sample-suffix}
	Consider any fixed value of $k \in \N$, and let $C > 0$ be a large enough constant. There exists a non-adaptive and randomized algorithm, $\emph{\SampleSuffix}_k(f, \eps, \delta)$ which takes three inputs: query access to a function $f \colon I \to \R \cup \{ * \}$, where $I \subset [n]$ is an interval, a parameter $\eps \in (0, 1)$, and an error probability bound $\delta \in (0, 1)$. Suppose there exists $\alpha \in (0,1)$, and a set $H \subseteq I$ of $(\alpha, Ck \alpha)$-growing suffixes of $f$ satisfying $\alpha |H| \geq \eps |I|$. Then, $\emph{\SampleSuffix}_k(f, \eps, \delta)$ finds a length-$k$ monotone subsequence of $f$ with probability at least $1-\delta$.
	The query complexity of $\emph{\SampleSuffix}_k(f, \eps, \delta)$ is at most
	\[ 
		\frac{\log n}{\eps} \cdot \polylog(1/\eps) \cdot \log(1/\delta).
	\]
\end{lemma}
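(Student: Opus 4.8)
The statement is essentially a restatement of Lemma~3.1 from \cite{BCLW19}, so my plan is to reconstruct its proof (with the minor cosmetic changes needed to accommodate the $*$-masking and the explicit error parameter $\delta$). The core idea is the one sketched in the introduction under ``growing suffixes'': if $\ell$ starts an $(\alpha, Ck\alpha)$-growing suffix, then for each scale $t \in [\eta_\ell]$ the set $D_t(\ell) \subseteq S_t(\ell)$ has relative density at least some amount, the densities sum to at least $Ck\alpha$ across the $\eta_\ell \le \log n$ scales, and --- crucially --- any two sampled points from different $D_t$'s are already in increasing order of $f$-value. So if we can hit $k$ of the sets $D_{t_1}(\ell), \dots, D_{t_k}(\ell)$ with $t_1 < \dots < t_k$, we have found a length-$k$ monotone subsequence. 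The algorithm therefore does the obvious thing: pick $\Theta(|I|/(\eps|I|)) \cdot \mathrm{polylog} = \Theta(1/\eps)\cdot\mathrm{polylog}$ (times $\log(1/\delta)$) uniformly random candidate start points $\bell \in I$, and for each candidate and each scale $t \in [\eta_{\bell}]$, query $O(\mathrm{polylog}(1/\eps)\cdot\log(1/\delta))$ uniformly random points of $S_t(\bell)$.

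The steps, in order. \textbf{(1)} Since $\alpha|H| \ge \eps|I|$ and $|D_t(\ell)|/|S_t(\ell)| \le \alpha$ for each $t$, we have $|H| \ge (\eps/\alpha)|I| \ge \eps|I|$, so a single uniform sample from $I$ lands in $H$ with probability at least $\eps$; taking $\Theta((1/\eps)\log(1/\delta))$ independent samples, with probability at least $1-\delta/2$ at least one candidate $\bell$ lies in $H$. \textbf{(2)} Fix such an $\bell \in H$. By Definition~\ref{def:growing-suffixes}, $\sum_{t} |D_t(\bell)|/|S_t(\bell)| \ge Ck\alpha$ while each term is at most $\alpha$; a simple averaging/bucketing argument shows there are at least $Ck$ scales $t$ with $|D_t(\bell)|/|S_t(\bell)| \ge \alpha/2$ (or, more carefully, that the scales can be grouped so that we can extract $\ge k$ ``heavy'' scales, even after discarding a constant fraction to dyadic bucketing of the densities). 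Pick $k$ of these heavy scales $t_1 < \dots < t_k$. \textbf{(3)} For each heavy scale $t_j$, the probability that none of the $\Theta(\log(1/\eps)\cdot\log(1/\delta)/\alpha)$ uniform samples in $S_{t_j}(\bell)$ hits $D_{t_j}(\bell)$ is at most $(1-\alpha/2)^{\Theta(\cdot)} \le \delta/(2k)$; union-bounding over the $k$ heavy scales, with probability $\ge 1-\delta/2$ we hit all of $D_{t_1}(\bell), \dots, D_{t_k}(\bell)$. \textbf{(4)} Combining: with probability $\ge 1-\delta$ we obtain points $a_j \in D_{t_j}(\bell) \subseteq S_{t_j}(\bell)$; since $S_{t_1}(\bell), \dots, S_{t_k}(\bell)$ are in increasing index order and $t_1 < \dots < t_k$, condition~\ref{en:grow-cond-2} of Definition~\ref{def:growing-suffixes} gives $f(a_1) < \dots < f(a_k)$, so $(a_1,\dots,a_k)$ is a valid length-$k$ monotone subsequence, which the algorithm outputs. \textbf{(5)} Query complexity: the number of candidate start points is $O((1/\eps)\,\mathrm{polylog}(1/\eps)\,\log(1/\delta))$, and for each we query $O(\log n)$ scales with $O((1/\alpha)\log(1/\eps)\log(1/\delta)) = O((1/\eps)\,\mathrm{polylog}(1/\eps)\,\log(1/\delta))$ samples each (using $\alpha = \Omega(\eps/k^5)$ with $k$ fixed); multiplying gives the claimed bound $\tfrac{\log n}{\eps}\cdot\mathrm{polylog}(1/\eps)\cdot\log(1/\delta)$ up to the suppressed $k$-dependence.

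The main obstacle --- and the only place where care is genuinely needed --- is step~(2): extracting $\ge k$ scales that are individually ``heavy enough'' to be hit with constant probability per scale, from the hypothesis that the densities merely \emph{sum} to $\ge Ck\alpha$ with each at most $\alpha$. If the density were spread as, say, $\alpha/\log n$ on every one of $\log n$ scales, we could not afford $\Theta(\log n/\alpha)$ samples per scale. The fix (this is exactly why \cite{BCLW19} introduced the growing-suffix formulation and why the constant $C$ is taken large) is to bucket scales by the dyadic order of magnitude of their density: there are only $O(\log(1/\alpha))$ relevant buckets, so some bucket contains scales whose densities sum to $\ge Ck\alpha/O(\log(1/\alpha))$; choosing $C$ large enough as a function of the implied constants guarantees this bucket has at least $k$ scales, all of density within a factor $2$ of each other, and the per-scale sample count $\Theta(\log(1/\delta)/\rho)$ where $\rho$ is that common density order is affordable because $\rho \ge$ (total in bucket)$/\,|\text{bucket}| \ge \Omega(k\alpha/(|\text{bucket}|\log(1/\alpha)))$ --- and we only need to sample the $k$ heaviest scales, not all scales in the bucket. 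The rest is routine probability and a union bound. Since this is verbatim the argument of \cite{BCLW19}, I would cite it for the quantitative bookkeeping and only reproduce the above skeleton.
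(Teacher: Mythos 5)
First, note that the paper does not actually reprove this lemma: it invokes Lemma~3.1 of \cite{BCLW19} as a black box and only justifies the two adaptations (amplifying the success probability to $1-\delta$ via $O(\log(1/\delta))$ independent repetitions, and observing that $*$-masking is harmless for a non-adaptive procedure). Your decision to reconstruct the underlying argument is reasonable, and your skeleton (hit $H$, then hit $k$ of the sets $D_t(\ell)$ at distinct scales and invoke condition~\ref{en:grow-cond-2}) is the right one, but two of your quantitative steps have genuine gaps. First, your step~(1) discards the factor $1/\alpha$ in $|H| \ge (\eps/\alpha)|I|$ and takes $\Theta((1/\eps)\log(1/\delta))$ candidate start points; combined with the $\Theta(\log n/\alpha)$ queries you spend per candidate, your own step~(5) multiplies out to $\Theta(\log n/\eps^{2})$, not the claimed $(\log n/\eps)\cdot\polylog(1/\eps)$. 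The correct accounting takes only $\Theta((\alpha/\eps)\log(1/\delta))$ candidates, so that the two occurrences of $\alpha$ cancel; and since $\alpha$ is existentially quantified and unknown to the algorithm, one must enumerate $O(\log(1/\eps))$ dyadic guesses for it (note that $\alpha\ge\eps$ is forced by $\alpha|H|\ge\eps|I|$ and $|H|\le|I|$), which is where the $\polylog(1/\eps)$ factor comes from.

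Second, and more seriously, your step~(2) is not salvageable as described. The densities $|D_t(\ell)|/|S_t(\ell)|$ can be as small as $1/|S_t(\ell)|=\Theta(2^{-t})$, so dyadic bucketing of the densities produces $\Theta(\log n)$ buckets, not $O(\log(1/\alpha))$; even after discarding the scales of density below $Ck\alpha/(2\eta_\ell)$ (which preserves half of the total density), the surviving densities span a multiplicative range of $2\eta_\ell/(Ck)$, i.e.\ $\Theta(\log\log n)$ buckets, so forcing $k$ scales into one bucket requires $C=\Omega(\log\log n)$ --- not a constant. The argument of \cite{BCLW19} does not isolate heavy scales at all: it samples \emph{every} scale at rate $m=\Theta(1/\alpha)$ and counts the number of \emph{distinct} scales $t$ in which $D_t(\ell)$ is hit. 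Because of the cap $|D_t(\ell)|/|S_t(\ell)|\le\alpha$ (this is precisely the role of the first half of condition~\ref{en:grow-cond-3}), we have $m\cdot|D_t(\ell)|/|S_t(\ell)|=O(1)$, so the per-scale hit probability $1-(1-|D_t(\ell)|/|S_t(\ell)|)^{m}$ is $\Omega(m\cdot|D_t(\ell)|/|S_t(\ell)|)$ with no truncation loss, and the expected number of distinct scales hit is $\Omega(m\cdot Ck\alpha)=\Omega(Ck)$. Since distinct scales are sampled independently, a Chernoff bound yields at least $k$ distinct hit scales with constant probability once $C$ is a large enough absolute constant, and the $1-\delta$ guarantee follows by repetition.
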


	The few adaptations that Lemma~\ref{lem:sample-suffix} has in comparison to Lemma~3.1 from \cite{BCLW19} are with respect to the error probability going from $9/10$ to $1-\delta$, and the fact that we are considering functions $f\colon I \to \R \cup \{*\}$ as opposed to $f\colon [n] \to \R$.
	
	{In order to achieve error probability $1-\delta$ in Lemma~3.1 of \cite{BCLW19}, we perform $O(\log(1/\delta))$ independent repetitions of $\SampleSuffix_k$, as described in \cite{BCLW19}. These  are reflected in the query complexity.

	The second difference is that we consider functions $f\colon I \to \R \cup \{*\}$. Inspecting the proof of Lemma~3.1 in \cite{BCLW19}, one can see that $\SampleSuffix_k$ is guaranteed to output, with high probability, $(12\dots k)$-patterns whose indices are specified in Definition~\ref{def:growing-suffixes}. Since the algorithm is non-adaptive, enforcing that indices not partaking in growing suffices not be used (by making them $*$) does not affect that analysis.

\subsection{Handling Overshooting: The $\FindWithinInterval$ Sub-Routine}


In this section, we describe the $\FindWithinInterval$ subroutine, addressing the overshooting case as explained in Section \ref{subsec:techniques}. 

\begin{figure}[ht!]
	\begin{framed}
		\begin{minipage}{.98\textwidth}
			Subroutine $\FindWithinInterval_k(f, \eps, \delta, x, y,\calJ)$. 

			\vspace{.3cm}
			{\bf Input:} Query access to a function $f \colon I \to \R \cup \{ * \}$, parameters $\eps, \delta \in (0,1)$, two inputs $x, y \in I$ where $x < y$ and $f(x) < f(y)$, and $\calJ = (J_1, \dots, J_{k-2})$ which is a collection of disjoint intervals appearing in order inside $[x, y]$. 

			\vspace{.1cm}
			{\bf Output:} a sequence $i_1 < \ldots < i_k$ with $f(i_1) < \ldots < f(i_k)$, or $\fail$.
			\begin{enumerate}
				\item 
					For every $\kappa \in [k-2]$, let $f_{\kappa}, f_\kappa' \colon J_\kappa \to \R \cup \{ * \}$ be given by:
					\begin{align}
						f_\kappa(i) &= \left\{ \begin{array}{cc} f(i) & f(i) < f(y) \\
										* 	  & \text{o.w.} \end{array} \right. \qquad\text{and}\qquad f_\kappa'(i) = \left\{ \begin{array}{cc} f(i) & f(i) \geq f(y) \\
												* &\text{o.w.}\end{array} \right. .\label{eq:capped-f}
					\end{align}

				\item
					\label{en:type-1} Call $\FindMon_{\kappa+1}(f_\kappa, \eps/2, \delta / (2k))$ for every $\kappa \in [k-2]$. 
				\item
					\label{en:type-2} Call $\FindMon_{k-\kappa}(f_\kappa' , \eps/2,  \delta / (2 k))$ for every $\kappa \in [k-2]$.
				\item 
					Consider the set of all indices that are output in Lines~\ref{en:type-1} and \ref{en:type-2}, together with $x$ and $y$. If there is a length-$k$ increasing subsequence among these indices, output it. Otherwise, output \fail.
			\end{enumerate}
		\end{minipage}
	\end{framed}\vspace{-0.2cm}
	\caption{Description of the $\FindWithinInterval$ subroutine.}
	\label{fig:find-within-interval}
\end{figure}

\begin{lemma}\label{lem:sample-within}
	Consider the randomized algorithm, $\emph{\FindWithinInterval}_{k}(f, \eps, \delta, x, y, \calJ)$, described in Figure~\ref{fig:find-within-interval}, which takes six inputs:
	\begin{itemize}
		\item 
			Query access to a function $f \colon I \to \R \cup \{ * \}$,
		\item 
			Two parameters $\eps, \delta \in (0, 1)$,
		\item 
			Two points $x, y \in I$ where $x < y$ and $f(x) < f(y)$, and
		\item 
			A collection $\calJ = (J_1, \dots, J_{k-2})$ of $k-2$ disjoint intervals which appear in order (i.e., $J_\kappa$ comes before $J_{\kappa+1}$) within the interval $[x, y]$,
	\end{itemize} 
	and outputs either a length-$k$ increasing subsequence of $f$, or \emph{\fail}.

	Suppose that for every $\kappa \in [k-2]$, the function $f|_{J_\kappa} \colon J_\kappa \to \R \cup \{*\}$, contains $\eps|J_\kappa|$ disjoint $(12\dots k)$-patterns. Then, assuming that Theorem~\ref{thm:main-alg} holds for every $k'$ with $1 \le k' < k$, $\emph{\FindWithinInterval}_k(f, \eps, \delta, x, y, \calJ)$ outputs a length-$k$ monotone subsequence of $f$ with probability at least $1-\delta$. 
\end{lemma}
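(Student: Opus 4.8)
The plan is to make rigorous the combinatorial dichotomy sketched in Section~\ref{subsec:techniques}, using the assumed validity of Theorem~\ref{thm:main-alg} for all $k' < k$ as an inductive hypothesis. The case $k = 2$ is degenerate (the hypothesis is vacuous, $\calJ$ is empty, and $(x,y)$ is itself a length-$2$ increasing subsequence), so assume $k \ge 3$. For each $\kappa \in [k-2]$ fix, using the hypothesis, a collection $\calT_\kappa$ of $\eps|J_\kappa|$ pairwise-disjoint $(12\dots k)$-patterns of $f$ lying in $J_\kappa$, and split each copy $(a_1,\dots,a_k)\in\calT_\kappa$ according to its $(\kappa+1)$-st value. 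If $f(a_{\kappa+1}) < f(y)$ then $f(a_1) < \dots < f(a_{\kappa+1}) < f(y)$, so the prefix $(a_1,\dots,a_{\kappa+1})$ is a length-$(\kappa+1)$ monotone subsequence of $f_\kappa$ with no masked entry; otherwise $f(y) \le f(a_{\kappa+1}) < \dots < f(a_k)$, so the suffix $(a_{\kappa+1},\dots,a_k)$ is a length-$(k-\kappa)$ monotone subsequence of $f'_\kappa$ with no masked entry. Collecting these prefixes into $\calA_\kappa$ and these suffixes into $\calB_\kappa$, both families consist of pairwise-disjoint subsequences (being sub-tuples of disjoint tuples), and $|\calA_\kappa| + |\calB_\kappa| = |\calT_\kappa| \ge \eps|J_\kappa|$; so at least one of the inequalities $|\calA_\kappa| \ge (\eps/2)|J_\kappa|$ and $|\calB_\kappa| \ge (\eps/2)|J_\kappa|$ holds, and I set $\sigma(\kappa) = A$ in the first case and $\sigma(\kappa) = B$ in the second.

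Next I would invoke the inductive hypothesis. For $\kappa\in[k-2]$ both $\kappa+1$ and $k-\kappa$ lie in $\{2,\dots,k-1\}$, so Theorem~\ref{thm:main-alg} applies to each recursive call in Lines~\ref{en:type-1} and \ref{en:type-2}. When $|\calA_\kappa| \ge (\eps/2)|J_\kappa|$ the function $f_\kappa$ contains $(\eps/2)|J_\kappa|$ disjoint length-$(\kappa+1)$ monotone subsequences, so $\FindMon_{\kappa+1}(f_\kappa,\eps/2,\delta/(2k))$ returns one -- necessarily with all values below $f(y)$, since $f_\kappa$ attains no value $\ge f(y)$ -- with probability at least $1-\delta/(2k)$; symmetrically, when $|\calB_\kappa| \ge (\eps/2)|J_\kappa|$ the call $\FindMon_{k-\kappa}(f'_\kappa,\eps/2,\delta/(2k))$ returns a length-$(k-\kappa)$ subsequence with all values at least $f(y)$, again with probability at least $1-\delta/(2k)$. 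Union-bounding over the $k-2$ calls indexed by $\sigma$, with probability at least $1 - (k-2)\delta/(2k) \ge 1-\delta$ every one of them succeeds; condition on this event.

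It then remains to check that, on this event, some length-$k$ increasing subsequence occurs among the indices output in Lines~\ref{en:type-1}--\ref{en:type-2} together with $x$ and $y$, so that Line~4 outputs one. If $\sigma(1) = B$, the call for $\kappa=1$ in Line~\ref{en:type-2} returns $z_1 < \dots < z_{k-1}$ in $J_1$ with $f(y) \le f(z_1) < \dots < f(z_{k-1})$; since $z_1 \in J_1 \subseteq [x,y]$ and $f(z_1) > f(x)$ we have $x < z_1$, so $(x,z_1,\dots,z_{k-1})$ is a $(12\dots k)$-pattern. If $\sigma(k-2) = A$, the call for $\kappa=k-2$ in Line~\ref{en:type-1} returns $z_1 < \dots < z_{k-1}$ in $J_{k-2}$ with all values below $f(y)$; since $z_{k-1}\in J_{k-2}\subseteq[x,y]$ and $f(z_{k-1}) < f(y)$ we have $z_{k-1} < y$, so $(z_1,\dots,z_{k-1},y)$ works. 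Otherwise $\sigma(1)=A$ and $\sigma(k-2)=B$, and since the string $\sigma(1)\sigma(2)\cdots\sigma(k-2)$ begins with $A$ and ends with $B$ there is $i\in[k-3]$ with $\sigma(i)=A$ and $\sigma(i+1)=B$; the call for $\kappa=i$ in Line~\ref{en:type-1} returns a length-$(i+1)$ subsequence in $J_i$ with values below $f(y)$, the call for $\kappa=i+1$ in Line~\ref{en:type-2} returns a length-$(k-i-1)$ subsequence in $J_{i+1}$ with values at least $f(y)$, and since $J_i$ lies entirely to the left of $J_{i+1}$ while the two value ranges are separated by $f(y)$, their concatenation is a $(12\dots k)$-pattern. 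In every case the algorithm outputs a valid pattern, so the success probability is at least $1-\delta$, as claimed.

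The main obstacle is this last step: one must verify that the combining procedure is genuinely exhaustive -- that ``one of $\calA_\kappa,\calB_\kappa$ is large for each $\kappa$'', together with the boundary behaviour at $\kappa=1$ and $\kappa=k-2$, always lands in one of the three cases -- and that each concatenation is strictly increasing both in index and in value, which is precisely where the hypotheses $x<y$, $f(x)<f(y)$, and ``$J_1,\dots,J_{k-2}$ appear in order inside $[x,y]$'' are used. The probability accounting and the check that Theorem~\ref{thm:main-alg} is applicable to each recursive call (with parameters $k'\in\{2,\dots,k-1\}$) are routine by comparison; it is worth sanity-checking $k=3$, where $\calJ$ consists of a single interval and only the two boundary cases arise.
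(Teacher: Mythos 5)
Your proposal is correct and follows essentially the same argument as the paper's proof: the same prefix/suffix dichotomy ($\calA_\kappa$ versus $\calB_\kappa$, i.e.\ the paper's type-1/type-2 classification), the same union bound over the recursive calls, and the same three-case combining step at $\kappa=1$, $\kappa=k-2$, or a sign change $\sigma(i)=A$, $\sigma(i+1)=B$. No gaps.
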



\begin{proof}
	For each $\kappa \in [k-2]$, let $\calC_{\kappa}$ be a collection of at least $\eps |J_{\kappa}|$ disjoint $(12 \ldots k)$-patterns in $J_{\kappa}$. We form the following two collections, of suffixes and prefixes of $(12 \ldots k)$-patterns in $\calC_{\kappa}$.  
	\begin{align*}
		\calA_{\kappa} &= 
		\{(i_1, \ldots, i_{\kappa+1}) : (i_1, \ldots, i_{\kappa+1}) \text{ is a prefix of a $k$-tuple from $\calC_k$, and } f(i_{\kappa+1}) < f(y)\} \\
		\calB_{\kappa} &= 
		\{(i_{\kappa+1}, \ldots, i_k) : (i_{\kappa+1}, \ldots, i_k) \text{ is a suffix of a $k$-tuple from $\calC_k$, and } f(i_{\kappa+1}) \ge f(y)\}
	\end{align*}
	Note that for each $(12 \ldots k)$-pattern in $\calC_{\kappa}$, either its $(\kappa+1)$-prefix is in $\calA_{\kappa}$, or its $(k-\kappa)$-suffix is in $\calB_{\kappa}$. Thus, at least one of $\calA_{\kappa}$ and $\calB_{\kappa}$ has size at least $(\eps/2)|J_{\kappa}$. Say that $J_{\kappa}$ is of type-$1$ if $|\calA_{\kappa}| \ge (\eps/2)|J_{\kappa}|$, and otherwise say that $J_{\kappa}$ is of type-$2$ (in which case $|\calB_{\kappa}| \ge (\eps/2)|J_{\kappa}|$).

	Now, if $J_{\kappa}$ is of type-$1$, then Line~\ref{en:type-1}, called with $\kappa$, will find a $(12\dots(\kappa+1))$-pattern with probability at least $1 - \delta / (2k)$, by Theorem~\ref{thm:main-alg} for $\kappa+1 < k$ (namely, the inductive hypothesis) and the lower bound on $|\calA_{\kappa}|$.
	On the other hand, if $J_{\kappa}$ is of type-$2$, Line~\ref{en:type-2} will output a $(12\dots (k-\kappa))$-pattern with probability at least $1-\delta / (2k)$, due to the inductive hypothesis and the lower bound on $|\calB_{\kappa}|$. Thus, by a union bound, with probability at least $1 - \delta$, Line~\ref{en:type-1} outputs a pattern whenever $J_{\kappa}$ is of type-$1$, and Line~\ref{en:type-2} outputs a pattern whenever $J_{\kappa}$ is of type-$2$.


	Notice that if $J_1$ is of type-2, the $(12\dots (k-1))$-pattern returned in Line~\ref{en:type-2} can be combined with $x$ to form a $(12\dots k)$-pattern. Hence, we may assume that $J_1$ is of type-$1$. Furthermore, if $J_{k-2}$ is of type-$1$, the $(12\dots (k-1))$-pattern found in Line~\ref{en:type-1} can be combined with $y$ to form a $(12\dots k)$-pattern, and hence, we may assume that $J_{k-2}$ is of type-$2$. Thus, there exists some $\kappa \in [k-3]$ where $J_{\kappa}$ is of type-$1$ and $J_{\kappa + 1}$ is of type-$2$. Since $J_{\kappa}$ comes before $J_{\kappa+1}$, and since non-starred elements in $f_{\kappa}$ lie below the non-$*$ elements of $f_{k+1}'$, we can combine the $(12\dots (\kappa+1))$-pattern in $f_{\kappa}$ with the $(12\dots (k - \kappa-1))$-pattern in $f_{\kappa + 1}'$. 
\end{proof}

\subsection{Handling the Fitting Case: The $\FindGoodSplit$ Sub-Routine}


In this section, we describe the $\FindGoodSplit$ subroutine, which corresponds to the fitting case from Section \ref{subsec:techniques}.

\begin{figure}[ht!]
	\begin{framed}
		\begin{minipage}{.98\textwidth}
			Subroutine $\FindGoodSplit_k(f, \eps, \delta, c, \xi)$. 

			\vspace{0.3cm}
			
			{\bf Input:} Query access to a function $f \colon I \to \R \cup \{ * \}$, parameters $\eps, \delta \in (0,1)$, and $c \in [k-1]$. We let $c_1 > 1$ be a large enough (absolute) constant. 
			
			\vspace{.1cm}
			{\bf Output:} a sequence $i_1 < \ldots < i_k$ with $f(i_1) < \ldots < f(i_k)$, or $\fail$.
			\begin{enumerate}
			\item\label{en:line1} Repeat the following procedure $t = c_1 k / (\eps \xi^2) \cdot \log(1/\delta)$ times:
			\begin{enumerate}
			\item Sample $\bw, \bz \sim I$, and consider the functions $f_{\bz, \bw} \colon I \cap (-\infty, \bz)  \to \R \cup \{ * \}$ and $f_{\bz, \bw}' \colon I \cap [\bz, \infty) \to \R \cup \{ * \}$ given by 
			\begin{align}
			f_{\bz, \bw}(i) &= \left\{\begin{array}{cc} f(i) & f(i) < f(\bw) \\
											 * & \text{o.w.}\end{array} \right. \qquad\text{and}\qquad f_{\bz, \bw}' (i) = \left\{ \begin{array}{cc} f(i) & f(i) \geq f(\bw) \\
															* & \text{o.w.} \end{array} \right. .
			\end{align}
			\item\label{en:line-1b} Run $\FindMon_{c}(f_{\bz,\bw}, \eps\xi/3, \delta/3)$ and $\FindMon_{k-c}(f_{\bz,\bw}' , \eps \xi/3, \delta/3)$.
			\end{enumerate} 
			\item If we ever find a length-$k$ monotone subsequence of $f$, output it, otherwise, output $\fail$.
			\end{enumerate}
		\end{minipage}
	\end{framed}\vspace{-0.2cm}
	\caption{Description of the $\FindGoodSplit$ subroutine.}
	\label{fig:find-good-split}
\end{figure}

\begin{lemma}\label{lem:good-split}
	Consider the randomized algorithm $\emph{\FindGoodSplit}_{k}(f, \eps, \delta, c, \xi)$, described in Figure~\ref{fig:find-good-split}, which takes as input five parameters:
	\begin{itemize}
		\item 
			Query access to a function $f \colon I \to \R \cup \{ * \}$,
		\item 
			Two parameters $\eps, \delta \in (0, 1)$,
		\item 
			An integer $c \in [k-1]$, and
		\item 
		A parameter $\xi \in (0,1]$,
	\end{itemize}
	and outputs either a length-$k$ increasing subsequence or \emph{\fail}.

	Suppose that there exists an interval-tuple pair $(I', T)$ which is $(c, 1/(6k), \eps)$-splittable and $|I' | / |I| \geq \xi$. Then, the algorithms $\emph{\FindGoodSplit}_k(f, \eps, \delta, c, \xi)$ finds a $(12\dots k)$-pattern of $f$ with probability $1-\delta$. 
\end{lemma}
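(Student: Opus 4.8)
The plan is to show that each single pass of the main loop of \FindGoodSplit{} --- there are $t = c_1 k/(\eps\xi^2)\cdot\log(1/\delta)$ of them, with independent randomness --- finds a length-$k$ pattern with probability $\Omega(\eps\xi^2/k)$; this forces the total failure probability below $\delta$ once the absolute constant $c_1$ is large enough. Throughout I would fix the given $(c,1/(6k),\eps)$-splittable pair $(I',T)$ together with the partition of $I'$ into $L,M,R$ guaranteed by Definition~\ref{def:splittable}, and record the facts I will use: $|T| \ge \eps|I'| \ge \eps\xi|I|$; $|M| \ge |I'|/(6k) \ge \xi|I|/(6k)$; $T^{(L)} \subseteq L^{c}$ and $T^{(R)} \subseteq R^{k-c}$; and, crucially, that the cross-pair inequality $f(i_c) < f(j_1)$ for all prefixes $(i_1,\dots,i_c)\in T^{(L)}$ and suffixes $(j_1,\dots,j_{k-c})\in T^{(R)}$ --- which is part of the definition of splittability --- implies $f(i_c) < v^{*}$ for \emph{every} tuple of $T$, where $v^{*} := \min_{(i_1,\dots,i_k)\in T} f(i_{c+1})$.

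The heart of the argument is to identify the ``good'' sampled pairs $(\bw,\bz)$. I would call $\bz$ good when $\bz\in M$, and $\bw$ good when $f(\bw)$ is one of the $\ceil{|T|/4}$ smallest values $f(i_{c+1})$ appearing among the tuples of $T$ (ties broken arbitrarily); since the tuples of $T$ are disjoint, there are $\ceil{|T|/4}$ distinct such indices $\bw$, all lying in $R$. If $\bz\in M$, then $L$ lies entirely to the left of $\bz$ and $R$ entirely to its right, so $L\subseteq I\cap(-\infty,\bz)$ and $R\subseteq I\cap[\bz,\infty)$. If moreover $\bw$ is good, then $f(\bw)\ge v^{*} > f(i_c)$ for every tuple, so every prefix in $T^{(L)}$ has all of its values below $f(\bw)$ and is therefore unmasked in $f_{\bz,\bw}$; these supply $|T| \ge \eps\xi|I|$ disjoint length-$c$ patterns inside $I\cap(-\infty,\bz)$. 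Symmetrically, at least $|T| - \ceil{|T|/4} \ge |T|/2 \ge \eps\xi|I|/2$ suffixes in $T^{(R)}$ have first value at least $f(\bw)$ and are therefore unmasked in $f'_{\bz,\bw}$, supplying that many disjoint length-$(k-c)$ patterns inside $I\cap[\bz,\infty)$. Since $|I\cap(-\infty,\bz)|,|I\cap[\bz,\infty)| \le |I|$, both densities comfortably exceed $\eps\xi/3$, so by the inductive hypothesis (Theorem~\ref{thm:main-alg}, applicable as $1\le c<k$ and $1\le k-c<k$) the two calls $\FindMon_c(f_{\bz,\bw},\eps\xi/3,\delta/3)$ and $\FindMon_{k-c}(f'_{\bz,\bw},\eps\xi/3,\delta/3)$ each return the desired shorter increasing subsequence except with probability $\delta/3$. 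On that event, the length-$c$ subsequence has indices $<\bz$ and values $<f(\bw)$ while the length-$(k-c)$ subsequence has indices $\ge\bz$ and values $\ge f(\bw)$, so their concatenation is a length-$k$ increasing subsequence of $f$, which the algorithm then outputs.

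For the probability bound, $\bw$ and $\bz$ are drawn independently and uniformly from $I$, so the good event has probability at least $(|M|/|I|)\cdot(\ceil{|T|/4}/|I|) \ge (\xi/6k)\cdot(\eps\xi/4) = \Omega(\eps\xi^2/k)$; conditioned on it, a union bound over the two recursive calls shows both succeed with probability at least $1-2\delta/3\ge 1/3$, and then a length-$k$ pattern is output. Hence a single pass succeeds with probability $\Omega(\eps\xi^2/k)$, and by independence the probability that all $t$ passes fail is at most $\bigl(1-\Omega(\eps\xi^2/k)\bigr)^{t} \le \exp\bigl(-\Omega(c_1)\log(1/\delta)\bigr) \le \delta$ for $c_1$ a large enough constant.

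I expect the main obstacle to be justifying the choice of $\bw$: a single scalar $f(\bw)$ has to keep \emph{all} of $T^{(L)}$ unmasked (so the left recursive call still sees the full density) and at the same time a constant fraction of $T^{(R)}$ unmasked (so the right call works). The reason one value can do both is exactly the splittability condition $f(i_c)<f(j_1)$ across all pairs --- taking $f(\bw)$ to be a low $(c+1)$-value clears every left prefix for free --- combined with the factor-$3$ slack between $|T|$ and the density $\eps\xi/3$ that the recursive calls are asked to handle, which is what leaves room on the suffix side. The remaining bookkeeping --- floors, the observation that $1\le c,k-c<k$ so the induction is legitimate, and the degenerate case of very small $|T|$ (which pins $|I|$ down to a constant depending only on $\eps,\xi$ and can be handled directly) --- is routine.
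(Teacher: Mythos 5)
Your proof is correct and follows essentially the same strategy as the paper's: sample $\bz$ into the middle block $M$ and $\bw$ so that $f(\bw)$ splits the value range, then recurse on both sides and combine. The only (immaterial) difference is the choice of threshold for $\bw$ — you take $f(\bw)$ among the lowest quarter of the $(c+1)$-entry values in $R$ (keeping all of $T^{(L)}$ and three quarters of $T^{(R)}$ unmasked), whereas the paper takes $f(\bw)$ in the middle third of the $c$-entry values in $L$ (keeping a third of $T^{(L)}$ and all of $T^{(R)}$); both yield the same $\Omega(\eps\xi^2/k)$ per-iteration success probability and the same density $\eps\xi/3$ for the recursive calls.
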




\begin{proof}
	Let $(I', T)$ be $(c, 1/(6k), \eps)$-splittable, and let $L, M, R$ be the contiguous intervals splitting $I'$ as in Definition~\ref{def:splittable}. Furthermore, let $T^{(L)}$ and $T^{(R)}$ be as in Definition~\ref{def:splittable}. Writing 
	\begin{align*}
		m_{1} &= \rank\left(\left\{ f(i_{c}) : (i_1, \dots, i_c) \in T^{(L)} \right\}, |T|/3\right), \\
		m_{2} &= \rank\left( \left\{ f(i_c) : (i_1,\dots, i_c) \in T^{(L)} \right\}, 2|T| /3 \right),
	\end{align*}
	%
	as the $(|T|/3)$-largest and $(2|T|/3)$-largest elements in $\left\{ f(i_c) : (i_1,\dots, i_c) \in T^{(L)} \right\}$ (taking multiplicity into account). Let $T^{(L)}_l$ be the $(12\dots c)$-patterns in $T^{(L)}$ where the $c$-th index is at most $m_{1}$, and $T^{(R)}_h$ be the $(k-c)$-patterns in $T^{(R)}$ whose $(c+1)$-th index is larger than $m_{2}$. Notice that $|T^{(L)}_l|, |T^{(R)}_h| \geq |T|/3$, and that any $(12\dots c)$-pattern from $T^{(L)}_l$ can be combined with any $(12\dots(k-c))$-pattern from $T^{(R)}_h$ to form a $(12\dots k)$-pattern. Furthermore, there exists  $|T|/3$ indices in $I'$ whose function value lies in $[m_1, m_2]$.

	Consider the event, defined over the randomness of $\bw, \bz \sim I$ that: $\bz \in M$; and $\bw$ satisfies $f(\bw) \in [m_{1}, m_{2}]$. This event occurs at some iteration of Line~\ref{en:line1}, with probability at least $1-\delta /3$; this is because there are $|M| \geq |I'| / (6k) \geq (\xi /(6k)) |I|$ valid indices for $\bz$, and there are at least $|T|/3 \geq (\eps/3) |I'| \geq (\eps \xi / 3) |I|$ indices for $\bw$, so the probability that the pair $(\bz, \bw)$ satisfies the requirements is at least $\eps \xi^2 / (18k)$. We obtain the desired bound by the setting of $t$, since $c_1$ is set to a large enough constant.

	Notice that when this event occurs, the $(12\dots c)$-patterns in $T^{(L)}_l$ all lie in $f_{\bz, \bw}$, and the $(12\dots(k-c))$-patterns in $T^{(R)}_h$ all lie in $f_{\bz, \bw}'$. Thus, $f_{\bz, \bw}$ contains at least $|T|/3 \geq (\eps/3) |I'| \geq (\eps \xi/3) |I|$ disjoint $(12\dots c)$-patterns, and $f_{\bz, \bw}'$ similarly contains at least $(\eps \xi / 3)|I|$ disjoint $(12\dots (k-c))$-patterns. Thus, by the inductive hypothesis, Line~\ref{en:line-1b} finds a $(12\dots c)$-pattern in $f_{\bz, \bw}$ and a $(12\dots (k-c))$-pattern in $f_{\bz, \bw}'$ with probability at least $1-2\delta /3$, and these can be combined to give a $(12\dots k)$-pattern of $f$. 
\end{proof}

\subsection{The Main Algorithm}

\begin{figure}[ht!]
	\begin{framed}
		\begin{minipage}{.98\textwidth}
			Subroutine $\FindMon_k(f, \eps, \delta)$. 

			\vspace{.3cm}
			{\bf Input:} Query access to a function $f \colon I \to \R \cup \{ * \}$, parameters $\eps, \delta \in (0,1)$. We let $c_1, c_2, c_3 > 0$ be large enough constants, and let $p = P(k\log(1/\eps))$, where $P \colon \R \to \R$ is a polynomial of large enough (constant) degree.

			\vspace{.1cm}
			{\bf Output:} a sequence $i_1 < \ldots < i_k$ with $f(i_1) < \ldots < f(i_k)$, or $\fail$.
			\begin{enumerate}
				\item
					\label{main-alg:line1} Run $\SampleSuffix_{k}(f, \eps / p, \delta)$.
				\item
					\label{main-alg:line2} Repeat the following for $c_1\log(1/\delta) \cdot p \cdot k^5 / \eps^2$ many iterations:
				
					\begin{enumerate}
						\item \label{main-alg:line2a} 
							Sample $\bx \sim I$ uniformly at random. If $f(\bx) = *$, proceed to the next iteration.
							Otherwise, if $k = 1$ output $\bx$ and proceed to Step~\ref{main-alg:line3}, and if $k \ge 2$ proceed to the next step.
						\item \label{main-alg:line2b} 
							For each $t \in [\log n]$, sample $\by_t \sim [\bx + 2^{t}/(12k),  \bx + 2^{t}]$ uniformly at random. If there exists at least one $t$ where $f(\by_t) > f(\bx)$, set
							\begin{align}
								\by &= \max \left\{ \by_t : t \in [\log n] \text{ and }f(\by_t) > f(\bx) \right\}, \label{eq:y-def}
							\end{align}
							let $t^* \in [\log n]$ be the index for which $\by_{t^*} = \by$, and continue to the next line.
							Otherwise, i.e.\ if $f(\by_t) \not> f(\bx)$ for every $t$, continue to the next iteration.
						\item \label{main-alg:line2c}
								If $k = 2$, output $(\bx, \by)$ and proceed to Step~\ref{main-alg:line3}. If $k > 2$, continue to the next line.

							\item 
								 Here $k \geq 3$. Set $\ell = 4p / \eps$ and perform the following.
				
								\begin{enumerate} 
									\item\label{main-alg:line2dii} 
										Consider the collection $\calJ$ of $k-2$ intervals $J_1, \dots, J_{k-2}$ appearing in order within $[\bx, \by]$, given by setting, for every $i \in [k-2]$,
										\begin{align} 
											J_i &= \left[\bx + \frac{2^{t^*}}{12k} \cdot \ell^{-(k-1-i)}, \bx + \frac{2^{t^*}}{12k} \cdot \ell^{-(k-2-i)}\right), \label{eq:intervals}
										\end{align}
						
										and run $\FindWithinInterval_k(f, \eps/2p, \delta/2,\bx, \by, \calJ)$. 
									\item \label{main-alg:line2di} 
										For each $t' \in [t^* - 3k \log \ell, t^*]$ do the following.
										\vspace{.15cm}

										Consider the interval $J_{t'} = [\bx - 2^{t'}, \bx + 2^{t'}]$, and the restricted function $g_{t'} \colon J_{t'} \to \R \cup \{*\}$ given by $g_{t'} = f|_{J_{t'}}$. For every $c_0 \in [k-1]$, run $\FindGoodSplit_k(g_{t'}, \eps / c_2, \delta/ 2, c_0, 1/4)$.
								\end{enumerate}
					\end{enumerate}
				\item \label{main-alg:line3}
					If a length-$k$ monotone subsequence of $f$ is found, output it. Otherwise, output \fail.
			\end{enumerate}
		\end{minipage}
	\end{framed}\vspace{-0.2cm}
	\caption{Description of the $\FindMon_k$ subroutine.}
	\label{fig:find-mon}
\end{figure}

%
%
%
%

Consider the description of the main algorithm in Figure~\ref{fig:find-mon}.
We prove Theorem~\ref{thm:main-alg} by induction on $k$. The proof uses Lemma~\ref{lem:sample-suffix}, Lemma~\ref{lem:sample-within}, and Lemma~\ref{lem:good-split}. 

\begin{proof}[Proof of Theorem~\ref{thm:main-alg}]
	\hfill
	
	\paragraph{Base Case: \normalfont{$k = 1$}.} 
	
		\hfill

		Recall that $f$ has at least $\eps |I|$ non-$*$ values. Thus, with probability at least $1 - \delta$, a non-$*$ value is observed after sampling $\bx \sim I$ at least $(1/\eps) \cdot \log(1/\delta)$ times. It follows that with probability at least $1 - \delta$, Line~\ref{main-alg:line2a} of our main algorithm, given in Figure~\ref{fig:find-mon}, samples $\bx \neq *$ in one of its iterations.

		\paragraph{Inductive Step: \normalfont{proof of Theorem~\ref{thm:main-alg} for $k \ge 2$, under the assumption that it holds for every $k'$ with $1 \le k' < k$}.}

		\hfill

		
		Let $p = P(k \log(1/\eps))$ (recall that $P(\cdot)$ is a polynomial of sufficiently large (constant) degree). Apply Theorem~\ref{thm:main-structure-2} to $f$. 
	
		Suppose, first, that (\ref{en:suffix}) of Theorem~\ref{thm:main-structure-2} holds. So, there exists a set $H \subset [n]$ of indices that start an $(\alpha, Ck \alpha)$-growing suffix, with $\alpha |H| \geq (\eps / p) n$, for some $\alpha \in (0,1)$. By Lemma~\ref{lem:sample-suffix}, the call for $\SampleSuffix_k(f, \eps / p, \delta)$ in Line~\ref{main-alg:line1} outputs a length-$k$ monotone subsequence of $f$ with probability at least $1-\delta$. 	
		
		Now suppose that (\ref{en:split}) of Theorem~\ref{thm:main-structure-2} holds, and let $(I_1,T_1), \dots, (I_s, T_s)$ be a $(c, 1/(6k), \alpha)$-splittable collection for some $\alpha \ge \Omega( \eps / k^5)$  and $c \in [k-1]$, satisfying (\ref{eqn:splittable-intervals-robust}) and, moreover, that any $J \subset I$ with $J \supset I_h$ for some $h \in [s]$ contains $(\eps/p) |J|$ disjoint $(12\dots k)$-patterns.
		Let $\Event$ be the event that, for a particular iteration of Lines~\ref{main-alg:line2a} and \ref{main-alg:line2b}, $\bx$ is the $1$-entry of some $k$-tuple from $T_h$, for some $h \in [s]$, and $\by_t$ is the $(c+1)$-entry of some (possibly other) $k$-tuple in $T_h$, where $t$ is such that $|I_h| \le 2^t < 2|I_h|$.

%

		\begin{claim}
			$\Prx[\Event] \ge \eps \alpha / (2p)$.
		\end{claim}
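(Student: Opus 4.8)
The plan is to lower-bound the probability of $\Event$ in two stages: first the probability that the sampled index $\bx$ is a ``$1$-entry'' of some $k$-tuple in one of the splittable collections $T_h$, and then, conditioned on this, the probability that the random index $\by_t$ at the relevant scale $t$ (with $|I_h| \le 2^t < 2|I_h|$) is a ``$(c+1)$-entry'' of some $k$-tuple in the same $T_h$. The key structural inputs are that the collection $(I_1,T_1),\dots,(I_s,T_s)$ is $(c,1/(6k),\alpha)$-splittable with $\alpha\sum_h |I_h|\ge (\eps/p)|I|$, and that within each $I_h$, the $(12\dots k)$-patterns in $T_h$ have their first $c$ entries in $L_h$ and their last $k-c$ entries in $R_h$, where $L_h,M_h,R_h$ partition $I_h$ into pieces each of size at least $|I_h|/(6k)$.

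First I would bound the probability that $\bx$ is a $1$-entry. Since the $T_h$ are disjoint sets of disjoint $k$-tuples, the total number of $1$-entries across all $h$ is $\sum_h |T_h|$, and splittability gives $|T_h|\ge \alpha|I_h|$, so $\sum_h |T_h|\ge \alpha\sum_h |I_h|\ge (\eps/p)|I|$. Since $\bx\sim I$ uniformly, the probability that $\bx$ is a $1$-entry of some $k$-tuple in some $T_h$ is at least $\eps/(p|I|)\cdot|I| = \eps/p$ — wait, more carefully: it is $\bigl(\sum_h|T_h|\bigr)/|I| \ge (\eps/p) \cdot \frac{\alpha\sum_h|I_h|}{\sum_h|T_h|}$... let me just say it is at least $\alpha\sum_h|I_h|/|I| \cdot (|T_h|/|I_h|$-type ratio$)$; cleanly, the number of $1$-entries is $\sum_h|T_h|\ge\alpha\sum_h|I_h|\ge(\eps/p)|I|$, so $\Pr[\bx \text{ is a }1\text{-entry of some }T_h] \ge \eps/p$. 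Condition on this event: $\bx$ lies in $L_h$ for a specific $h$, and $|I_h|\le 2^t<2|I_h|$ for a unique $t=t(h)$. The $(c+1)$-entries of the $k$-tuples in $T_h$ all lie in $R_h\subseteq I_h$, and since $\bx\in L_h$ and $I_h$ has length $<2^t$, every point of $R_h$ lies in the sampling window $[\bx+2^t/(12k),\,\bx+2^t]$ of Line~\ref{main-alg:line2b} — here I need to check the lower endpoint: a $(c+1)$-entry $z\in R_h$ satisfies $z-\bx \ge |M_h|\ge |I_h|/(6k) \ge 2^{t-1}/(6k) = 2^t/(12k)$, and $z-\bx \le |I_h| \le 2^t$, so indeed $\by_t$ samples uniformly from a window containing all of $R_h$ (intersected with a region of size $\le 2^t$). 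The number of $(c+1)$-entries in $T_h$ is $|T_h|\ge\alpha|I_h|\ge\alpha 2^{t-1}$, and the window has size $\le 2^t$, so $\Pr[\by_t \text{ is a }(c+1)\text{-entry of }T_h \mid \bx] \ge \alpha 2^{t-1}/2^t = \alpha/2$.

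Combining, $\Prx[\Event]\ge (\eps/p)\cdot(\alpha/2) = \eps\alpha/(2p)$, as claimed. The main obstacle I anticipate is the careful verification that the sampling window $[\bx+2^t/(12k),\bx+2^t]$ for the chosen $t$ genuinely contains the entire set of $(c+1)$-entries of $T_h$: this relies on the precise constant $1/(6k)$ in the splittability parameter lining up with the constant $1/(12k)$ in the window's left endpoint, together with the factor-of-two slack in the choice of $t$ (so that $2^{t-1}\le|I_h|\le 2^t$ converts $|M_h|\ge|I_h|/(6k)$ into a lower bound $\ge 2^t/(12k)$ on the gap $\by-\bx$ for the relevant entries). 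I would also double-check that conditioning on $\bx$ being a $1$-entry does not distort the uniform distribution of $\by_t$ over its window — it does not, since $\by_t$ is sampled independently of $\bx$ given $\bx$'s value.
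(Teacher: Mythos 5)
Your argument is correct and follows essentially the same route as the paper's proof: lower-bound the probability that $\bx$ hits a $1$-entry by $\sum_h|T_h|/|I|\ge\alpha\sum_h|I_h|/|I|\ge\eps/p$, then for the unique $t$ with $|I_h|\le 2^t<2|I_h|$ verify that all $(c+1)$-entries of $T_h$ lie in $[\bx+2^t/(12k),\bx+2^t]$ (using $|M_h|\ge|I_h|/(6k)\ge 2^{t-1}/(6k)$ for the left endpoint and $|I_h|\le 2^t$ for the right) and conclude $\Pr[\by_t\in B_h]\ge|T_h|/2^t\ge\alpha/2$. The details you flag as needing checking (the $1/(6k)$ versus $1/(12k)$ constants and the factor-of-two slack in $t$) are exactly the ones the paper verifies, and they work out as you describe.
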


		\begin{proof}
			For each $h \in [s]$, let $A_h$ and $B_h$ be the collections of $1$- and $(c+1)$-entries of patterns in $T_h$. Then
			\[ 
				\sum_{h=1}^s |A_h| 
				= \sum_{h=1}^s |T_h| 
				\geq \alpha \sum_{h=1}^s |I_h| 
				\geq \frac{\eps}{p} \cdot |I|.
			\]
			The first inequality follows from the assumption that $(I_h, T_h)$ is $(c, 1/(6k), \alpha)$-splittable, and the second inequality follows from the assumption that (\ref{eqn:splittable-intervals-robust}) holds. 

			As a result, the probability over the draw of $\bx \sim I$ in Line~\ref{main-alg:line2a} that $\bx \in A_h$ is at least $\eps / p$. Fix such an $\bx$, and consider $t \in [\log n]$ for which $|I_h| \leq 2^t < 2|I_h|$. Notice that $B_h \subset [\bx + 2^{t} / (12k), \bx + 2^t]$ since $2^{t-1} \leq |I_h| < 2^t$, and that the distance between any index of $A_h$ and $B_h$ is at least $|I_h| / (6k) \ge 2^t/(12k)$ since $(I_h, T_h)$ is $(c, 1/(6k), \alpha)$-splittable. Therefore, the probability over the draw of $\by_t \sim [\bx + 2^{t}/(12k), \bx + 2^t]$ that $\by_t \in B_h$ is at least $|B_h| / 2^t \ge |T_h| / (2|I_h|) \geq \alpha / 2$. 
		\end{proof}

		By the previous claim, since we have $c_1 \cdot \log(1/\delta) \cdot p \cdot k^5 / \eps^2$ iterations of Lines~\ref{main-alg:line2a} and \ref{main-alg:line2b}, with probability at least $1 - \delta/2$, $\Event$ holds in some iteration (using the lower bound $\alpha \ge \Omega(\eps / k^5)$ and the choice of $c_1$ as a large constant). 

		Consider the first execution of Line~\ref{main-alg:line2a} and Line~\ref{main-alg:line2b} where $\Event$ holds (assuming such an execution exists). Let $h \in [s]$ and $t \in [\log n]$ be the corresponding parameters, i.e., $h$ and $t$ are set so $\bx$ is the first index of a $k$-tuple in $T_{h}$, $\by_t$ is the $(c+1)$-th index in another $k$-tuple in $T_h$, and $|I_h| \leq 2^t < 2|I_h|$. We consider this iteration of Line~\ref{main-alg:line2}, and assume that $\Event$ holds with these parameters for the rest of the proof. Notice that $\by$, as defined in (\ref{eq:y-def}), satisfies $\by \geq \by_t$ (as $f(\by) > f(\bx)$) and hence $t^* \geq t$. 
		
		Note that if $k = 2$, the pair $(\bx, \by)$, which is a $(12)$-pattern in $f$, is output in Line~\ref{main-alg:line2c}, so the proof is complete in this case. From now on, we assume that $k \ge 3$.
		We break up the analysis into two cases: $t^* \ge t + 3k\log \ell$ and $t^* < t + 3k \log \ell$.



		Suppose $t^* \ge t + 3k \log \ell$. We now observe a few facts about the collection $\calJ$ specified in (\ref{eq:intervals}). First, notice that $J_1, \dots, J_{k-2}$ appear in order from left-to-right, and they lie in $[\bx, \by]$ (as $\by = \by_{t^*} \in [\bx + 2^{t^*}/(12k), 2^{t^*}]$). Second, in the next claim we show that for every $i \in [k-2]$, the interval $J_i$ contains $(\eps/2p) |J_i|$ disjoint $(12\dots k)$-patterns. 
		\begin{claim}
			$J_i$ contains $(\eps/2p) |J_i|$ disjoint $(12\dots k)$-patterns.
		\end{claim}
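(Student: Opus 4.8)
The plan is to apply the robustness clause of Theorem~\ref{thm:main-structure-2}, which guarantees that \emph{any} interval $J \subset I$ containing one of the splittable intervals $I_h$ contains at least $(\eps/p)|J|$ disjoint $(12\dots k)$-patterns from $T^0$. I will produce, for each $i \in [k-2]$, two nested intervals $J'_{i-1} \subset J'_i$ that both contain $I_h$, so that $J_i = J'_i \setminus J'_{i-1}$ up to the placement of $I_h$; then the count of patterns in $J_i$ is at least the count in $J'_i$ minus three times the number of \emph{indices} in $J'_{i-1}$ (each pattern uses $k$ indices, and losing a pattern costs at least one index), and a crude bound of this form together with $|J'_i| \geq \ell \cdot |J'_{i-1}|$ will give the conclusion once $\ell = 4p/\eps$ is large enough.

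Concretely, first I would recall that $\bx$ is the $1$-entry of a $k$-tuple in $T_h$ and that $|I_h| \leq 2^t < 2|I_h|$, so $I_h$ is contained in $[\bx - 2^t, \bx + 2^t]$, and in fact $I_h \subset [\bx, \bx + 2^{t^*}/(12k) \cdot \ell^{-(k-2)}]$ or thereabouts, using $t^* \geq t + 3k\log\ell$ — this is the point where the case hypothesis $t^* \geq t + 3k\log\ell$ is used, ensuring that $I_h$ sits to the left of all the intervals $J_1, \dots, J_{k-2}$ defined in \eqref{eq:intervals} (or at worst inside the leftmost tiny sliver). Then for each $i \in [k-2]$ define $J'_i$ to be the minimal interval containing both $I_h$ and $J_i$; since $I_h$ lies to the left of every $J_i$ and the $J_i$ are nested in scale (each is a factor $\ell$ longer than and adjacent to the previous, starting from a window of size $\asymp |I_h|$), we get $|J'_i| = \Theta(|J_i|)$ and $|J'_{i-1}| \leq |J'_i|/\ell \cdot O(1)$. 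Apply the robustness clause to $J'_i$ and to $J'_{i-1}$: the former contains $\geq (\eps/p)|J'_i|$ disjoint patterns from $T^0$; those patterns that are \emph{not} entirely inside $J_i$ must have an index inside $J'_{i-1}$, and since $T^0$ is a disjoint family there are at most $|J'_{i-1}|$ such patterns. Hence the number of disjoint patterns from $T^0$ lying entirely in $J_i$ is at least
\[
	\frac{\eps}{p}|J'_i| - |J'_{i-1}| \;\geq\; \frac{\eps}{p}|J_i| - \frac{O(1)}{\ell}|J'_i| \;\geq\; \frac{\eps}{p}|J_i| - \frac{O(1)}{\ell}\cdot O(|J_i|),
\]
and with $\ell = 4p/\eps$ chosen large (and $p$ a sufficiently large polynomial in $k\log(1/\eps)$, absorbing the $O(1)$'s) this is at least $(\eps/2p)|J_i|$.

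The main obstacle is bookkeeping the exact containment and size relations: one must verify that $I_h$ really does lie to the left of $J_1$ (not straddling it), that $|J'_i|$ and $|J_i|$ differ only by a bounded multiplicative factor, and that the "number of lost patterns $\leq |J'_{i-1}|$" step is charged correctly against a disjoint family — none of this is deep, but it requires care with the geometric parameters $2^{t^*}/(12k)$, $\ell^{-(k-1-i)}$, $\ell^{-(k-2-i)}$ appearing in \eqref{eq:intervals}, and with the case split $t^* \geq t + 3k\log\ell$ that guarantees enough room. I expect the constants $12k$, the exponent $3k\log\ell$, and the polynomial degree of $P$ were all chosen precisely so that this inequality closes with room to spare.
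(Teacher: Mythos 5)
Your proposal is correct and follows essentially the same route as the paper: define $J_i'$ as the interval spanning $I_h$ together with $J_i$, invoke the robustness clause of Theorem~\ref{thm:main-structure-2} on $J_i'$, and charge each lost pattern to a distinct index of $J_i'\setminus J_i$, whose length is at most $2^t$ plus the left sliver and hence at most $\frac{2}{\ell}|J_i'|=\frac{\eps}{2p}|J_i'|$ by the case hypothesis $t^*-t\ge 3k\log\ell$. The paper bounds $|J_i'\setminus J_i|$ directly rather than via your $|J'_{i-1}|$, but this is only a difference of bookkeeping, and the constants close exactly as you anticipated (without needing to further enlarge $p$).
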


		\begin{proof}
			Let $J_i'$ be the interval given by
			\[ 
				J_i' = I_h \cup \left[ \bx, \bx + \frac{2^{t^*}}{12k} \cdot \ell^{-(k-2-i) } \right].  
			\]
			Observe that 
			\[
				|J_i' \setminus J_i| \le 
				2^{t} + \frac{2^{t^*}}{12k} \cdot \ell^{-(k-1-i)} \le 
				\frac{2^{t^*}}{6k} \cdot \ell^{-(k-1-i)} = 
				\frac{2}{\ell} \cdot \frac{2^{t^*}}{12k} \cdot \ell^{-(k-2-i)} \ge 
				\frac{2}{\ell} \cdot |J_i'| =
				\frac{\eps}{2p} \cdot |J_i'|,
			\]
			where for the second inequality we used the bound $t^* - t \ge 3 k \log \ell \ge  \log(12) + \log k + (k-2)\log \ell$, and that $\ell = 4p / \eps$.
			
			We have by Theorem~\ref{thm:main-structure-2}, that $J_i'$ contains at least $(\eps/p) |J_i'|$ disjoint $(12\dots k)$-patterns in $f$. Hence, the number of disjoint $(12\dots k)$-patterns in $J_i$ is at least:
			\begin{align*}
				\frac{\eps}{p} \cdot |J_i' | - |J_i' \setminus J_i| 
				\geq \frac{\eps}{2p} \cdot |J_i'| 
				\geq \frac{\eps}{2p} \cdot |J_i|,
			\end{align*}
			as required.
		\end{proof}
		
		By Lemma~\ref{lem:sample-within}, Line~\ref{main-alg:line2dii} outputs a $(12\dots k)$-pattern in $f$ with probability at least $1-\delta /2$. By a union bound, we obtain the desired result.

		Suppose, on the other hand, that $t^* \le t + 3k\log \ell$. In this case, as $2^{t-1} \le |I_h| \le 2^{t^*}$ (by choice of $t$), for one of the values of $t'$ considered in Line~\ref{main-alg:line2di} we have $2^{t'-1} \le |I_h| < 2^{t'}$; fix this $t'$. The interval $J_{t'}$, defined in Line~\ref{main-alg:line2di}, hence satisfies $|I_h| / |J_{t'}| \geq 1/4$. As a result, and since $I_h \subset J_{t'}$ (because $t \le t^*$), the function $g \colon J \to \R \cup \{*\}$ contains an interval-tuple pair $(I_h, T_h)$ which is $(c, 1/(6k), \alpha)$-splittable. By Lemma~\ref{lem:good-split}, once Line~\ref{main-alg:line2di} considers $c_0 = c$, the sub-routine $\FindGoodSplit_k(g, \eps / (c_2 k^5), \delta/2, c, 1/4)$ will output a $(12\dots k)$-pattern of $g_{t'}$ (which is also a $(12\dots k)$-pattern of $f$) with probability at least $1 - \delta / 2$. Hence, we obtain the result by a union bound.
\end{proof}

\subsection{Query Complexity and Running Time}

	It remains to prove Lemma~\ref{lem:alg-complexity}, estimating the number of queries made by \FindMon, as well as its total running time.

	\begin{proof}[Proof of Lemma~\ref{lem:alg-complexity}]
		We first claim that the running time is bounded by an expression of the form $\poly(k)$ times the query complexity of \FindMon, where the $\poly(\cdot)$ term is of constant degree. Indeed, the only costly operations (in terms of running time) other than querying that our algorithm conducts involve:
		\begin{itemize}
			\item Determining whether the value of $f$ at a certain point is $\ast$ or not; to this end, note that for any $f$ we need to evaluate along the way, $f(x)$ is marked by $\ast$ if and only if it does not belong to some interval in $\R$, whose endpoints are determined by the recursive calls that led to it. Since the recursive depth is at most $k$, this means that the complexity of the above operation is $O(k)$.\footnote{In fact, this complexity can be improved to $O(1)$ if, instead of working with functions of the form $f \colon I \to \R \cup \{\ast\}$, we would have worked with function $f \colon I \to \R$ and received the interval of ``non-$*$ values'' as an input to the recursive call.}
			\item Given an ordered set of queried elements $Q$ at some point along the algorithm, determining whether these elements contain a $c$-increasing subsequence for $c \leq k$ (this action is taken, e.g., in the last part of $\FindMon$). This operation can be implemented in time $O(c|Q|)$. Now, the number of such operations that each queried element participates in is at most $k$,\footnote{More precisely, for the purpose of this section, if an element is queried $t > 1$ times by our algorithm then we think of it as contributing $t$ to the total query complexity (since our goal is to prove upper bounds here -- not lower bounds -- this perspective is clearly valid); and in this case, the number of operations as above in which it participates is at most $k \cdot t$.} and a simple double counting argument implies that the running time of these operations altogether is at most $O(k^2)$ times the total query complexity.
		\end{itemize}

		It remains now to prove the bound on the query complexity. 
		Recall that $P \colon \R \to \R$ is a fixed polynomial; write $p_{k,\eps} = P(k\log(1/\eps))$. We fix $n$, which upper bounds the length of all intervals defining input functions. Let $\Phi(k, \eps, \delta)$ be the maximum number of queries made by $\FindMon_k(f, \eps, \delta)$. Let 
		\begin{align*}
			\Phi^{(1)}(k, \eps, \delta) &= 
				\begin{array}{l} 
					\text{query complexity of $\SampleSuffix_{k}(f, \eps, \delta)$.}
				\end{array}  \\
			\Phi^{(2)}(k, \eps, \delta) &= 
				\begin{array}{l} 
					\text{query complexity of $\FindWithinInterval_{k}(f, \eps, \delta, x, y, \calJ)$,}\\
					\text{where $|\calJ| = k-2$.} 
				\end{array} \\
			\Phi^{(3)}(k, \eps, \delta, \xi) &= 
				\begin{array}{l} 
					\text{query complexity of $\FindGoodSplit_k(f, \eps, \delta, c, \xi)$,}\\
					\text{where $c \in [k-1]$.} 
				\end{array}
		\end{align*}
		By Lemma~\ref{lem:sample-suffix}, as well as an inspection of Figure~\ref{fig:find-within-interval} and Figure~\ref{fig:find-good-split}, we have:
		\begin{align*}
			\Phi^{(1)}(k, \eps, \delta) &\leq p_{k,\eps} \cdot \frac{1}{\eps} \cdot \log(1/\delta) \cdot \log n \\
			\Phi^{(2)}(k, \eps, \delta) &\leq 2k \cdot \Phi(k-1, \eps/2, \delta/(2k)) \\
			\Phi^{(3)}(k, \eps, \delta, \xi) &\leq \frac{c_1 k \log(1/\delta)}{\eps \xi^2} \cdot \Phi(k-1, \eps \xi/3, \delta/3).
		\end{align*}
		Lastly, inspecting Figure~\ref{fig:find-mon}, we have
		\begin{align*}
			\Phi(k, \eps, \delta) &\leq \Phi^{(1)}(k, \eps/p_{k,\eps}, \delta) + \\
			&\qquad\qquad c_1 \cdot p_{k,\eps} \cdot \frac{k^5}{\eps^2} \cdot \log(1/\delta) \cdot
			\left(1 + \log n + \Phi^{(2)}\left(k, \eps / (2p_{k,\eps}), \delta/2\right) + 
			\Phi^{(3)}\left(k, \eps/(c_2 k^5), \delta/2, 1/4\right)\right) \\
			&\leq q_{k,\eps} \cdot \frac{1}{\eps^2} \cdot \log(1/\delta) \cdot \log n \, + q_{k,\eps} \cdot \frac{1}{\eps^3} \cdot (\log(1/\delta))^2 \cdot  \Phi(k-1, \eps/q_{k,\eps}, \delta/(3k)) \\
			&\leq \left(k^k \cdot (\log(1/\eps))^k \cdot \frac{1}{\eps} \cdot \log (1/\delta) \right)^{O(k)} \cdot \log n,
		\end{align*}
		where $Q \colon \R \to \R$ is a fixed polynomial of large enough (constant) degree and $q_{k,\eps} = Q(k \log(1/\eps))$. For the last line we use that $\Phi^{(2)}(1, \cdot, \cdot) = \Phi^{(2)}(2, \cdot, \cdot) = \Phi^{(3)}(1,\cdot,\cdot ,\cdot) = \Phi^{(3)}(2, \cdot,\cdot,\cdot)=0$, and we note that the parameter replacing $\eps$ never falls below $\eps / (k\log(1/\eps))^{O(k)}$, so the factor of $\log n$ at each iteration is at most $\left(k^k (\log(1/\eps))^k (1/\eps) \log(1/\delta)\right)^{O(k)}$.
	\end{proof}

\ignore{
\section{Old manuscript}
	\begin{theorem} \label{thm:new-structure}
		Let $k \in \N$, $\eps > 0$, let $f : [n] \to \R$ be a function, and let $T_0$ be a set of at least $\eps n$ disjoint $(12 \ldots k)$-patterns. Then one of the following holds, where $C, c > 0$ are constants. Then there exists $\alpha \ge \eps / \poly(k)$ such that one of the following conditions holds.
		\begin{itemize}
			\item
				Either there exists a set $H \subseteq [n]$ of indices that start $(\alpha, C k \alpha)$-growing suffixes, satisfying 
				\[
					\alpha |H| \ge \frac{\eps}{\poly(k, \log(1/\eps))}.
				\]
			\item
				Or, there exists $l \in [k-1]$, a set $T$, with $E(T) \subseteq E(T_0)$, of disjoint $(12 \ldots k)$-patterns, and a $(l, 1/(6k), \alpha)$-splittable collection of $T$, of disjoint interval-tuple pairs $(I_1, T_1), \ldots, (I_s, T_s)$, such that
				\[
					\alpha \sum_{h=1}^s |I_h| \ge \delta n,
				\]
				where $\delta \ge \eps / \poly(k, \log(1/\eps))$.
				Moreover, for every interval $J$ that contains an interval $I_h$, with $h \in [s]$, $J$ contains at least $c \cdot \delta |J|$ disjoint $(12 \ldots k)$-patterns whose elements are in $E(T_0)$.
		\end{itemize}
	\end{theorem}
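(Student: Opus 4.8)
The plan is to derive Theorem~\ref{thm:new-structure} as an immediate consequence of the basic dichotomy, Theorem~\ref{thm:main-structure}, combined with the covering lemma, Lemma~\ref{lem:intervals}; this is essentially the robust structural theorem already recorded (for general intervals $I\subseteq[n]$) as Theorem~\ref{thm:main-structure-2}, so the same argument applies. The only content beyond Theorem~\ref{thm:main-structure} is the ``moreover'' clause asserting robustness of the splittable-intervals alternative, and Lemma~\ref{lem:intervals} is tailored precisely to deliver it.

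First I would apply Theorem~\ref{thm:main-structure} to $f$ together with the set $T_0$, obtaining parameters $\alpha^* \ge \Omega(\eps/k^5)$ and $p^* \le \poly(k\log(1/\eps))$. If the growing-suffixes alternative holds, then taking $\alpha = \alpha^*$ puts us in the first alternative of Theorem~\ref{thm:new-structure} with $\alpha|H| \ge (\eps/p^*)n$, and we are done. Otherwise the splittable-intervals alternative holds: there are $l \in [k-1]$, a set $T$ with $E(T)\subseteq E(T_0)$, and a $(l, 1/(6k), \alpha^*)$-splittable collection $(I_1,T_1),\dots,(I_s,T_s)$ with $\alpha^*\sum_{h}|I_h| \ge |T_0|/p^* \ge (\eps/p^*)n$; equivalently $\sum_h |I_h| \ge \rho n$ with $\rho = \eps/(p^*\alpha^*)$.

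Next I would feed $I_1,\dots,I_s$ into Lemma~\ref{lem:intervals} with ambient interval $[n]$, obtaining a subset $G \subseteq [s]$ with $\sum_{h\in G}|I_h| \ge (\rho/4)n$ such that every interval $J \subseteq [n]$ containing some $I_h$ with $h\in G$ satisfies $\sum_{h\in[s]\colon I_h\subset J}|I_h| \ge (\rho/4)|J|$. I would then discard all pairs with index outside $G$, keeping the sub-collection $\{(I_h,T_h) : h\in G\}$ and the set $T' = \bigcup_{h\in G}T_h$, which still satisfies $E(T')\subseteq E(T)\subseteq E(T_0)$ and for which each retained pair is $(l,1/(6k),\alpha^*)$-splittable; moreover $\alpha^*\sum_{h\in G}|I_h| \ge \alpha^*(\rho/4)n = (\eps/(4p^*))n$. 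Thus, setting $p := 4p^*$, $\delta := \eps/p = \eps/(4p^*)$, and $\alpha := \alpha^*$, the splittable-intervals bound of the second alternative holds and $\delta \ge \eps/\poly(k\log(1/\eps))$. For the ``moreover'' clause, fix any interval $J$ containing some $I_h$ with $h\in G$. Each such retained $I_h \subset J$ contains at least $\alpha^*|I_h|$ disjoint $(12\dots k)$-patterns, all of whose elements lie in $E(T)\subseteq E(T_0)$; as the intervals $\{I_h : h\in G,\, I_h\subset J\}$ are pairwise disjoint, $J$ contains at least $\alpha^*\sum_{h\in G\colon I_h\subset J}|I_h| \ge \alpha^*(\rho/4)|J| = (\eps/(4p^*))|J| = \delta|J|$ disjoint such patterns, i.e.\ at least $c\delta|J|$ with $c=1$.

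I do not anticipate a genuine obstacle: the substance of the argument lives inside Lemma~\ref{lem:intervals}, whose proof --- the minimal-cover argument showing that every point of $[n]$ lies in at most three of the selected ``bad'' intervals --- is the only nontrivial step, and it is already available. The one thing to monitor is the bookkeeping of parameters: one must verify that passing to $G$ loses only a constant fraction of the total interval length (exactly the $\alpha/4$ versus $\alpha$ comparison in Lemma~\ref{lem:intervals}), so that $\delta$ degrades by only a constant factor while $\alpha^*$ is untouched, keeping $\alpha \ge \Omega(\eps/k^5) \ge \eps/\poly(k)$.
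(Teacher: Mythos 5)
Your proposal is correct and matches the paper's own argument: the paper proves exactly this statement (recorded as Theorem~\ref{thm:main-structure-2}) by applying Theorem~\ref{thm:main-structure} and then Lemma~\ref{lem:intervals} to discard the non-robust intervals, with the same parameter bookkeeping ($p=4p^*$, $\alpha=\alpha^*$). One small nit: Lemma~\ref{lem:intervals} bounds $\sum_{h\in[s]\colon I_h\subset J}|I_h|$, not $\sum_{h\in G\colon I_h\subset J}|I_h|$, so in the ``moreover'' step you should count the patterns contributed by \emph{all} $I_h\subset J$ with $h\in[s]$ (each such $I_h$ still carries $\alpha^*|I_h|$ disjoint patterns with elements in $E(T)\subseteq E(T_0)$), which is what the paper does and yields the same bound.
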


	\begin{proof}
		We use Theorem 2.2 from the previous paper. If the first item there holds, we are done. So suppose that the second item holds. This means that there exist an integer $l \in [k-1]$, a set $T$, with $E(T') \subseteq E(T_0)$, of disjoint length-$k$ increasing subsequences, and a $(l, 1/(6k), \alpha)$-splittable collection $(I_1, T_1), \ldots, (I_s, T_s)$ of $T'$ satisfying
		\[
			\alpha \sum_{h=1}^s |I_h| \ge \delta n.	
		\]
		Let $\calI$ be the set of intervals $I \in \{I_1, \ldots, I_s\}$ for which there is an interval $J \supseteq I$ such that $\alpha \sum_{h \in [s]:\, I_h \subseteq J} |I_h| \le c \cdot \delta |J|$; for each such $I$ fix such an interval $J(I)$. 
		Let $\calJ$ be a minimal collection of intervals with the following two properties.
		\begin{itemize}
			\item
				for every $J \in \calJ$
				\[
					\alpha \sum_{h \in [s]: \, I_h \subseteq J} |I_h| \le c \cdot \delta |J|.
				\]
			\item
				every $I \in \calI$ is contained in some interval in $\calJ$.
		\end{itemize}
		Note that such a collection exists, because $\{J(I) : I \in \calI\}$ satisfies the above two properties (but need not be minimal). 
		\begin{claim}
			Every element $x \in [n]$ is contained in at most three intervals from $\calJ$.
		\end{claim}
		\begin{proof}
	
			Consider first an element $x$ which is covered by some interval $I \in \calI$. Let $J_{\ell}$ be an interval from $\calJ$ that contains $x$, and whose left-most element is furthest to the left among all intervals from $\calJ$ that contains $x$; pick $J_r$ similarly, for the right instead of the left; and let $J_m$ be an interval from $\calJ$ that contains $I$. We claim that $\calJ$ does not have any other intervals that contain $x$. Suppose, to the contrary, that $x \in J \in \calJ$ and $J \neq J_{\ell}, J_r, J_m$. Let $I' \in \calI$ and suppose that $I' \subseteq J$. If $I' = I$ then $I' \subseteq J_m$. If $I'$ lies to the left of $I$, then $I' \subseteq J_{\ell}$ (by the choice of $J_{\ell}$), and if it lies to the right of $I$ then $I' \subseteq J_r$. It follows that $\calJ \setminus \{J\}$ satisfies the above two items, contradicting the minimality of $\calJ$.

			Now, if $x$ is not contained in any interval of $\calI$, then we can show similarly that there are at most two intervals from $\calJ$ that contain $x$, by definiting $J_{\ell}$ and $J_r$ as above. 
		\end{proof}
		Let $U = \cup_{J \in \calJ} J$. Then, using the above claim,
		\[
			\alpha \sum_{I \in \calI} |I| 
			\le \alpha \sum_{J \in \calJ} \left(\sum_{h \in [s]: \, I_h \subseteq J} |I_h|\right)
			\le c \cdot \delta \sum_{J \in \calJ} |J|
			\le 3c \cdot \delta |U|
			\le 3c \cdot \delta n
			\le \frac{\delta n}{2}.
		\]
		Let $\calI' = \{I_h : h \in [s]\} \setminus \calI$. Then, because $\alpha \sum_{h=1}^s |I| \le \delta n / 2$, 
		\[
			\alpha \sum_{I \in \calI'} \ge \frac{\delta n}{2},
		\]
		and every interval $J$ contains at least $\alpha \sum_{h \in [s]: \, I_h \subseteq J}|I_h| \ge c \cdot \delta |J|$ disjoint $(12 \ldots k)$-patterns whose elements are in $E(T_0)$. It follows that the second item holds (with $\delta / 2$ in place of $\delta$).
	\end{proof}

	\begin{figure}[ht!]
		\begin{framed}
			\begin{minipage}{.98\textwidth}
				Subroutine $\FindMon(f, \eps, k, I, R, \mu)$.
				\vspace{.3cm}

				{\bf Input:} Query access to a function $f \colon [n] \to \R$,  parameters $\eps, \mu \in (0,1)$, an integer $k \in \N$, and intervals $R, I \subseteq \N$.
				\smallskip

				{\bf Output:} a sequence $i_1 < \ldots i_k$ with $f(i_1) < \ldots < f(i_k)$, or $\fail$.
				\smallskip
				\smallskip

				Let $\delta = \eps / \poly(k, \log(1/\eps))$, and let $c > 0$ be a small constant. Whenever a query for $f(x)$ returns a value which is not in $R$, we think of the value as $\ast$.

				\begin{enumerate}
					\item \label{round:one}
						Run $\SampleSuffix_k(f, \delta, \mu)$.
					\item \label{round:two}
						Sample $\Theta_{\mu}(1/\delta)$ elements $\bx \sim I$ uniformly at random. 

						If $k = 1$, output any element $x$ sampled here, if there exists such an element with $f(x) \neq \ast$; otherwise, output $\fail$.
					\item \label{round:three}
						For every element $x$ sampled in Round \ref{round:two}, if $f(x) \neq \ast$, and every $t \in [\log n]$, sample $\Theta_\mu(1/\eps)$ elements $\by \sim (x + 2^{t-1}, x + 2^t)$ uniformly at random.

						Denote by $S(x)$ the set of elements $y$ sampled in this round, for which $f(y) \neq \ast$ and $f(y) > f(a)$.

						If $k = 2$, and $S(x)$ is non-empty, output $(x, y)$ for any $y \in S(x)$. If $S(x)$ is empty for every $x$ sampled previously, return $\fail$.
					\item \label{round:four}
						For every $x$ as above, let $s(x) = \max S(x)$, let $\ell(x) = s(x) - x$, and let $t(x) = \floor{ \log (\ell(x))}$.
						Let $J_i(x) = (x + (c \cdot \delta / 2)^{k-1-i} \cdot \ell(x), x + (c \cdot \delta / 2)^{k-2-i} \cdot \ell(x)]$, for $i = 1, \ldots, k-2$.
						For $i = 1, \ldots, k-2$:
						\begin{itemize}
							\item
								Call $\FindMon(f,\, c \cdot \delta / 4,\, k-i,\, J_i(x),\, R \cap [f(y), \infty),\, \mu / (2 k))$. 
							\item
								Call $\FindMon(f,\, c \cdot \delta / 4,\, i+1,\, J_i(x),\, R \cap (-\infty, f(y)),\, \mu / (2 k))$.
						\end{itemize}
						Denote by $A_i(x)$ and $B_i(x)$ the events of success in the first and second calls above, respectively.

						\begin{itemize}
							\item
								If $A_1(x)$ holds, output $x$ followed by the sequence corresponding to the event $A_1(x)$.
							\item
								If $B_i(x)$ and $A_{i+1}(x)$ holds, for some $i \in [k-3]$,  output the sequence corresponding to $B_i(x)$ followed by the sequence corresponding to $A_{i+1}(x)$ to its right. 
							\item
								If $B_{k-2}(x)$ holds, output the sequence corresponding to $B_{k-2}(x)$ followed by $s(x)$.
						\end{itemize}
					\item \label{round:five}
						For every $x$ as before, and every $t \in [t(x) - \Theta(k \log(1/\delta)), t(x) + \Theta(k \log(1/\delta))]$
						\begin{itemize}
							\item
								Sample $\Theta_\mu(1)$ elements $\bz \sim (x, x + 2^t)$ uniformly at random.
							\item
								For every $z$ sampled in the previous step, sample $\Theta_\mu(1/\delta)$ elements $\bw \sim [x, z]$ uniformly at random.
							\item
								For every $z$ from two steps ago, and every $w$ sampled in the previous step, and every $l \in [k-1]$
								\begin{itemize}
									\item
										Call $\FindMon(f,\, \delta/4,\, l,\, (x, z),\, R \cap (-\infty, f(w)],\, \mu / (16))$.
									\item
										Call $\FindMon(f,\, \delta/4,\, k-l,\, (z, x + 2^t),\, R \cap (-\infty, f(w)],\, \mu / (16))$.
									\item
										If both calls are successful, return the output from the first call followed by the output from the second call.
								\end{itemize}
						\end{itemize}
					\item \label{round:six}
						If nothing was output in previous steps, output $\fail$.
				\end{enumerate}
			\end{minipage}
		\end{framed}\vspace{-0.2cm}
		\caption{Description of the $\FindMon$ subroutine.}
		\label{fig:find-monotone}
	\end{figure}

	\begin{theorem} \label{thm:algorithm-analysis}
		Let $f : [n] \to \R$ be a function, let $\eps, \mu \in (0,1)$, let $k \in \N$, and let $I \subseteq [n]$ and $R \subseteq \N$ be intervals. 
		Suppose that there is a set $T_0 \subseteq I^k$ of disjoint increasing subsequences, such that $f(x) \in R$ for every $x \in E(T_0)$ and $|T_0| \ge \eps |I|$. Then, with probability at least $1 - \mu$, \emph{$\FindMon(f, \eps, k, I, R, \mu)$} outputs a length-$k$ increasing subsequence $(i_1, \ldots, i_k)$ such that $f(i_j) \in R$ for every $j \in [k]$.
	\end{theorem}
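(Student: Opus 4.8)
The statement asks only for a probabilistic guarantee, and nothing needs to be said about correctness of the returned subsequence: since every query whose value falls outside $R$ is treated as $\ast$ (incomparable), any length-$k$ increasing subsequence the algorithm ever assembles automatically has all its values in $R$. So the goal is just the lower bound on the success probability, and I would prove it by induction on $k$, mirroring the proof of Theorem~\ref{thm:main-alg}. For $k=1$, the hypothesis $|T_0|\ge\eps|I|$ produces at least $\eps|I|$ indices of $I$ with $f$-value in $R$; since Round~2 draws $\Theta_\mu(1/\delta)$ uniform samples from $I$ and $\delta\le\eps$, one of them lands on such an index with probability $\ge 1-\mu$. For $k\ge 2$, assume the result for all smaller lengths and apply the robust structural dichotomy (Theorem~\ref{thm:main-structure-2}): it returns $\alpha\ge\eps/\poly(k)$ and either a growing-suffixes structure — in which case Round~1's call to $\SampleSuffix_k$ with density $\delta=\eps/\poly(k,\log(1/\eps))$ finds a length-$k$ pattern with probability $\ge 1-\mu$ by Lemma~\ref{lem:sample-suffix}, masking being harmless there since that routine is non-adaptive and the suffix is witnessed by indices of $E(T_0)\subseteq f^{-1}(R)$ — or a \emph{robust splittable intervals} structure, which is the interesting case.

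In that case let $(I_1,T_1),\dots,(I_s,T_s)$ be the $(\ell,1/(6k),\alpha)$-splittable collection, with $\ell\in[k-1]$, $\alpha\sum_h|I_h|\ge\delta|I|$, and the robustness property that every interval $J\subseteq I$ containing some $I_h$ contains $\ge c\delta|J|$ disjoint $(12\dots k)$-patterns with indices in $E(T_0)$. Since $\sum_h|T_h|\ge\alpha\sum_h|I_h|\ge\delta|I|$, Round~2 hits with probability $\ge 1-\mu/4$ an index $\bx$ that is the first entry of a pattern in some $T_h$; fix such an $h$. By splittability all $(\ell+1)$-entries of $T_h$ lie in $(\bx+|I_h|/(6k),\ \bx+|I_h|)$, so for one dyadic scale $t\le\lceil\log|I_h|\rceil$ the window $(\bx+2^{t-1},\bx+2^t)$ contains an $\Omega(\alpha/\log k)$-fraction of them, whence Round~3 hits with probability $\ge 1-\mu/4$ some $(\ell+1)$-entry $y^\ast$ of a pattern in $T_h$; in particular $f(y^\ast)\in R$, $f(y^\ast)>f(\bx)$, so $s(\bx)\ge y^\ast$ and $\ell(\bx)=s(\bx)-\bx\ge|I_h|/(6k)$. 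Since the routine runs both Round~4 and Round~5 on $\bx$, it remains to show that, according to whether $\ell(\bx)=O_{k,\delta}(|I_h|)$ (``fitting'') or $\ell(\bx)\gg|I_h|$ (``overshooting''), the relevant round succeeds with probability $\ge 1-\mu/2$ given the above.

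For the fitting case (Round~5): one of the scales $t'$ enumerated there has $I_h\subseteq(\bx,\bx+2^{t'})$ with $|I_h|/2^{t'}$ bounded below; taking as thresholds the $\tfrac13$- and $\tfrac23$-quantiles of the $f$-values at the $\ell$-th coordinate of $T_h^{(L)}$, a constant fraction of the $\Theta_\mu(1)$ samples $\bz$ land in the middle third of $I_h$ and a constant fraction of the $\Theta_\mu(1/\delta)$ samples $\bw$ have $f(\bw)$ between those thresholds; for such $(\bz,\bw)$ the interval $(\bx,\bz)$ restricted to values $<f(\bw)$ contains $\Omega(\delta|I_h|)$ disjoint $(12\dots\ell)$-patterns and $(\bz,\bx+2^{t'})$ restricted to values $\ge f(\bw)$ contains $\Omega(\delta|I_h|)$ disjoint $(12\dots(k-\ell))$-patterns, so the inductive hypothesis makes the two recursive calls on that triple succeed and concatenate with probability $\ge 1-\mu/8$. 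The one subtlety here is that the left and right recursive calls must receive the \emph{complementary} ranges $R\cap(-\infty,f(\bw))$ and $R\cap[f(\bw),\infty)$ — a single common range, as the displayed pseudocode uses, does not guarantee the outputs are comparable across $\bz$; this is precisely the correction already reflected in $\FindGoodSplit$.

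For the overshooting case (Round~4), which I expect to be the main obstacle, I would argue exactly as in the proof of Lemma~\ref{lem:sample-within}, leaning on the robustness property. The blocks $J_1(\bx),\dots,J_{k-2}(\bx)$ are a chain of geometrically increasing intervals to the right of $I_h$ inside $(\bx,s(\bx))$; with $J_i'$ the smallest interval containing $I_h\cup J_i(\bx)$, robustness gives $\ge c\delta|J_i'|$ disjoint patterns in $J_i'$, and — provided the geometric ratio defining the $J_i(\bx)$ is chosen larger than a suitable power $\poly(1/\delta)^{k}$ so that $|J_i'\setminus J_i(\bx)|\le\tfrac12 c\delta|J_i'|$ — each $J_i(\bx)$ itself carries $\Omega(\delta|J_i(\bx)|)$ disjoint $(12\dots k)$-patterns. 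Splitting each at its $(i+1)$-st entry by whether $f$ there is $<f(s(\bx))$ or $\ge f(s(\bx))$ classifies $J_i(\bx)$ as ``type-1'' (yielding many disjoint $(12\dots(i+1))$-patterns with values below $f(s(\bx))$) or ``type-2'' (yielding many disjoint $(12\dots(k-i))$-patterns with values at least $f(s(\bx))$); by the inductive hypothesis the matching recursive call of Round~4 succeeds on $J_i(\bx)$ with probability $\ge 1-\mu/(2k)$. Combining: if $J_1$ is type-2 prepend $\bx$ (using $f(\bx)<f(s(\bx))$); if $J_{k-2}$ is type-1 append $s(\bx)$; otherwise some adjacent pair $J_i$ (type-1), $J_{i+1}$ (type-2) exists and the two outputs concatenate; a union bound over the $\le 2(k-2)$ relevant calls gives probability $\ge 1-\mu/2$. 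Adding the three losses, Rounds~2--5 succeed with probability $\ge 1-\mu/4-\mu/4-\mu/2=1-\mu$, which completes the induction. The only parts demanding real care are fixing the constants and the geometric ratio in the overshooting blocks, and checking throughout that the value ranges threaded into recursive calls stay sub-intervals of $R$ so that every concatenation is a genuine length-$k$ increasing subsequence with values in $R$ — routine once the above scheme is in place.
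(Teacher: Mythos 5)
Your proposal follows essentially the same route as the paper's proof: induction on $k$, the growing-suffixes versus robust-splittable-intervals dichotomy, Round~2/3 hitting a $1$-entry and a well-placed higher entry, and then the same case split into fitting (quantile thresholds plus two complementary recursive calls) and overshooting (geometric blocks $J_i$, the robustness property to lower-bound their pattern content, and the type-1/type-2 classification with the three combination cases). Your observation about the range-splitting typo in Round~5's pseudocode is also correct and matches the fix already present in $\FindGoodSplit$.
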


	\begin{corollary}
		Let $f : [n] \to \R$ be a function which is $\eps$-far from $(12 \ldots k)$-free. Then the subroutine \emph{$\FindMon(f, \eps/k, k, [n], \R, 0.1)$} is an adaptive algorithm that finds a $(12 \ldots k)$-pattern, with probability at least $0.99$, and that makes at most $O( (k / \eps)^{O(k)} \cdot \log n)$ queries.
	\end{corollary}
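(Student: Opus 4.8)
The plan is to read off the corollary from \Cref{thm:algorithm-analysis}, which already encapsulates the structural dichotomy and the correctness of \FindMon, and to supplement it with a short, self-contained bound on the query count obtained by unrolling the recursion inside \FindMon.

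\textbf{Correctness.} First I would recall the observation from the introduction that a function $\eps$-far from $(12\dots k)$-free contains a collection $T_0$ of at least $\eps n/k$ pairwise-disjoint $(12\dots k)$-patterns: otherwise, iteratively extracting disjoint patterns and, at the end, overwriting the values on the extracted indices by copying each such index's nearest surviving neighbour to its left, produces a $(12\dots k)$-free function differing from $f$ on at most $k$ times the number of extracted patterns, hence on fewer than $\eps n$ indices. Now I would instantiate \Cref{thm:algorithm-analysis} with $I=[n]$, $R=\R$ (so that $f(x)\in R$ holds vacuously for $x\in E(T_0)$), the parameter $\eps/k$ in place of $\eps$ (so that $|T_0|\ge(\eps/k)|I|$), and a small constant failure parameter $\mu$. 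This yields that $\FindMon(f,\eps/k,k,[n],\R,\mu)$ outputs a length-$k$ increasing subsequence of $f$ — i.e.\ a $(12\dots k)$-pattern — with probability at least $1-\mu$; choosing $\mu$ a small enough constant (or running $O(1)$ independent copies and returning any pattern found) gives success probability at least $0.99$.

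\textbf{Query complexity.} Let $T(k,\eps,n)$ denote the worst-case number of queries made by $\FindMon(f,\eps,k,I,R,\mu)$ over intervals $I$ of length at most $n$, and set $\delta=\eps/\poly(k,\log(1/\eps))$ as in the algorithm. Reading off \Cref{fig:find-monotone}: Round~1 costs $O\!\left((\log n)/\delta\cdot\polylog(1/\delta)\right)$ queries by \Cref{lem:sample-suffix}; Round~2 costs $\Theta_\mu(1/\delta)$; Round~3 costs $\Theta_\mu(1/\delta)\cdot\log n\cdot\Theta_\mu(1/\eps)$; Rounds~4 and~5 make, respectively, $\Theta_\mu(1/\delta)$ and $\Theta_\mu(1/\delta)\cdot\Theta(k\log(1/\delta))$ recursive calls to \FindMon, each on an interval of length at most $n$, with first parameter $\Theta(\delta)$, second parameter a positive integer strictly smaller than $k$, and failure parameter at least $\mu/(16k)$. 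Collecting terms, $T$ obeys a recursion of the form
\[
 T(k,\eps,n)\;\le\;A_k(\eps)\cdot\log n\;+\;B_k(\eps)\cdot\max_{1\le k'<k}T\bigl(k',\eps',n\bigr),
\]
where $\eps'\ge\eps/\poly(k,\log(1/\eps))$, the failure parameter degrades by at most a factor $O(k)$ per level, and $A_k(\eps),B_k(\eps)\le(k/\eps)^{O(1)}$ (using $\log(1/\delta)\le\poly(\log(1/\eps),\log k)$ and absorbing the $1/\mu$-dependence, which over $k$ levels is at most $(O(k))^{O(k)}$, into the final bound). The base cases $k=1,2$ make no recursive calls and cost $O((1/\delta)\log n)\le(k/\eps)^{O(1)}\log n$. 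Since the second parameter strictly decreases, the recursion tree has depth at most $k$, so there are at most $B_k(\eps)^{k}$ leaves, each contributing a term $A_k(\eps'')\log n$ with $\eps''\ge\eps/\bigl(\poly(k,\log(1/\eps))\bigr)^{k}$, whence
\[
 T(k,\eps,n)\;\le\;k\cdot B_k(\eps)^{k}\cdot\max_{\eps''\ge\eps/(\poly(k,\log(1/\eps)))^{k}}A_k(\eps'')\cdot\log n\;\le\;(k/\eps)^{O(k)}\cdot\log n,
\]
and substituting $\eps\leftarrow\eps/k$ only changes the hidden constant in the exponent, giving the claimed $O\bigl((k/\eps)^{O(k)}\log n\bigr)$ bound.

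\textbf{Main obstacle.} The delicate part is the query-complexity bookkeeping, not the correctness (which is immediate from \Cref{thm:algorithm-analysis}). One must verify that each recursion level multiplies the query count by only a factor polynomial in $k$ and $1/\eps$ — which in turn requires checking that $\eps$ shrinks by at most a $\poly(k,\log(1/\eps))$ factor per level and that the $\log n$ term always carries a coefficient that is polynomial (not super-polynomial) in the current parameters — so that the depth-$k$ recursion produces $(k/\eps)^{O(k)}$ rather than, say, $(k/\eps)^{O(k^2)}$. Tracking the $\mu$-dependence carefully (it only enters through constant-factor repetitions and degrades mildly) is needed to confirm it is harmlessly absorbed.
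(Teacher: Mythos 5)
Your proposal is correct and follows the same route as the paper, whose entire proof is the one-line remark that the corollary follows from \Cref{thm:algorithm-analysis} together with ``analyzing the algorithm''; your instantiation with $T_0$ of size $\eps n/k$, $I=[n]$, $R=\R$, and your depth-$k$ unrolling of the recursion are exactly the omitted details. You also rightly notice (and patch) the mismatch between $\mu=0.1$ and the claimed $0.99$ success probability, which the paper's statement glosses over.
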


	\begin{proof}
		The corollary follows from \Cref{thm:algorithm-analysis}, and the query complexity follows from analyzing the algorithm. 
	\end{proof}

	\begin{proof}[Proof of \Cref{thm:algorithm-analysis}]
		The proof proceeds by induction on $k$. 

		For $k = 1$, by the assumptions there are at least $\eps |I|$ elements $x \in I$ for which $f(x) \in R$. Thus, Round \ref{round:two} will be successful with probability at least $1 - \mu$.

		For $k = 2$, if the first item in \Cref{thm:new-structure} holds, then Round \ref{round:one} will be successful with probability at least $1 - \mu$.
		\comment{I added a parameter $\mu$ to $\SampleSuffix$, so as to control the success probability of the subroutine (previously we just aimed for $.99$ success probability).}
		We may thus assume that the second item holds. Let $T$, $\alpha$, $(I_1, T_1), \ldots, (I_h, T_h)$ be as in the statement of \Cref{thm:new-structure}.
		For each $h \in [s]$, let $(L_h, M_h, R_h)$ be the partition of $I_h$ corresponding to the splittability definition. Let $\calL_h$ and $\calR_h$ be the set of first and second elements, respectively, of $(12)$-patterns in $T_h$. So $\calL_h \subseteq L_h$ and $\calR_h \subseteq R_h$, $|\calR_h| = |\calR_h| \ge \alpha |I_h|$, and $\sum_{h = 1}^s |\calL_h| \ge \delta n$. It follows that, with probability at least $1 - \mu / 2$, an element of $\cup_{h=1}^s \calL_h$ is sampled in Round \ref{round:two}. Suppose that, indeed, such an element was sampled; denote it by $x$, let $h \in [s]$ be such that $x \in \calR_h$, and let $t = \floor{\log |I_h|}$. Then, with probability at least $1 - \mu / 2$, an element of $\calR_h$ is sampled in Round \ref{round:three} (with $x$ and $t$). This means that, with probability at least $1 - \mu$, the algorithm outputs a $(12)$-pattern whose elements are in $f^{-1}(R)$.

		Now suppose that $k \ge 3$. Again, if the first item in \Cref{thm:new-structure} holds then Round \ref{round:one} will be successful with probability at least $1 - \mu$. Thus, we assume that the second item in \Cref{thm:new-structure} holds. Let $l$, $T$, $\alpha$, $(I_1, T_1), \ldots, (I_h, T_h)$ be as in the statement of the theorem; denote the corresponding partition of $I_h$ by $(L_h, M_h, R_h)$. Let $\calL_h$ be the set of first elements of subsequences in $T_h$, and let $\calR_h$ be the set of last elements of subsequences in $T_h$. Then $\calL_h \subseteq L_h$, $\calR_h \subseteq R_h$, and $\sum_{h = 1}^s |\calL_h| \ge \delta n$; let $\calL_h^-$ be the set of $|\calL_h|/2$ left-most elements of $\calL_h$, and define $\calL_h^+$, $\calR_h^-$ and $\calR_h^+$ similarly.

		With probability at least $1 - \mu / 2$, an element of $\cup_{h = 1}^s \calL_h^-$ is sampled at Round \ref{round:two}; suppose that this is the case and denote such an element by $x$, and let $h$ be such that $x \in \calL_h^-$. Then, an element $y$ of $\calR_h^+$ is sampled, with probability at least $1 - \mu/2$, during Round \ref{round:three}, when run with $x$ and $t = \floor{\log(y - x)}$. Let $S(x), s(x), t(x), \ell(x)$ be as in the algorithm. Clearly, $t(x) \ge t$; we consider two cases: $t(x) \ge  t + \Omega(k \log(1/\delta))$, or $t(x) \le t + O(k \log(1 / \delta))$.

		First, suppose that $t(x) \ge t + k \log(1/\delta)$.
		\begin{claim}
			$J_i(x)$ contains a set of at least $c \cdot \delta |J_i(x)| / 2$ disjoint $(12 \ldots k)$-patterns whose elements are in $f^{-1}(R)$.
		\end{claim}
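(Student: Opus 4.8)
The plan is to derive, directly from the robustness clause of Theorem~\ref{thm:new-structure}, that $J_i(x)$ is $\Omega(\delta)$-far from being $(12\ldots k)$-free, even though $J_i(x)$ itself need not contain the splittable interval $I_h$. First I would pass to a slightly larger interval that \emph{does} contain $I_h$. Recall $x$ is the first element of a $(12\ldots k)$-pattern in $T_h$, so $x \in L_h \subseteq I_h$, and the intervals $J_1(x), \ldots, J_{k-2}(x)$ all lie to the right of $x$; set $J_i'(x) = [\min I_h,\ x + (c\delta/2)^{k-2-i}\,\ell(x)]$, the minimal interval containing both $I_h$ and $J_i(x)$. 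By construction $I_h \subseteq J_i'(x)$: to see that $I_h$ lies entirely to the left of $J_i(x)$ one uses the hypothesis of this case, $t(x) \ge t + \Omega(k\log(1/\delta))$, which forces $|I_h| < 2^{t+1}$ to be tiny compared with $\ell(x) \ge 2^{t(x)}$. Theorem~\ref{thm:new-structure} now applies to $J_i'(x)$ and hands us a collection of at least $c\delta\,|J_i'(x)|$ pairwise-disjoint $(12\ldots k)$-patterns, all of whose elements lie in $E(T_0) \subseteq f^{-1}(R)$.

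Next I would cut this collection down to $J_i(x)$. Writing $P = [\min I_h,\ x + (c\delta/2)^{k-1-i}\,\ell(x)]$, we have $J_i(x) = J_i'(x)\setminus P$ with $P$ a prefix of $J_i'(x)$; as the patterns just produced are pairwise disjoint, removing the $|P|$ indices of $P$ can destroy at most $|P|$ of them, so $J_i(x)$ still contains at least $c\delta\,|J_i'(x)| - |P|$ disjoint $(12\ldots k)$-patterns with elements in $f^{-1}(R)$. It remains to check that $|P|$ is a small enough fraction of $|J_i'(x)|$ (hence also of $|J_i(x)|$, which it exceeds by at most $|P|$). We have $|P| \le |I_h| + (c\delta/2)^{k-1-i}\,\ell(x)$; the second summand equals $(c\delta/2)\cdot(c\delta/2)^{k-2-i}\,\ell(x)$, a fixed fraction of $|J_i'(x)| \ge (c\delta/2)^{k-2-i}\,\ell(x)$, while the first summand $|I_h| < 2^{t+1}$ is, again by $t(x) \ge t + \Omega(k\log(1/\delta))$ (so that $\ell(x) \ge 2^{t(x)} \ge 2^t\,\delta^{-\Omega(k)}$), at most $\delta^{\Omega(k)}\,\ell(x)$, negligible next to $(c\delta/2)^{k-2-i}\,\ell(x)$ once the hidden constant in the threshold is taken large enough. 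Hence $|P| \le (c\delta/2 + o(c\delta))\,|J_i'(x)|$, so $J_i(x)$ retains at least $(c\delta/2 - o(c\delta))\,|J_i'(x)| \ge (c\delta/2)\,|J_i(x)|$ disjoint $(12\ldots k)$-patterns with elements in $f^{-1}(R)$, which is the claim. Any remaining slack is absorbed by shrinking the absolute constant $c$ --- equivalently, the geometric ratio defining the $J_i(x)$'s --- by a constant factor (replacing $c\delta/2$ by, say, $c\delta/4$ in their definition, mirroring the choice of ratio $\eps/(4p)$ against pattern density $\eps/p$ used for the analogous claim in the main algorithm).

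The main obstacle is exactly this last piece of bookkeeping: one must ensure the discarded prefix $P$ occupies a strictly smaller fraction of $J_i'(x)$ than the pattern density $c\delta$ supplied by robustness, and the only input that makes this true is the scale separation $t(x) \gg t$ characterizing the ``far'' (overshooting) case --- which is precisely why the analysis branches at $t(x) \ge t + \Omega(k\log(1/\delta))$. Everything else one needs here --- that $x \in I_h$, that $J_1(x), \ldots, J_{k-2}(x)$ are disjoint and appear in order to the right of $x$, and that $E(T_0) \subseteq f^{-1}(R)$ --- is immediate from the setup, so the argument is short modulo getting these constants right.
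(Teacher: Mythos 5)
Your proof is correct and follows essentially the same route as the paper's: both pass to the minimal interval $J_i'(x)$ containing $I_h$ and $J_i(x)$, invoke the robustness clause to get $c\delta|J_i'(x)|$ disjoint patterns there, discard the at most $|P|$ patterns meeting the prefix $P = J_i'(x)\setminus J_i(x)$, and bound $|P|$ by a $(c\delta/2)$-fraction of $|J_i'(x)|$ using the geometric ratio plus the scale separation $t(x) - t \ge \Omega(k\log(1/\delta))$ to make $|I_h|$ negligible. The only difference is notational (the paper calls your $P$ by the name $J_{i-1}'(x)$), and both arguments absorb the same constant-factor slack into the choice of $c$.
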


		\begin{proof}
			Let $J_i'(x) = I_h \cup [x, x + (c \cdot \delta / 4)^{k - 2 - i} \cdot \ell(x)]$ for $i \in [s]$, and let $J_0'(x) = I_h$.
			\begin{align*}
				|J_1'(x)| = (c \cdot \delta / 2)^{k-3} \cdot \ell(x) 
				&\ge (c \cdot \delta / 2)^k \cdot 2^{\floor{\log (\ell(x))}} \\
				&= (c \cdot \delta / 2)^k \cdot 2^{t(x)} \\
				&\ge (c \cdot \delta / 2)^k  \cdot 2^t \cdot 2^{\Omega(k \log(1/\delta))} \\
				&\ge (c \cdot \delta / 2)^k  \cdot 2^{\Omega(k \log(1/\delta))} \cdot \frac{|I_h|}{6k} 
				\ge \frac{c \cdot \delta}{2} \cdot |I_h| 
				= \frac{c \cdot \delta}{2} \cdot |J_0'(x)|. 
			\end{align*}
			By assumption, $J_i'(x)$ contains $c \cdot \delta |J_i'(x)|$ disjoint $(12 \ldots k)$-patterns whose elements are in $E(T_0) \subseteq f^{-1}(R)$, at most $|J_{i-1}'(x)|$ of them intersect $J_{i-1}'(x)$. It follows that $J_i(x) = J_i'(x) \setminus J_{i-1}'(x)$ contains at least $c \cdot \delta |J_i'(x)| - |J_{i-1}'(x)| \ge c \cdot \delta |J_i(x)| / 2$ disjoint $(12 \ldots k)$-patterns whose elements are in $f^{-1}(R)$.
		\end{proof}

		Let $\calT_i$ be a set of disjoint $(12 \ldots k)$-patterns whose elements are in $J_i(x) \cap f^{-1}(R)$. We form sets $\calA_i(x)$ and $\calB_i(x)$ as follows. For each $(12 \ldots k)$-pattern $(a_1, \ldots, a_k)$, if $a_{i+1} \ge f(s(x))$, put $(a_{i+1}, \ldots, a_k)$ in $\calA_i(x)$; otherwise, put $(a_1, \ldots, a_{i+1})$ in $\calB_i(x)$. 
		Since $|\calA_i(x)| + |\calB_i(x)| = |\calT_i(x)|$, either $|\calA_i(x)| \ge c \cdot \delta |J_i(x)| / 4$ or $|\calB_i(x)| \ge c \cdot \delta |J_i(x)| / 4$. By induction, if the former holds then, then with probability at least $1 - \mu / (2k)$, the subroutine $\FindMon(f,\, c \cdot \delta / 4,\, k-i,\, J_i(x),\, R \cap [f(y), \infty),\, \mu / (2 k))$ outputs a length-$(k-i)$ increasing subsequence whose elements are in $J_i(x) \cap f^{-1}(R)$. Again, by induction, with probability at least $1 - \mu / (2k)$, the subroutine $\FindMon(f,\, c \cdot \delta / 4,\, i+1,\, J_i(x),\, R \cap (-\infty, f(y)),\, \mu / (2 k))$ outputs a length-$(i+1)$ increasing subsequence with elements in $J_i(x) \cap f^{-1}(R)$. Thus, regardless, either $A_i(x)$ or $B_i(x)$ holds, with probability $1 - \mu / (2k)$. Thus, with probability at least $1 - \mu / 2$, either $A_i(x)$ or $B_i(x)$ holds for each $i \in [k-2]$. It follows that, with probability at least $1 - \mu$, Round \ref{round:four} outputs a length-$k$ subsequence, which can easily be seen to be an increasing subsequence, whose elements are in $f^{-1}(R) \cap I$.

		Next, we consider the case where $t(x) \le t + O(k \log(1/\delta))$.
		Since $|M_h| \ge |I_h| / (6k) \ge (y-x) / (6k)$, with probability at least $1 - \mu / 4$, an element $z \in M_h$ is sampled in the first part of Round \ref{round:five}.
		Let $T_h^{(L)}$ be the set of prefixes of length $l$ of subsequences in $T_h$, and let $T_h^{(R)}$ be the set of suffixes of length $k-l$ of subsequences in $T_h$.
		Let $T_h^{(L), +}$ be the set of subsequences in $T_h^{(L)}$ whose first element is in $\calR_h^+$, and let $T_h^{(R), -}$ be the set of subsequences in $T_h^{(R)}$ whose last element is in $\calR_h^-$. Let $U$ be the set of $l$-th elements of sequences in $T_h^{(L), +}$; note that $\left|T_h^{(L), +}\right|, \left|T_h^{(L), +}\right| = |T_h| / 2 \ge \delta |I_h| / 2$. Let $v$ be the $(|U|/2)$-th larget element of the multiset $\{f(u) : u \in U\}$. Let $U^+$ be the set of elements $u \in U$ with $f(u) \ge f(v)$, and let $U^-$ be the set of elements $u \in U$ with $f(u) \le f(v)$; note that $|U^+|, |U^-| \ge |U| / 2 \ge \delta |I_h| / 4$.
		Thus, with probability at least $1 - \mu / 8$, an element $w$ from $U^+$ is sampled in the second part of Round \ref{round:five}.
		Note that the interval $(x, z)$ contains $\calL_h^+$, and the collection of intervals in $T_h^{(L), +}$ whose $l$-th element is in $U^-$ is a collection of at least $\delta |I_h| / 4 \ge \delta (z - x) / 4$ disjoint $(12 \ldots l)$-patterns whose elements lie in $(x, z) \cap f^{-1}(R)$.  By induction, we find that $\FindMon(f,\, \delta/4,\, l,\, (x, z),\, R \cap (-\infty, f(w)],\, \mu / (16))$ outputs a length-$l$ increasing subsequence whose elements are in $(x, z) \cap f^{-1}(R) \cap f^{-1}((-\infty, f(y)])$.  Similarly, $T_h^{(R),-}$ is a collection of at least $\delta |I_h| / 2 \ge \delta(x + 2^t - z) / 4$ length-$(k-l)$ increasing subsequences whose elements are in $(z, x + 2^t) \cap f^{-1}(R) \cap f^{-1}((f(y), \infty))$.
		Again, by induction, $\FindMon(f,\, \delta/4,\, k-l,\, (z, x + 2^t),\, R \cap (-\infty, f(v)],\, \mu / (16))$ yields a length-$(k-l)$ increasing subsequence whose elements are in $(z, x + 2^t) \cap f^{-1}(R) \cap f^{-1}((f(y), \infty))$. Putting everything together, it follows that, with probability at least $1 - \mu$, a length-$k$ increasing subsequence is output at the last part of Round \ref{round:five}. 
	\end{proof}
}
			
\bibliographystyle{alpha}
\bibliography{waingarten}				

\end{document}